\documentclass[11pt]{article}
\def\withcolors{0}
\def\withnotes{0}
 
\usepackage[T1]{fontenc}
\usepackage[utf8]{inputenc}

\usepackage{lmodern}
\usepackage{xspace}

\usepackage{amsfonts,amsmath,amssymb, amsthm, mathtools}
\usepackage{thm-restate}
\usepackage{dsfont} 
\usepackage{algorithmicx,algpseudocode,algorithm}
\usepackage[usenames,dvipsnames,table]{xcolor}

\usepackage{mfirstuc}

\usepackage[backref,colorlinks,citecolor=blue,bookmarks=true,linktocpage]{hyperref}
\usepackage{aliascnt}
\usepackage[numbered]{bookmark}

\usepackage{fullpage}

\usepackage[shortlabels]{enumitem}
  \setitemize{noitemsep,topsep=3pt,parsep=2pt,partopsep=2pt}
  \setenumerate{itemsep=1pt,topsep=2pt,parsep=2pt,partopsep=2pt}
  \setdescription{itemsep=1pt}
  
\ifnum\withnotes=1
  \usepackage[colorinlistoftodos,textsize=scriptsize]{todonotes}
\fi

\usepackage{mleftright} 
\usepackage[mathscr]{eucal}
 
\makeatletter{}\makeatletter
\@ifundefined{theorem}{    \theoremstyle{plain}
    \newtheorem{theorem}{Theorem}[section]
  	\newaliascnt{coro}{theorem}
  	  
  	\aliascntresetthe{coro}
  	\newaliascnt{lem}{theorem}
  		\newtheorem{lemma}[lem]{Lemma}
  	\aliascntresetthe{lem}
  	\newaliascnt{clm}{theorem}
  		\newtheorem{claim}[clm]{Claim}
	\aliascntresetthe{clm}
	\newaliascnt{fact}{theorem}
 	 	\newtheorem{fact}[theorem]{Fact}
	\aliascntresetthe{fact}
  	
  \newaliascnt{prop}{theorem}
  		\newtheorem{proposition}[prop]{Proposition}
	\aliascntresetthe{prop}
	\newaliascnt{conj}{theorem}
  		
	\aliascntresetthe{conj}
 	 
  \theoremstyle{remark}   	\newtheorem{remark}[theorem]{Remark}

  \theoremstyle{definition}   	\newaliascnt{defn}{theorem}
 		 \newtheorem{definition}[defn]{Definition}
 	 \aliascntresetthe{defn}
 	   \theoremstyle{plain} }{}
\makeatother

\newenvironment{proofof}[1]{\noindent{\bf Proof of {#1}:~~}}{\hfill\(\QED\)}
\def\FullBox{\hbox{\vrule width 8pt height 8pt depth 0pt}}
\newcommand{\QED}{\;\;\;\FullBox}
\newenvironment{Proof}{\noindent{\bf Proof:~~}}{\hfill\QED}
\providecommand{\email}[1]{\href{mailto:#1}{\nolinkurl{#1}\xspace}}

\ifnum\withcolors=1
  \newcommand{\new}[1]{{\color{red} {#1}}}   \newcommand{\newer}[1]{{\color{blue} {#1}}}   \newcommand{\newest}[1]{{\color{orange} {#1}}}   \newcommand{\newerest}[1]{{\color{blue!10!black!40!green} {#1}}}   \newcommand{\acolor}[1]{{\color{cyan}#1}}   \newcommand{\ccolor}[1]{{\color{RubineRed}#1}}   \newcommand{\dcolor}[1]{{\color{purple}#1}}   \newcommand{\ecolor}[1]{{\color{ForestGreen}#1}}   \newcommand{\tcolor}[1]{{\color{orange}#1}}   \newcommand{\verynew}[1]{{\color{purple} #1}}
 \else
  \newcommand{\new}[1]{{{#1}}}
  \newcommand{\newer}[1]{{{#1}}}
  \newcommand{\newest}[1]{{{#1}}}
  \newcommand{\newerest}[1]{{{#1}}}
  \newcommand{\acolor}[1]{{#1}}
  \newcommand{\ccolor}[1]{{#1}}
  \newcommand{\dcolor}[1]{{#1}}
  \newcommand{\ecolor}[1]{{#1}}
  \newcommand{\tcolor}[1]{{#1}}
  \newcommand{\verynew}[1]{{#1}}
\fi

\ifnum\withnotes=1
\else
  \newcommand{\anote}[1]{}
  \newcommand{\cnote}[1]{}
  \newcommand{\dnote}[1]{}
  \newcommand{\enote}[1]{}
  \newcommand{\tnote}[1]{}
  \newcommand{\strikeout}[1]{}

\fi
\newcommand{\ignore}[1]{\leavevmode\unskip}      
\newcommand\numberthis{\addtocounter{equation}{1}\tag{\theequation}}
\newcommand{\eps}{\ensuremath{\epsilon}\xspace}
\newcommand{\Algo}{\ensuremath{\mathcal{A}}\xspace} \newcommand{\Tester}{\ensuremath{\mathcal{T}}\xspace}  \newcommand{\property}{\ensuremath{\mathcal{P}}\xspace}  \newcommand{\eqdef}{\stackrel{\rm def}{=}}

\newcommand{\accept}{\textsf{accept}\xspace}
\newcommand{\fail}{\textsf{fail}\xspace}
\newcommand{\reject}{\textsf{reject}\xspace}

\newcommand{\rhosubsetinf}{\shortexpect_{S \sim_\rho J}\left[ \infl{\phi_{\mI}(S)}\right]}
\newcommand{\ellsubsets}{\binom{[\ell]}{\ell-k}}

\newcommand{\bigO}[1]{{O\mleft( #1 \mright)}}

\newcommand{\bigOmega}[1]{{\Omega\mleft( #1 \mright)}}

\newcommand{\tildeO}[1]{\tilde{O}\mleft( #1 \mright)}
\newcommand{\tildeTheta}[1]{\operatorname{\tilde{\Theta}}\mleft( #1 \mright)}

\providecommand{\poly}{\operatorname*{poly}}

\newcommand{\infl}[2][f]{{\mathbf{Inf}_{#1}(#2)}}
\newcommand{\infldeg}[3][f]{{\mathbf{Inf}_{#1}^{#2}(#3)}}

\newcommand{\setOfSuchThat}[2]{ \left\{\; #1 \;\colon\; #2\; \right\} }
\newcommand{\indicSet}[1]{\mathds{1}_{#1}}
\newcommand{\indic}[1]{\indicSet{\left\{#1\right\}}}

\newcommand{\dist}[2]{\operatorname{dist}\mleft({#1, #2}\mright)}
\newcommand{\distiso}[2]{\operatorname{distiso}\mleft({#1, #2}\mright)}

\newcommand{\perm}[1][n]{\mathcal{S}_{#1}}

\newcommand\restr[2]{{  \left.\kern-\nulldelimiterspace   #1   \vphantom{\big|}   \right|_{#2}   }}

\newcommand{\proba}{\Pr}
\newcommand{\probaOf}[1]{\proba\!\left[\, #1\, \right]}
\newcommand{\probaCond}[2]{\proba\!\left[\, #1 \;\middle\vert\; #2\, \right]}
\newcommand{\probaDistrOf}[2]{\proba_{#1}\left[\, #2\, \right]}

\newcommand{\expect}[1]{\mathbf{E}\!\left[#1\right]}
\newcommand{\expectCond}[2]{\mathbf{E}\!\left[\, #1 \;\middle\vert\; #2\, \right]}
\newcommand{\shortexpect}{\mathbf{E}}

\newcommand{\exxp}[1]{\underset{#1}{\mathbf{E}}}
\newcommand{\uniform}{\ensuremath{\mathcal{U}}}

\newcommand{\binomial}[2]{\ensuremath{\operatorname{Bin}\!\left( #1, #2 \right)}}

\newcommand{\norm}[1]{\lVert#1{\rVert}}

\newcommand{\normtwo}[1]{{\norm{#1}}_2}
\newcommand{\norminf}[1]{{\norm{#1}}_\infty}
\newcommand{\abs}[1]{\left\lvert #1 \right\rvert}
\newcommand{\dabs}[1]{\lvert #1 \rvert}
\newcommand{\dotprod}[2]{ \left\langle #1,\xspace #2 \right\rangle } 			 			
 			
\newcommand{\clg}[1]{\left\lceil #1 \right\rceil}
\newcommand{\flr}[1]{\left\lfloor #1 \right\rfloor}

\newcommand{\R}{\ensuremath{\mathbb{R}}\xspace}

\newcommand{\N}{\ensuremath{\mathbb{N}}\xspace}

\newcommand{\pdfsamp}{dual\xspace}
\newcommand{\cdfsamp}{cumulative dual\xspace}
\newcommand{\Pdfsamp}{\expandafter\capitalisewords\expandafter{\pdfsamp}}
\newcommand{\Cdfsamp}{\expandafter\capitalisewords\expandafter{\cdfsamp}}

\newcommand{\lp}[1][1]{\ell_{#1}}

\newcommand{\D}{\ensuremath{D}}

\newcommand{\bool}{\{-1,1\}}
\newcommand{\pmbool}{\{\pm 1\}}
\newcommand{\junta}[1][k]{\mathcal{J}_{#1}}

\newcommand{\core}[1]{\ensuremath{\operatorname{\textsf{core}}_{#1}}}
\newcommand{\closejunta}[2][k]{\ensuremath{{#2}_{#1}}}
\newcommand{\kclose}{\ensuremath{k^{\ast}}}

\newcommand{\lovasz}[2][g]{\mathcal{L}_{#1}(#2)}
\newcommand{\oracle}{\mathcal{O}^{\pm}}
\newcommand*{\mL}{\mathcal{L}}

\newcommand{\oraclequery}[1]{\mathcal{O}_{#1}}

\newcommand{\mI}{\mathcal{I}}

\newcommand{\mS}{\mathcal{S}}

\makeatletter

\newcommand{\Rom}[1]{\expandafter\@slowromancap\romannumeral #1@}

\makeatother

\newcommand{\BE}{\begin{enumerate}} \newcommand{\EE}{\end{enumerate}}
\newcommand{\BI}{\begin{itemize}} \newcommand{\EI}{\end{itemize}}
\newcommand{\BT}{\begin{thm}} \newcommand{\ET}{\end{thm}}
 \newtheorem{thm}{Theorem}[section]   
\newtheorem{lem}[thm]{Lemma} \newcommand{\BL}{\begin{lemma}} \newcommand{\EL}{\end{lemma}}

\newtheorem{clm}[thm]{Claim}
\newcommand{\BCM}{\begin{clm}} \newcommand{\ECM}{\end{clm}}

\newtheorem{techcor}[thm]{Corollary}
\newcommand{\BCo}{\begin{techcor}} \newcommand{\ECo}{\end{techcor}}

\newtheorem{Conc}[thm]{Conclusion}
\newcommand{\BCONC}{\begin{Conc}} \newcommand{\ECONC}{\end{Conc}}

\newtheorem{Obs}[thm]{Observation}
\newcommand{\BOBS}{\begin{Obs}} \newcommand{\EOBS}{\end{Obs}}

\newtheorem{Exmp}[thm]{Example}
\newcommand{\BEXM}{\begin{Exmp}} \newcommand{\EXMP}{\end{Exmp}}

\newtheorem{cor} [thm] {Corollary}      \newcommand{\BC}{\begin{cor}} \newcommand{\EC}{\end{cor}}
\newtheorem{prop}[thm]{Proposition}     \newcommand{\BP}{\begin{prop}} \newcommand {\EP}{\end{prop}}

\newcommand{\subsetcoll}[2]{ \binom{#1}{#2} }

\def\FullBox{\hbox{\vrule width 8pt height 8pt depth 0pt}}

\newcommand{\BPF}{\begin{Proof}} \newcommand {\EPF}{\end{Proof}}
\newcommand{\BPFOF}{\begin{proofof}} \newcommand {\EPFOF}{\end{proofof}}

\newcommand{\BD}{\begin{definition}} \newcommand{\ED}{\end{definition}}

\makeatletter
\newcommand\ackname{Acknowledgements}
\if@titlepage
  \newenvironment{acknowledgements}{      \titlepage
      \null\vfil
      \@beginparpenalty\@lowpenalty
      \begin{center}        \bfseries \ackname
        \@endparpenalty\@M
      \end{center}}     {\par\vfil\null\endtitlepage}
\else
  
\fi
\makeatother

\def\authornameeb{Eric Blais}
\def\authoraffieb{University of Waterloo. Email: \email{eric.blais@uwaterloo.ca}. Research supported by NSERC Discovery grant.}
\def\authornamecc{Cl\'ement L. Canonne}
\def\authorafficc{Columbia University. Email: \email{ccanonne@cs.columbia.edu}. Research supported by NSF CCF-1115703 and NSF CCF-1319788.}
\def\authornamete{Talya Eden}
\def\authoraffite{School of EE, Tel Aviv University. Email: \email{talyaa01@gmail.com}. This research was partially supported by the Israel Science Foundation grant No. 671/13 and by a grant from the Blavatnik fund. }
\def\authornameal{Amit Levi}
\def\authoraffial{University of Waterloo. Email: \email{amit.levi@uwaterloo.ca}. Research supported by NSERC Discovery grant and the David R. Cheriton Graduate Scholarship.}
\def\authornamedr{Dana Ron}
\def\authoraffidr{School of EE, Tel Aviv University. Email: \email{danaron@post.tau.ac.il}. This research was partially supported by the Israel Science Foundation grant No. 671/13 and by a grant from the Blavatnik fund. }

\title{Tolerant Junta Testing and the Connection to Submodular Optimization and Function Isomorphism}
	\date{\today}

\author{
	\ecolor{\authornameeb}\thanks{\authoraffieb}
	\and \ccolor{\authornamecc}\thanks{\authorafficc}
	\and \tcolor{\authornamete}\thanks{\authoraffite}
	\and \acolor{\authornameal}\thanks{\authoraffial}
	\and \dcolor{\authornamedr}\thanks{\authoraffidr}
}

\begin{document}
	
	\maketitle

	\pagenumbering{gobble}	
	\begin{abstract}
		\makeatletter{}A function $f\colon \bool^n \to \bool$ is a  $k$-junta if it depends on at most $k$ of its variables. We consider the problem of \emph{tolerant} testing of $k$-juntas, where the testing algorithm must accept any function that is $\eps$-\emph{close} to some $k$-junta and reject any function that is $\eps'$-far from every $k'$-junta for some $\eps'= O(\eps)$ and $k' = O(k)$.

Our first result is an algorithm that solves this problem with query complexity polynomial in $k$ and $1/\eps$. This result is obtained via a new polynomial-time approximation algorithm for  \emph{submodular function minimization} (SFM) under large cardinality constraints, which holds even when only given an approximate oracle access to the function.

Our second result considers the case where $k'=k$. We show how to obtain a smooth tradeoff between the amount of tolerance and the query complexity in this setting. Specifically, we design an algorithm that given $\rho\in(0,1)$ accepts any function that is $\frac{\eps\rho}{16}$-close to some $k$-junta and rejects any function that is $\eps$-far from every $k$-junta. The query complexity of the algorithm is  $O\big( \frac{k\log k}{\eps\rho(1-\rho)^k} \big)$.

Finally, we show how to apply the second result to the problem of tolerant isomorphism testing between two unknown Boolean functions $f$ and $g$. We give an algorithm for this problem whose query complexity  only depends on the (unknown) smallest $k$ such that either $f$ or $g$ is \emph{close} to being a $k$-junta.
 
	\end{abstract}
	
	\ifnum\withnotes=1
	\clearpage
	\listoftodos
	\hfill
	\tableofcontents
	\clearpage
	\fi
	
	\clearpage

	\pagenumbering{arabic}
	\section{Introduction}\label{sec:intro}
	\makeatletter{}A function $f\colon \bool^n\to \bool$ is a \emph{$k$-junta} if it depends on at most $k$ of its variables.
Juntas are a central object of study in the analysis of Boolean functions, in particular
since they are good approximators for many classes of (more complex) Boolean functions. 
In the context of learning, the study of juntas was introduced by Blum et al.~\cite{blum1994relevant,blum1997selection} to model the problem of learning in the presence of irrelevant attributes. Since then, juntas have been extensively studied both in
computational learning theory (e.g.,~\cite{mossel2003learning,valiant2015finding})
and in applied machine learning (e.g.,~\cite{john2010elements}).

Juntas have also been studied within the framework of property testing. Here the task is to design a randomized algorithm that, given query access to a function $f$, accepts if $f$ is a $k$-junta and rejects if $f$ is $\eps$-far from every $k$-junta (i.e., $f$ must be modified in \tcolor{more than} an $\eps$-fraction of its values in order to be made a $k$-junta). The algorithm should succeed with high constant probability, and should
perform as few queries  as possible. The problem of testing $k$-juntas was first addressed by Fischer et.~\cite{FKRSS:04}.
They designed an algorithm that queries the function on a number of inputs  polynomial in $k$, and \emph{independent of $n$}. A series of subsequent works essentially settled the optimal query complexity for this  problem, establishing that\footnote{\tcolor{We use the notation $\tilde{\Theta}$  to hide polylogarithmic dependencies on the argument, i.e. for expressions of the form $\Theta(f \log^c f$ ) (for some absolute constant $c$).}} $\tildeTheta{k/\eps}$ queries are both necessary and sufficient~\cite{blais2008improved,blais2009testing,chockler2004lower,ServedioTW:15}.

The standard setting of property testing, however, is somewhat brittle, in that a testing algorithm is only guaranteed to accept functions that \emph{exactly} satisfy the property. But what if one wishes to accept functions that are \emph{close} to the desired property? To address this question, Parnas, Ron, and Rubinfeld introduced in~\cite{parnas2006tolerant} a natural generalization of property testing, where the algorithm is required to be \emph{tolerant}. Namely, a \emph{tolerant property testing algorithm} is required to accept any function that is \emph{close} to the property, and, as in the standard model, to reject any function that is far from the property.\footnote{Ideally, a tolerant testing algorithm should work for any given tolerance parameter $\eps'  < \eps$ (that is, accept functions that are $\eps'$-close to having the property), and have complexity that depends on $\eps-\eps'$. However, in some cases the relation between $\eps'$ and $\eps$ may be more restricted (e.g., $\eps' = \eps/c$ for a constant $c$). A closely related notion considered in~\cite{ parnas2006tolerant} is that of \emph{distance approximation} where the goal is to obtain an estimate of the distance that the object has to a property.}

As observed in~\cite{parnas2006tolerant},  any standard testing algorithm
whose queries are uniformly (but not necessarily independently) distributed,
is inherently tolerant to some extent. However, for many problems, strengthening the tolerance requires applying different methods and devising new algorithms (see e.g.,~\cite{GR:05,parnas2006tolerant,FN:07,ACCL,KS:09,MR:09,fattal2010approximating,CGR:13,BMR:16}).
Furthermore, there are some properties that have standard testers with sublinear query complexity, but for which any tolerant tester must perform a linear number of queries~\cite{FF:06, tell2016note}.

\sloppy
The problem of tolerant testing of juntas was previously considered by Diakonikolas et al.~\cite{diakonikolas2007testing}.
They applied the aforementioned observation from~\cite{parnas2006tolerant} and showed that one of the junta testers from~\cite{FKRSS:04} actually accepts functions that are
\dcolor{$\poly(\eps,1/k)$}-close
to $k$-juntas. Chakraborty et al.~\cite{CFGM:12}  observed that the analysis of the (standard) junta tester of Blais~\cite{blais2009testing}  implicitly implies  an  $\exp(k/\eps)$-query complexity tolerant tester which  accepts  functions that are $\eps/C$-close to some $k$-junta (for some constant $C > 1$) and rejects functions that are $\eps$-far from every $k$-junta.

\subsection{Our results}
In this work, we study the question of tolerant testing of juntas from two different angles, and obtain two algorithms with different (and incomparable) guarantees. Further, we show how to leverage one of these algorithms to get a tester for isomorphism between Boolean functions with ``instance-\tcolor{adaptive}'' (defined below) query complexity. The first of our results is a $\poly(k,1/\eps)$-query algorithm, which accepts functions that are close to $k$-juntas and rejects functions that are far from every $4k$-junta.

\begin{restatable}{theorem}{testingparameterized}
	\label{theo:tol:testing:juntas:relaxed}
	There exists an algorithm that, given query access to a function $f\colon\bool^n \to \bool$
	and parameters $k\geq 1$ and $\eps \in (0,1)$, satisfies the following.
	\begin{itemize}
		\item If $f$ is \verynew{$\eps/16$}-close to some $k$-junta, then the algorithm accepts
		with high constant probability.
		\item If $f$ is $\eps$-far from every $4k$-junta, then the algorithm rejects
		with high constant probability.
	\end{itemize}
		The query complexity of the algorithm is
	$
	\poly(k,\frac{1}{\eps})
	$ .
\end{restatable}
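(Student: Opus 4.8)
The plan is to recast the hypothesis ``$f$ is close to a $k$-junta'' as an instance of \emph{submodular function minimization under a large cardinality constraint}, solve it approximately with the SFM subroutine advertised in the introduction, and decide by estimating the objective value of the returned solution.

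\textbf{Step 1: the influence set function.} For $T\subseteq[n]$ let $\infl[f]{T}=\probaOf{f(x)\neq f(y)}$, where $x$ is uniform and $y$ agrees with $x$ outside $T$ and is fresh and uniform on $T$. Two elementary facts drive the reduction. (i) \emph{Sandwich.} If $\mathrm{dist}(f,J)$ is the distance from $f$ to the nearest function depending only on the variables of $J$, then $\mathrm{dist}(f,J)\le\infl[f]{[n]\setminus J}\le 2\,\mathrm{dist}(f,J)$: writing $p$ for the bias of $f$ under a uniformly random restriction of the variables of $J$, one has $\mathrm{dist}(f,J)=\expect{\min(p,1-p)}$ and $\infl[f]{[n]\setminus J}=\expect{2p(1-p)}$, while $\min(q,1-q)\le 2q(1-q)\le 2\min(q,1-q)$ on $[0,1]$. (ii) \emph{Structure.} A one-line Fourier computation gives $\infl[f]{T}=\tfrac12\bigl(1-\sum_{R\cap T=\emptyset}\hat f(R)^2\bigr)$; since $T\mapsto\indic{R\cap T=\emptyset}$ has non-negative discrete mixed second differences, $T\mapsto\sum_{R\cap T=\emptyset}\hat f(R)^2$ is supermodular, so $\infl[f]{\cdot}$ is a monotone non-decreasing \emph{submodular} set function, which moreover can be estimated to additive error $\pm\eps/100$ with confidence $1-\delta$ from $\bigO{\log(1/\delta)/\eps^2}$ queries by sampling the pair $(x,y)$. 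Combining (i) with the hypotheses: if $f$ is $\eps/16$-close to a $k$-junta then $\min\{\infl[f]{T}:|T|\ge n-k\}\le\eps/8$; if $f$ is $\eps$-far from every $4k$-junta then $\infl[f]{T}\ge\mathrm{dist}(f,[n]\setminus T)>\eps$ for every $T$ with $|T|\ge n-4k$.

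\textbf{Step 2: shrinking the ground set.} To make the query complexity independent of $n$, apply the classical junta-testing device of randomly hashing $[n]$ into $\ell=\poly(k/\eps)$ blocks $I_1,\dots,I_\ell$ and passing to the coarsened function $G(B)=\infl[f]{\bigcup_{i\in B}I_i}$ on ground set $[\ell]$, which is again monotone non-decreasing, submodular, and estimable with $\poly(1/\eps)$ queries. With high probability the hashing isolates the (at most $k$) relevant coordinates in distinct blocks, so the completeness case becomes ``$\min\{G(B):|B|\ge\ell-k\}\le\eps/8$'' and the soundness case becomes ``$G(B)>\eps$ for every $B$ with $|B|\ge\ell-4k$'' --- the latter via the usual relevant-coordinate/core analysis (a function that, on the complement of few blocks, is close to a function supported on those blocks is itself close to a junta of comparably small arity once the blocks isolate single coordinates), possibly after $\bigO{\log(1/\eps)}$ rounds of refining the selected blocks. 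We have thus reduced to: minimize a monotone submodular function on a ground set of size $\poly(k/\eps)$ under a large-cardinality (lower-bound) constraint, given only an approximate value oracle.

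\textbf{Step 3: approximate SFM and the decision.} Run the approximation algorithm for submodular minimization under a large cardinality constraint with approximate oracle access (the main technical contribution flagged in the introduction) on $G$, with the relaxed budget $|B|\ge\ell-4k$; after $\poly(k,1/\eps)$ oracle calls it returns a set $\tilde B$ with $|\tilde B|\ge\ell-4k$ and $G(\tilde B)\le c_1\cdot\min\{G(B):|B|\ge\ell-k\}+c_2\eps$ for absolute constants $c_1,c_2$ with $c_1/8+c_2<1$ (which is exactly what lets the theorem afford the closeness parameter $\eps/16$). Finally estimate $G(\tilde B)$ to additive error $\pm\eps/16$ and \accept iff the estimate lies below a threshold $\tau$ with $c_1\eps/8+c_2\eps<\tau<\eps$. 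In the completeness case the minimum is $\le\eps/8$, so $G(\tilde B)\le c_1\eps/8+c_2\eps<\tau$ and we accept; in the soundness case $G(\tilde B)>\eps>\tau$ and we reject. A union bound over the randomness of the hashing and of the $\poly(k,1/\eps)$ noisy oracle calls controls the overall failure probability, and since each oracle call costs $\poly(1/\eps)$ queries the total query complexity is $\poly(k,1/\eps)$. (The constants $16$ and $4$ are inessential and the slack above can be tightened.)

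\textbf{The main obstacle.} Everything outside Step~3 is routine: Step~1 is a two-line computation and Step~2 is the classical junta-tester reduction. The crux is the SFM subroutine. Cardinality-constrained submodular \emph{minimization} is NP-hard in general --- passing to the complementary supermodular maximization (the Fourier weight captured by the retained coordinates) exposes densest-subgraph-type obstructions --- so an honest constant-factor guarantee is possible only because one is allowed to relax the constraint from $n-k$ to $n-4k$, and the algorithm must convert that slack into accuracy; it must moreover tolerate additive $O(\eps)$ error in every oracle answer, which rules out the exact combinatorial SFM machinery and forces a robust greedy- or continuous-relaxation-based design and analysis. Producing this subroutine, and verifying that its cardinality-relaxation and noisy-oracle losses compose cleanly with the factor-$2$ sandwich of Step~1, is where essentially all of the work lies.
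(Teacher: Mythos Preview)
Your outline matches the paper's architecture exactly: pass to the submodular set function $h(J)=\infl[f]{\phi_{\mI}(J)}$ on a random partition $\mI$, and solve a bi-criteria cardinality-constrained submodular minimization with only an approximate value oracle. You are also right that the SFM subroutine is where essentially all the work lies. Two corrections to the ``routine'' part: the paper takes $\ell=\Theta(k^2)$ parts (no dependence on $\eps$), and no ``refinement rounds'' are needed. Soundness under the partition comes in one shot from a lemma of Blais (\autoref{lemma:random:partition:soundness} here): a random partition into $24(k')^2$ parts preserves $\eps$-farness from $\junta[k']$ up to a factor $2$ with probability $5/6$. Completeness holds for \emph{any} partition (\autoref{lemma:any:partition:completeness}). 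Your ``core analysis'' detour is unnecessary.

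What you did not guess is how the paper handles the constraint, and it is simpler than a bespoke bicriteria approximation algorithm. The trick is a \emph{linear penalty}: set $h'(J)=h(J)-\tfrac{\eps}{k}\abs{J}$ and minimize $h'$ \emph{unconstrained} over $2^{[\ell]}$. The penalty makes large sets attractive, and a short case analysis (\autoref{thm:ASMC:correct}) shows that thresholding the unconstrained minimum of $h'$ against $(1-\tfrac{\ell-k}{k})\eps+\xi$ gives precisely the decision you want: accept if some $\abs{J}\ge\ell-k$ has $h(J)\le\eps$, reject if every $\abs{J}\ge\ell-2(1+\xi/\eps)k$ has $h(J)>2(\eps+\xi)$; taking $\xi=\eps$ produces the $4k$ and the factor $16$. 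Since $h'$ is still submodular (submodular plus modular), unconstrained minimization is done via the Lov\'asz extension and the cutting-plane method of Lee--Sidford--Wong~\cite{LSW:15}, with their separation oracle adapted (\autoref{lem-SepOrCost}) to tolerate additive noise in the value queries. So your ``continuous relaxation'' instinct was right; ``greedy'' plays no role, and the guarantee is a threshold decision rather than the multiplicative-plus-additive bound $c_1\cdot\min+c_2\eps$ you posited.
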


The algorithm referred to in the theorem can be seen as a relaxed version of a tolerant testing algorithm. Namely, the algorithm rejects functions that are $\eps$-far from every $4k$-junta rather than $\eps$-far from every $k$-junta. Similar relaxations have been considered both in the standard testing model~(e.g.,~\cite{parnas2002testing,kearns1998testing,kothari2014testing}) and in the tolerant testing model~\cite{parnas2002testing}.

We next study the question of tolerant testing without the above relaxation. That is, when the tester is required to reject functions that are $\eps$-far from being a $k$-junta. We obtain a \emph{smooth tradeoff} between the amount of tolerance and the query complexity. In particular, this tradeoff allows one to recover, as special cases, both the results of Fischer et al.~\cite{FKRSS:04} and (an improvement of) Chakraborty et al.~\cite{CFGM:12}.

\begin{restatable}{theorem}{testingtradeoff}
	\label{theo:tol:testing:juntas:tradeoff}
	There exists an algorithm that, given query access to a function $f\colon\bool^n \to \bool$
	and parameters $k\geq 1$, $\eps \in (0,1)$ and \new{$\rho\in(0,1)$}, satisfies the following.
	\begin{itemize}
		\item If $f$ is $\rho\eps/16$-close to some $k$-junta, then the algorithm accepts
		with high constant probability.
		\item If $f$ is $\eps$-far from every $k$-junta, then the algorithm rejects
		with high constant probability.
	\end{itemize}
		The query
	complexity of the algorithm is
	$
	\bigO{ \frac{k\log k}{\eps\rho(1-\rho)^k} }
	$.
\end{restatable}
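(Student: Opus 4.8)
The plan is to run a block-based junta tester in the spirit of Fischer et al.~\cite{FKRSS:04}, made tolerant by probing random $\rho$-fractions of the blocks, so that the amount of corruption the tester absorbs and its query complexity are both controlled by $\rho$. \textbf{Setup.} Randomly partition $[n]$ into $\ell=\poly(k)$ parts $\mathcal I=(I_1,\dots,I_\ell)$, and for $S\subseteq[\ell]$ let $\infl{S}=\Pr_{x,y}[f(x)\neq f(z)]$, where $z$ follows $y$ on $\bigcup_{j\in S}I_j$ and $x$ elsewhere; this set function is monotone and subadditive, and the expected influence of a $\rho$-random subset of any ground set $G$ is at least $\rho\cdot\infl{G}$ (a noise-operator-type inequality). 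Two facts drive the analysis. (i) If $f$ is $\delta$-close to a $k$-junta $g$ whose relevant coordinates lie in a set $R$ of at most $k$ blocks, then $\infl{S}\le 2\delta$ for every $S$ disjoint from $R$, since along the hybrid path $g$ is constant while each endpoint disagrees with $f$ with probability $\delta$. (ii) If $f$ is $\eps$-far from every $k$-junta, then with probability at least $9/10$ over $\mathcal I$ (a lemma in the style of~\cite{FKRSS:04}) one has $\infl{[\ell]\setminus J}\ge\eps/2$ for \emph{every} $J$ with $|J|\le k$: a random $\poly(k)$-partition cannot pack the $\eps$ worth of ``non-juntahood'' into $k$ blocks.

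\textbf{Algorithm.} Set $\delta=\rho\eps/16$ and fix a threshold $\tau$ with $2\delta<\tau<\rho\eps/2$. The tester keeps a set $J$ of certified relevant blocks, initially empty, and runs $m=\Theta(\tfrac{k\log k}{\eps\rho(1-\rho)^k})$ rounds. In each round it draws a $\rho$-random $W\subseteq[\ell]$ and a random hybrid test against $W$; if this flags influence in $W$ it localizes a new relevant block — via an inner subroutine that only ever compares empirical influences to $\tau$, so that the $2\delta<\tau$ noise floor cannot mislead it — and adds it to $J$. It \reject{}s the moment $|J|>k$ and \accept{}s if it finishes all $m$ rounds with $|J|\le k$. \emph{Yes case:} by (i), every block-set disjoint from $R$ has influence $\le 2\delta<\tau$, so once the $O(m)$ empirical quantities are all accurate (a Chernoff/union bound over the rounds) the localization subroutine never certifies a block outside $R$; hence $|J|\le|R|\le k$ throughout and the tester accepts. \emph{No case:} by (ii), while $|J|\le k$ the blocks outside $J$ carry influence $\ge\eps/2$, so a round whose $W$ misses $J$ — probability $(1-\rho)^{|J|}\ge(1-\rho)^k$ — has $\mathbb E_W[\infl{W}]\ge\rho\eps/2>\tau$ and makes progress with probability $\Omega(\eps\rho(1-\rho)^k)$; a coupon-collector bound over the at most $k$ additions needed before $|J|$ exceeds $k$ shows that $m=\Theta(\tfrac{k\log k}{\eps\rho(1-\rho)^k})$ rounds force a rejection with high probability, and the total query count is $O(\tfrac{k\log k}{\eps\rho(1-\rho)^k})$.

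\textbf{Main obstacle.} The delicate part is the tolerant analysis of the localization step together with the quantitative form of the structural lemma (ii): one must choose $\tau$ and $\ell$ so that the $\rho\eps/16$ allowed corruptions can never fabricate a spurious $(k{+}1)$-st influential block in the yes case, while the $\Omega(\eps)$ influence that provably survives outside any $k$ blocks in the no case is still visible through a $\rho$-subsample — which is tightest exactly when that residual influence is \emph{diffuse}, spread at density $\approx\eps/\ell$ over all $\ell$ blocks rather than concentrated on a few heavy ones. Balancing $\ell$, the noise floor, and the ``$\rho$-subset retains a $\rho$-fraction of the influence'' estimate so that both the concentrated and the diffuse regimes go through is the technical heart; the $(1-\rho)^k$, $1/\rho$, and $\log k$ factors then fall out of the elementary accounting above.
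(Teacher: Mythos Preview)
Your approach is genuinely different from the paper's, and it has a real gap precisely at the point you flag as ``the technical heart.'' The paper does \emph{not} build a growing set $J$ of certified blocks via detect-then-localize rounds. Instead it draws $m=\Theta\bigl(\frac{k\log k}{\eps\rho(1-\rho)^k}\bigr)$ independent $\rho$-random subsets $S^1,\dots,S^m\subseteq[\ell]$, does a single hybrid query for each, and then for \emph{every} $J\in\binom{[\ell]}{\ell-k}$ simultaneously estimates $\shortexpect_{S\sim_\rho J}[\infl{\phi_\mI(S)}]$ using only those $S^i$ that happened to land inside $J$ (there are $\approx m(1-\rho)^k$ such samples for each $J$, which is enough to beat a union bound over all $\binom{\ell}{k}=k^{O(k)}$ choices). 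It accepts iff some $J$ has small estimated $\rho$-subset influence. The two-sided bound $\frac{\rho}{3}\infl{\phi_\mI(J)}\le \shortexpect_{S\sim_\rho J}[\infl{\phi_\mI(S)}]\le \infl{\phi_\mI(J)}$ (proved via Baranyai's hypergraph factorization theorem) then separates the two cases. There is no localization step at all, and no iteration.

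The gap in your plan is that the detect-then-localize loop cannot simultaneously handle the yes case and the diffuse no case with the stated tolerance. In the yes case, a single hybrid test on a random $W$ flags with probability at least $\Pr[f\neq g\text{ on }x\text{ or }y]$, which is $\Theta(\delta)=\Theta(\rho\eps)$ even when $W\cap R=\emptyset$; over $m=\Theta\bigl(\frac{k\log k}{\eps\rho(1-\rho)^k}\bigr)$ rounds this produces $\Theta\bigl(\frac{k\log k}{(1-\rho)^k}\bigr)\gg k$ spurious flags, and a witness-based binary search on such a flag lands on a block outside $R$. You try to fix this by having the localization ``compare empirical influences to $\tau$,'' i.e.\ refuse to certify a block unless its (or some containing set's) influence exceeds $\tau$. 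That does protect the yes case, but it breaks the no case exactly in the diffuse regime you mention: when the $\Omega(\eps)$ residual influence outside $J$ is spread evenly over $\ell-k=\Theta(k^2)$ blocks, every individual block has influence $O(\eps/k^2)\ll\tau=\Theta(\rho\eps)$, and a thresholded binary search halving the influence at each level needs the bottom-level threshold to be $\tau/\poly(k)$, which forces the yes-case tolerance down to $O(\rho\eps/\poly(k))$ rather than $\rho\eps/16$. No choice of $\ell$ and $\tau$ reconciles these two constraints within the iterative framework. The paper's sample-reuse trick sidesteps this entirely: it never needs to attribute influence to a single block, only to certify that \emph{some} union of $\ell-k$ blocks has small $\rho$-subset influence, which is exactly what the shared samples estimate directly.
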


Finally, we show how the above results can be applied to the problem of \emph{isomorphism testing}, which we recall next. Given query access to two unknown Boolean functions $f,g\colon\bool^n\to\bool$ and a parameter $\eps\in(0,1)$, one has to distinguish between {(i)}~$f$ is equal to $g$ up to some relabeling of the input variables; and {(ii)}~$\dist{f}{g\circ \pi} > \eps$ for every such relabeling $\pi$. The worst-case complexity of this task is known, with \tcolor{$\tilde{\Theta}\big({2^{\frac{n}{2}}}/\sqrt{\eps}\big)$} queries being necessary (up to the exact dependence on $\eps$) and sufficient~\cite{AB:10,ABCGM:13}.

However, is the exponential dependence on $n$ always necessary, or can we obtain better results for ``simple'' functions?
Ideally, we would like our testers to improve on this worst-case behavior, and instead have an \emph{instance-specific} query complexity, depending only on some intrinsic parameter of the functions $f,g$ to be tested. This is the direction we pursue here. Let $\kclose=\kclose(f,g,\gamma)$ be the smallest $k$ such that either $f$ or $g$ is $\gamma$-close to being a $k$-junta. We show that it is possible to achieve a query complexity only depending on this (unknown) parameter, namely of the form $\tilde{O}\big(2^{{\kclose(f,g,O(\eps))}/{2}}/\eps\big)$.\footnote{It is worth noting that this parameter can be much lower than the actual number of relevant variables for either functions; for instance, there exist functions depending on \emph{all} $n$ variables, yet that are $o(1)$-close to $O(1)$-juntas.} Moreover, our algorithm offers a much stronger guarantee: it allows \emph{tolerant} isomorphism testing.
\begin{restatable}[Tolerant isomorphism testing]{theorem}{isomorphism}
\label{theo:iso:testing:robust:kclose}
	There exists an algorithm that, given query access to two functions $f,g\colon\bool^n \to \bool$
	and parameter $\eps \in (0,1)$, satisfies the following, for some absolute constant $C \geq 1$.
	\begin{itemize}
		\item If $f$ and $g$ are $\frac{\eps}{C}$-close to isomorphic, then the algorithm accepts with high constant probability.
		\item If $f$ and $g$ are $\eps$-far from isomorphic, then the algorithm rejects with high constant probability.
	\end{itemize}
	The query	complexity of the algorithm is
	\tcolor{$\tilde{O}\big(2^{\frac{\kclose}{2}}/\eps\big)$} with high-probability (and \tcolor{$\tilde{O}\big(2^{\frac{n}{2}}/\eps\ \big)$} in the worst case), where $\kclose=\kclose(f,g,\frac{\eps}{C})$.
\end{restatable}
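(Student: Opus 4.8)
The overall plan is to reduce tolerant isomorphism testing of $f$ and $g$ to tolerant isomorphism testing of their \emph{cores} --- the (essentially unique) functions on at most $\kclose$ variables to which $f$ and $g$ are close --- and then apply the known worst-case isomorphism tester on $\hat k$-bit functions, which costs only $\tildeO{2^{\kclose/2}/\eps}$. First I would locate $\kclose$ by an incremental search: for $k=1,2,3,\dots$ run the tolerant junta tester of Theorem~\ref{theo:tol:testing:juntas:tradeoff} on \emph{both} $f$ and $g$, with tolerance parameter $\rho=1-2^{-1/2}$ (a constant, chosen precisely so that $(1-\rho)^k=2^{-k/2}$) and distance parameter $\eps_1=\eps/c_1$ for a large enough constant $c_1$, stopping at the first value $\hat k$ for which $f$ or $g$ is accepted (and at $\hat k=n$ in the degenerate case). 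Choosing $C$ large enough that $\eps/C\le\rho\eps_1/16$, the tester must accept at $k=\kclose$, since then one of $f,g$ is $\eps/C$-close to a $\kclose$-junta; hence $\hat k\le\kclose$. Conversely, because the tester rejects functions that are $\eps_1$-far from every $k$-junta, whenever it accepts we may conclude (on the high-probability event that no invocation errs --- a union bound over the $\le\kclose$ calls, amplifying them so the total failure probability is a convergent geometric series) that the accepted function, say $f$ by symmetry, is $\eps_1$-close to some $\hat k$-junta $f^*$. The cost so far is $\sum_{k\le\hat k}\bigO{\frac{k\log k\,2^{k/2}}{\eps}}=\tildeO{2^{\hat k/2}/\eps}=\tildeO{2^{\kclose/2}/\eps}$, a geometric sum dominated by its last term.

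Next I would certify that $g$ is also junta-like or else reject: run the tester of Theorem~\ref{theo:tol:testing:juntas:tradeoff} on $g$ at scale $\hat k$ with distance parameter $2\eps_1$, and output $\reject$ if it rejects. In the completeness case this never happens, since $g$ is then within $\eps_1+\eps/C<2\eps_1$ of the $\hat k$-junta obtained by relabeling $f^*$ through the near-isomorphism; and when the tester does not reject, $g$ is $2\eps_1$-close to some $\hat k$-junta $g^*$. I would then extract, \emph{independently of $n$} and with $\poly(\hat k,\log(1/\eps))$ queries per sample, \iid pairs $(z,b)$ with $z$ uniform on $\bool^{\hat k}$ and $b=h_f(z)$, where $h_f\colon\bool^{\hat k}\to\bool$ is the core of $f^*$ (padded to exactly $\hat k$ variables), using the block-partitioning and relevant-block-identification machinery in the spirit of the junta testers of~\cite{FKRSS:04}; similarly for $g$. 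Crucially, using the estimator $b=\sign\big(\shortexpect_w[f(z,w)]\big)$ --- where $w$ is an independent uniform setting of the non-junta coordinates, evaluated with a polylogarithmic number of sub-queries --- the only error is on the $O(\eps_1)$-fraction of $z$ for which $\Pr_w[f(z,w)\ne h_f(z)]$ is bounded away from $0$, a consequence of $\shortexpect_z\Pr_w[f(z,w)\ne h_f(z)]=\dist{f}{f^*}\le\eps_1$. Thus the extracted pairs are drawn from a distribution that is $O(\eps_1)$-close (in total variation) to the ``clean'' core distribution, and likewise for $g$.

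Finally I would feed these samples to the known worst-case (and tolerant) isomorphism tester on $\hat k$-bit functions of~\cite{AB:10,ABCGM:13}, which with $\tildeO{2^{\hat k/2}/\eps}$ samples --- each simulated by $\poly(\hat k,\log(1/\eps))$ queries --- distinguishes ``$h_f$ is $O(\eps_1+\eps/C)$-close to isomorphic to $h_g$'' from ``$h_f$ is $\Omega(\eps)$-far from isomorphic to $h_g$'', and accept iff it accepts. Correctness follows by moving between $f,g$ and their cores through the triangle inequality: in the completeness case $h_f$ and $h_g$ are near-isomorphic (being close to $f^*$ and $g^*$, which are isomorphic up to $\eps/C$); in the soundness case, were the cores close enough to isomorphic to fool the core test, then --- since $f$ and $g$ are within $\eps_1$ and $2\eps_1$ of $f^*$ and $g^*$ --- $f$ would be within less than $\eps$ of an isomorphic copy of $g$, contradicting the hypothesis. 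The total query complexity is $\tildeO{2^{\hat k/2}/\eps}\le\tildeO{2^{\kclose/2}/\eps}$ with high probability, and $\tildeO{2^{n/2}/\eps}$ always (when the search runs up to $\hat k=n$).

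The main obstacle is not the skeleton above but the arithmetic of error propagation that makes it go through: one must verify that the tolerant junta tester can be run with a \emph{constant} $\rho$ that simultaneously yields the $2^{k/2}$ query bound and enough tolerance ($\rho\eps_1/16\ge\eps/C$); that the core-extraction noise is $O(\eps_1)$ and not $O(\sqrt{\eps_1})$ (which is exactly why one uses the $\sign\big(\shortexpect_w[\,\cdot\,]\big)$ estimator rather than a naive majority vote over a bounded number of samples); and that all of these constants compose so that a single absolute constant $C$ suffices, the accumulated noise plus the $\eps/C$ isomorphism slack staying strictly below the $\Omega(\eps)$ detection gap of the core test. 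A secondary point is to confirm that the core extraction is genuinely $n$-independent, so that no stray $\log n$ factor appears outside the worst-case regime $\hat k=n$.
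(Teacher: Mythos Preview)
Your skeleton matches the paper's: a linear search for $\hat k$ using the tolerant tester of \autoref{theo:tol:testing:juntas:tradeoff} with $\rho=1-2^{-1/2}$, followed by core extraction (the paper's ``noisy sampler'') and an isomorphism test on the cores in the style of~\cite{ABCGM:13}. The core-extraction details differ slightly (you take a majority over the non-relevant blocks, while the paper's sampler uses a single query per sample), but either works within the stated $\tilde O$ budget.

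There is, however, a genuine gap in your completeness argument for the certification step on $g$. You run the tester of \autoref{theo:tol:testing:juntas:tradeoff} on $g$ with distance parameter $2\eps_1$ and assert that ``in the completeness case this never happens, since $g$ is then within $\eps_1+\eps/C<2\eps_1$ of the $\hat k$-junta.'' But \autoref{theo:tol:testing:juntas:tradeoff} only \emph{guarantees acceptance} when the input is $\frac{\rho\cdot 2\eps_1}{16}=\frac{\rho\eps_1}{8}$-close; for inputs that are merely $2\eps_1$-close the output is unspecified and may be \reject. After the linear search you only know $f$ is $\eps_1$-close to some $\hat k$-junta, so in the completeness case $g$ is $(\eps_1+\eps/C)$-close --- and $\eps_1+\eps/C>\eps_1>\frac{\rho\eps_1}{8}$ no matter how large you take $c_1$ or $C$. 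Thus the tester may legitimately reject $g$, and your algorithm may wrongly output \reject in the completeness case. (Your soundness argument, by contrast, is fine: conditioned on all calls being correct, acceptance of $g$ does certify $g$ is $2\eps_1$-close.)

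The paper sidesteps this by \emph{decoupling the two scales}. The linear search is run with a very small distance parameter $c\eps$ (with $c=1/1750$), so whichever function is accepted is genuinely $c\eps$-close to $\junta[\hat k]$; in completeness both are then $2c\eps$-close. The subsequent noisy-sampler preprocessor is called with the much larger parameter $\eps/16$, whose acceptance threshold $\rho\eps/256$ comfortably exceeds $2c\eps$, so both $f$ and $g$ are accepted. Moreover, the preprocessor carries a third guarantee (\autoref{prop:noisy:sampler}(iii)): whenever $\dist{g}{\junta[\hat k]}\le\eps/16$, it either \emph{fails} or returns a \emph{good} state, which is exactly what is needed in soundness when $g$ sits in the intermediate regime. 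Your argument can be repaired the same way: certify $g$ with a distance parameter of order $\eps$ (not $2\eps_1$), and take $c_1$ large enough that $\eps_1+\eps/C$ falls below the resulting acceptance threshold $\Theta(\rho\eps)$.
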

The above statement is rather technical, and requires careful parsing. In particular, the parameter $\kclose$ is crucially \emph{not} provided as input to the algorithm: instead, it is discovered adaptively by invoking the tolerant tester of~\autoref{theo:tol:testing:juntas:tradeoff}. This explains the high-probability bound on the query complexity: with some small probability, the algorithm may fail to retrieve the right value of $\kclose$ -- in which case it may use instead a larger value, possibly up to $n$.

\begin{remark}[On the running time of our algorithms.]
We note that, as in previous work on testing juntas, the query complexity depends only on $k$ and $1/\eps$
but the running time depends on $n$ (since even querying a single point in $\bool^n$ requires specifying $n$ bits).
\end{remark}

\subsection{Overview and techniques}\label{ssec:overview:techniques}
The proofs of Theorems~\ref{theo:tol:testing:juntas:relaxed} and~\ref{theo:tol:testing:juntas:tradeoff} both rely on the notion of the
\emph{influence} of a set of variables. Given a Boolean function $f\colon\bool^n\to\bool$ and a set $S\subseteq [n]$, the influence of the set $S$ (denoted $\infl[f]{S}$) is the probability that $f(x)\neq f(y)$ when $x$ and $y$ are selected uniformly subject to the constraint that for any $i\in \bar{S}$, $x_i=y_i$. The relation between the number of relevant variables and the influence of a set was utilized in previous works.
\dcolor{In what follows we} let $\junta[k]$ denote the set of all $k$-juntas.

Our starting point is similar to
\dcolor{the one in}~\cite{FKRSS:04,blais2009testing}.
We partition the $n$ variables into $\ell=O(k^2)$ parts, which allows us in a sense to remove the dependence on $n$. It is not hard to verify that if $f$ is close to $\junta[k]$, then there exist $k$ parts for which the following holds. If we denote by $T \subseteq [n]$ the union of variables in these $k$ parts, then the complement set $\bar{T}$ has small influence. On the other hand,  Blais~\cite{blais2009testing} showed that if a function is far from $\junta[k]$,  then a random partition into  a sufficiently large number of parts ensures the following with high constant probability. For every union $T$ of $k$ parts, the complement set $\bar{T}$ will have large influence. The above gives rise to a $(2^{(1+(o(1))k\log k}/\eps)$-query complexity  algorithm that distinguishes functions that are $\frac{1}{3}\eps$-close to
 $\junta[k]$  from functions that are $\eps$-far from $\junta[k]$. The algorithm considers all unions $T \subseteq [n]$ of $k$ parts, estimates the influence of $\bar{T}$, and accepts if there exists a set with sufficiently small estimated influence. In order to obtain an algorithm with better query complexity, we consider two relaxations.

In both relaxations we consider a fixed partition $\mI=\{I_1, \ldots, I_\ell\}$ of $\dcolor{[}n\dcolor{]}$ 
\dcolor{into $\ell = O(k^2)$ parts} and \dcolor{for $S\subseteq [\ell]$} we let $\phi_{\mI}(\dcolor{S}) \eqdef \dcolor{\bigcup}_{ i\in \dcolor{S}}I_i$.

\paragraph*{Parameterized tolerant testing through submodular minimization.} In order to describe the algorithm referred to in~\autoref{theo:tol:testing:juntas:relaxed}, it will be useful to introduce the following function. For a Boolean function $f$ and a partition $\mI \dcolor{= \{I_1,\dots,I_{\verynew{\ell}} \}}$, we let $h\colon 2^{[\verynew{\ell}]} \to [0,1]$ be defined as $h(J)\eqdef \infl[f]{\phi_{\mI}(J)}$. The starting point of our approach is the observation that the exhaustive search algorithm described previously can be seen as performing a brute-force minimization of $h$, under a cardinality constraint. Indeed, it effectively goes over all sets $J\subseteq [\verynew{\ell}]$ of size $\verynew{\ell}-k$, estimates $h(J)$, and accepts if there exists a set $J$ for which the estimated value is sufficiently small. With this view, it is natural to ask whether this minimization can be performed more efficiently, by exploiting the
\dcolor{fact that} \acolor{by the diminishing marginal property of the influence}, $h$ is submodular. That is, for every two sets $J_1 \subseteq J_2$ and variable $i \notin J_1$, it holds that $h(J_1\cup\{i\}) - h(J_1) \geq h(J_2\cup\{i\}) - h(J_2)$. While it is possible to find the minimum value of a submodular function in polynomial time \new{ given query access} \dcolor{to the function}, if a cardinality constraint is introduced, then even finding an approximate minimum is hard~\cite{SF:11}. In light of the hardness of the problem, we design an algorithm for the following \emph{bi-criteria relaxation}. Given oracle access to a non-negative submodular function $h\colon 2^{\new{[}\verynew{\ell}\new{]}} \to \R$ and input parameters $\eps \in (0,1)$ and $k \in \N$, the algorithm distinguishes between the following \dcolor{two} cases:
\BI
  \item There exists a set $J$ such that $\abs{J}\geq \verynew{\ell}-k$ and $h(J) \leq \eps$;
  \item For every set $J$ such that $\abs{J} \geq \verynew{\ell}-2k$, $h(J)> 2\eps$.
\EI

Moreover, the algorithm can be adapted to the case where it is only granted access to an approximate oracle for $h$ (for a precise statement, see~\autoref{thm:ASMC:correct}). This is critical in our setting, since $h(J)=\infl[f]{\phi_{\mI}(J)}$, and we can only estimate the influence of sets of variables.

\paragraph{Subset influence and recycling queries.}

The key idea behind our second approach is the following.  The exhaustive search algorithm estimates the influence of 
\dcolor{the set of variables $\phi_{\mI}(J)$ for every set of indices $J \subset [\ell]$ such that $|J|=\ell-k$}
by performing pairs of queries specifically designed for $J$. Namely, it queries the value of the function on pairs of points
\dcolor{in $\bool^n$}  that agree on the set $\bar{J}$. 
If it \dcolor{were} possible to use the same queries for estimating the influence of \dcolor{$\phi_{\mI}(J)$ for} different  \dcolor{choices of $J$},
then we could reduce the query complexity. We show that this can be done if we consider the {\em $\rho$-biased subset influence\/} of a set $J \subset [\ell]$, defined next.

 Given a partition $\mI=\{I_1, \ldots, I_{\ell}\}$,  a parameter \new{$\rho\in(0,1)$}, and a set $J\subset [\ell]$, a random $\rho$-biased subset $S \sim_\rho J$ is a subset of $J$ resulting from taking every index in $J$ to $S$ with probability $\rho$. The expected influence of a random $\rho$-biased subset of $J$, referred to as the $\rho$-subset influence of $J$, is $\shortexpect_{S \sim_\rho J}\left[ \infl{\phi_{\mI}(S)}\right]$. We prove that for every set $J \subseteq [\ell]$, its $\rho$-subset influence is in $[\frac{\rho}{3}\infl{\phi_{\mI}(J)}, \infl{\phi_{\mI}(J)}]$. \newerest{A crucial element in our proof is a combinatorial result due to Baranyai~\cite{Baranyai:75} on factorization of regular hypergraphs. With this fact in hand, we then} present an algorithm that allows to simultaneously estimate the $\rho$-subset influence of all sets $J\subset [\ell]$ of size $\ell -k$. \tcolor{The query complexity of the algorithm is $\bigO{\frac{k\log \new{k}}{\eps\rho(1-\rho)^k}}$.}

\paragraph{Application to isomorphism testing: tolerant testing and noisy samplers.}

The structure of our tolerant isomorphism testing algorithm is quite intuitive, and consists of two phases. In the first phase, we run a linear search on $k$, repeatedly invoking our tolerant \dcolor{junta} tester to discover the smallest value $k$ satisfying $\min(\dist{f}{\junta[k]},\dist{g}{\junta[k]}) \leq \eps/C$. We note that a similar approach using a tester whose tolerance is only $\poly(\eps/k)$ might return a much larger value of $k$, since as $k$ increases, the allowed tolerance decreases. In the second phase, we use this value of $k$ to tolerantly test isomorphism between $f$ and $g$. This phase, however, is not as straightforward as it seems: indeed, to achieve the desired query complexity, we would like to test isomorphism~--~for which we have known algorithms~--~between $\closejunta[k]{f}$ and $\closejunta[k]{g}$, that is, the $k$-juntas closest to $f$ and $g$ respectively.

Yet here we face two issues: (i) we do not have query access to $\closejunta[k]{f}$ and $\closejunta[k]{g}$;
 (ii) even in the completeness case $f_k$ and $g_k$ \emph{need not actually be isomorphic.} Indeed, $f$ and $g$ are only promised to be {close} to $k$-juntas, and close to isomorphic. Hence, the corresponding juntas are only guaranteed to be \emph{close} to isomorphic.

Addressing item~{(ii)} relies on adapting the algorithm of~\cite{ABCGM:13}, along with a careful and technical analysis of the distribution of the points it queries. (This analysis is also the key to providing the tolerance guarantees of our isomorphism tester.) We address item~{(i)} as follows. Our algorithm builds on the ideas of Chakraborty et al.~\cite{CGM:11}, namely on their notion of a ``noisy sampler''. A noisy sampler is given query access to a function that is promised to be close to some $k$-junta and provides (almost) uniformly distributed samples labeled (approximately) according to this $k$-junta. While the~\cite{CGM:11} noisy sampler works for functions that are $\poly(\eps/k)$-close to $\junta$, we need a noisy sampler that works for functions that are only $\frac{\eps}{C}$-close to $\junta$. To this end, we replace the weakly tolerant testing algorithm of~\cite{blais2009testing} used in the noisy sampler of~\cite{CGM:11} with our tolerant testing algorithm. The query complexity of the resulting noisy sampler is indeed much higher than that of~\cite{CGM:11}. However, this does not increase the overall query complexity of our tolerant isomorphism testing algorithm, as stated in \autoref{theo:iso:testing:robust:kclose}.

\subsection{Organization of the paper}

After introducing the necessary notations and definitions in~\autoref{sec:prelim}, we describe in~\autoref{sec:basics} the common starting point of our algorithms --~the reduction from $n$ variables to $O(k^2)$ parts. \autoref{sec:ASMC} then contains the details of the submodular minimization under cardinality constraint underlying~\autoref{theo:tol:testing:juntas:relaxed}, which is then implemented in~\autoref{sec:ASMM} with an approximate submodular minimization primitive. We then turn in~\autoref{sec:tradeoff} to the proof of~\autoref{theo:tol:testing:juntas:tradeoff}, before describing in~\autoref{sec:isomorphism} how to leverage it to obtain our instance-adaptive tolerant isomorphism testing result.  
	
	 \section{Preliminaries}\label{sec:prelim}
	 \makeatletter{}\subsection{Property testing, tolerance, and juntas}
A \emph{property} $\property$ of Boolean functions is a subset of all these functions, and we say \tcolor{that} a function $f$ \emph{has the property $\property$} if $f\in \property$. The distance between two functions $f,g\colon\bool^n\to \bool$ is defined as their (normalized) Hamming distance $\dist{f}{g}\eqdef\probaDistrOf{x}{f(x)\neq g(x)}$, where $x$ is drawn uniformly at random. Accordingly, for a function $f$ and a property $\property$ we define the distance from $f$ to $\property$ as $\dist{f}{\property}\eqdef\min_{g\in\property}\dist{f}{g}$. Given $\eps \geq 0$ and a property $\property$, we will say \tcolor{that} a function $f$ is $\eps$-\emph{far} from $\property$ (resp. $\eps$-\emph{close} to $\property$) if $\dist{f}{\property} > \eps$ (resp. $\dist{f}{\property} \leq \eps$).\medskip

\noindent We can now give a formal definition of a property testing algorithm.
 \begin{definition}
  A \emph{testing algorithm for a property $\property$} is a probabilistic algorithm that gets an input parameter $\eps\in (0,1)$ and oracle access to a function $f\colon\bool^n \to \bool$. The algorithm should output a binary verdict that satisfies the following two conditions.
 	\begin{itemize}
 	\item If $f\in \property$, then the algorithm accepts $f$ with probability at least $2/3$.
 	\item If $\dist{f}{\property} > \eps$, then the algorithm rejects $f$ with probability at least $2/3$.
 	\end{itemize}
 \end{definition}

Next, we define the notion of \tcolor{a} \textit{tolerant} testing algorithm, a testing algorithm that is also required to accept functions merely close to the property:
\begin{definition}
	A \emph{tolerant testing algorithm for a property $\property$} is a probabilistic algorithm that gets two input parameters $\eps_1,\eps_2\in(0,1)$ such that $\eps_1< \eps_2$, and oracle access to a function $f\colon\bool^n \to \bool$.
	The algorithm should output a binary verdict that satisfies the following two conditions.
 	\begin{itemize}
 	\item If $\dist{f}{\property} \leq \eps_1$, then the algorithm accepts $f$ with probability at least $2/3$.
 	\item If $\dist{f}{\property} > \eps_2$, then the algorithm rejects $f$ with probability at least $2/3$.
 	\end{itemize}
In some cases the algorithm is only given one parameter, $\eps_2$, setting 
$\eps_1 = r(\eps_2)$ for some prespecified function \tcolor{$r\colon(0,1) \to (0,1)$}.
\end{definition}

We also consider a relaxation of the definition of tolerant testing to the following \textit{tolerant testing of parameterized properties} .
\begin{definition}[Tolerant Testing of Parameterized Properties]
	Let $\property=(\property_s)_{s\in\N}$ be a non-decreasing family of properties parameterized by $s\in \N$, i.e. such that $\property_s\subseteq \property_t$ whenever $s\leq t$; and $\sigma\colon \N\to \N$ be a non-decreasing mapping. A \emph{$\sigma$-tolerant testing algorithm for $\property$} is a probabilistic algorithm that gets three input parameters $s\in\N$ and $\eps_1,\eps_2\in[0,1]$ such that $\eps_1< \eps_2$, as well as oracle access to a function $f\colon\bool^n \to \bool$.
	The algorithm should output a binary verdict that satisfies the following two conditions.
 	\begin{itemize}
 	\item If $\dist{f}{\property_s} \leq \eps_1$ then the algorithm accepts $f$ with probability at least $2/3$.
 	\item If $\dist{f}{\property_{\sigma(s)}} > \eps_2$, then the algorithm rejects $f$ with probability at least $2/3$.
 	\end{itemize}
 \dcolor{Here too $\eps_1$ may be a prespecified function  of $\eps_2$.}
\end{definition}

The main focus of this work will be the property of being a \emph{junta}, that is, a Boolean function that only depends on a (small) subset of its variables:

 \begin{definition}[Juntas]
 	A Boolean function $f\colon \pmbool^{n}\to \pmbool$ is a \emph{$k$-junta} if there exists a set $\tcolor{T}\subseteq [n]$ of size at most $k$, such that $f(x)=f(y)$ for every two assignments $x,y\in\pmbool^n$ that satisfy $x_i=y_i$ for every $i\in \tcolor{T}$. We let $\junta[k]$ denote the set of all $k$-juntas (over $n$ variables).
 \end{definition}

\paragraph*{Notations.} Hereafter, we denote by $\log$ the binary logarithm, by $[n]$ the set of integers $\{1,\dots,n\}$, and by $\perm[n]$ for the set of permutations of $[n]$. 
Given two disjoint sets $S,T\subseteq [n]$ and two partial assignments $x\in \pmbool^{\abs{S}}$ and $y\in\pmbool^{\abs{T}}$, we let $x\sqcup y \in \pmbool^{\abs{S\cup T}}$ be the partial assignment whose $i$-th coordinate is $x_i$ if $i\in S$ and $y_i$ if $i\in T$. Given a Boolean function $f\colon\bool^n\to\bool$ we write $\oraclequery{f}$ for an oracle providing query access to $f$. \verynew{For a set $S$, We denote by $\subsetcoll{S}{r}$ the set of all subsets of $S$ of size $r$. Given a partition of $\mI=\{I_1, \ldots, I_\ell \}$ of $[n]$ and a set $J \subseteq [\ell]$, we denote by $\phi_{\mI}(J)$ the union $\bigcup_{i \in J}I_i$. }

\medskip
A key notion in this work is the notion of \emph{influence} of a set, which generalizes the standard notion of influence of a variable:
\begin{definition}[Set-influence]\label{def:set:influence}
	For a Boolean function $f\colon\bool^n\to\bool$, the \emph{set-influence} of a set $S\subseteq[n]$ is defined as
	\[
	\infl{S} = 2\probaOf{f(x\sqcup u)\neq f(x\sqcup v)}\;,
	\]
	where $x\sim{\bool^{[n]\setminus S}}$, and $u,v \sim{\bool^{S}}$.
\end{definition}

	\section{From $n$ variables to $O(k^2)$ parts}\label{sec:basics}
	\makeatletter{}In this section we build on techniques from~\cite{FKRSS:04,blais2009testing} and describe how to reduce the problem of testing closeness to a $k$-junta to testing closeness to a \emph{$k$-part} junta (defined below). The advantage of doing so is that while the former question  concerns functions on $n$ variables, the latter \emph{does no longer involve $n$ as a parameter}: only $k$ and $\eps$ now have a role to play. We start with
\dcolor{a useful definition of \emph{$k$-part juntas}, and two lemmas regarding their properties with respect to
random partitions of the domain.}

\begin{definition}[Partition juntas {\cite[Definition 5.3]{Blais:PhD:12}, extended}] \label{def:partition_juntas}
Let $\mI$ be a partition of $[n]$ into $\ell$ parts, and $k\geq 1$. The function $f\colon\bool^n\to\bool$ is a \emph{$k$-part junta with respect to $\mI$} if the relevant coordinates in $f$ are all contained in at most $k$ parts of $\mI$. Moreover,
\begin{enumerate}[(i)]
	\item $f$ is said to \emph{$\eps$-approximate being a $k$-part junta with respect to $\mI$} if there exists a set $J \tcolor{\in \subsetcoll{[\ell]}{\ell - k}}$  satisfying $\infl[f]{\tcolor{\phi_{\mI}(J)}} \leq 2\eps$.
	\item Conversely, $f$ is said to \emph{$\eps$-violate being a $k$-part junta with respect to $\mI$} if for every set $J \tcolor{\in \subsetcoll{[\ell]}{\ell-k} }$, $\infl[f]{\tcolor{\phi_{\mI}(J)}} > 2\eps$.
\end{enumerate}
\end{definition}

\begin{lemma}[{\cite[Lemma 5.4]{Blais:PhD:12}}]\label{lemma:random:partition:soundness}
For  $f\colon\bool^n\to\bool$ and $k \geq 1$, let  $\alpha\eqdef \dist{f}{\junta[k]}$. Also, let $\mI$ be a random partition of $[n]$ with $\ell \eqdef 24k^2$ parts obtained by uniformly and independently assigning each coordinate to a part. With probability at least $5/6$ over the choice of the partition \dcolor{$\mI$, the function} $f$ $\frac{\alpha}{2}$-violates being a
$k$-part junta with respect to $\mI$.
\end{lemma}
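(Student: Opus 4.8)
The statement to prove is Lemma~\ref{lemma:random:partition:soundness}: if $f$ is at distance $\alpha$ from $\junta[k]$, then a random partition into $\ell = 24k^2$ parts makes $f$ (with probability $\ge 5/6$) $\frac{\alpha}{2}$-violate being a $k$-part junta, i.e., \emph{every} union of $\ell-k$ parts has set-influence exceeding $\alpha$.

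\medskip

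\textbf{Plan.} The contrapositive is the natural route: suppose the partition $\mI$ \emph{fails}, meaning there is some $J \in \binom{[\ell]}{\ell-k}$ with $\infl[f]{\phi_{\mI}(J)} \le \alpha$. Writing $T = \overline{\phi_{\mI}(J)}$ — the union of the $k$ ``excluded'' parts — this says a set $T$ of coordinates that is a union of at most $k$ parts satisfies $\infl[f]{\bar T} \le \alpha$. The first step is the standard structural fact: if $\infl[f]{\bar T} \le \alpha$ then $f$ is $\frac{\alpha}{2}$-close (or $\alpha$-close, depending on the normalization in \autoref{def:set:influence}) to the junta $g$ on the variables in $T$ obtained by taking the majority value of $f$ over the coordinates outside $T$; concretely $\dist{f}{g} \le \frac12 \infl[f]{\bar T}$. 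So a failure of the partition exhibits a set $T$ that (a) is ``spread over few parts'' and (b) is such that $f$ restricted to it well-approximates $f$. The work is to show this is unlikely when $\dist{f}{\junta[k]} = \alpha$, because a random partition will shatter the truly relevant structure. I would not reprove this from scratch — it is \cite[Lemma 5.4]{Blais:PhD:12} — but the intended self-contained argument runs as follows.

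\medskip

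\textbf{Key steps.} First, reduce to a finite witness set. For each candidate $T$ arising as a union of $\le k$ parts with $\infl[f]{\bar T}\le\alpha$, the associated junta $g_T$ is a $k$-junta $\frac{\alpha}{2}$-close to $f$; but $\dist{f}{\junta[k]}=\alpha > \alpha/2$ is impossible, so in fact no such $T$ can exist once we account correctly — wait, that over-shoots, so the real argument must instead quantify how a random partition destroys the near-junta structure. The correct approach: let $g^\star$ be a $k$-junta with $\dist{f}{g^\star}=\alpha$ on relevant set $R^\star$, $|R^\star|\le k$. Consider any set $U\subseteq[n]$ with $\infl[f]{\bar U}$ small; one shows via a sub-additivity/union argument on influences that $U$ must ``capture'' most of the influence-carrying coordinates, and in particular, iterating, that the relevant variables of any good $k$-junta approximation must essentially lie in $U$. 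Then the event that the random partition places a near-junta's $\le k$ relevant-ish coordinates into $\le k$ parts \emph{and} keeps the leftover influence below $\alpha$ is controlled: there are at most $\binom{\ell}{k}$ choices of which $k$ parts form $T$; for a fixed such choice, the probability (over the random assignment of coordinates to parts) that $\infl[f]{\bar T}\le\alpha$ while $\dist{f}{\junta[k]}=\alpha$ requires some influential coordinate (one of the $\ge$ constant-number of coordinates carrying $\ge \alpha/(2k)$-ish influence outside any $k$-junta) to land in $T$ by luck. With $\ell = 24k^2$, a birthday-type / coupon bound gives that each influential coordinate lands in a ``wrong'' part with probability $\le k/\ell = 1/(24k)$, and a union bound over the $\binom{\ell}{k}$ many $T$'s times the small per-$T$ failure probability stays below $1/6$. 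The constant $24$ is exactly tuned to make $\binom{24k^2}{k}\cdot(\text{per-}T\text{ failure})\le 1/6$.

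\medskip

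\textbf{Main obstacle.} The delicate part is the per-$T$ probability bound: one needs a clean statement of the form ``if $f$ is $\alpha$-far from every $k$-junta, then any set $V$ of fewer than (say) some threshold of coordinates has $\infl[f]{\bar V} > \alpha$ unless $V$ contains a specific, bounded collection of `heavy' coordinates,'' so that randomness of the partition can be invoked. This requires the sub-additivity of set-influence, $\infl[f]{A\cup B}\le \infl[f]{A}+\infl[f]{B}$, together with the monotonicity $\infl[f]{A}\le\infl[f]{B}$ for $A\subseteq B$, and a pigeonhole argument identifying at least $\ell - 2k$ or so ``low-influence'' parts whose union still must be large-influence — contradicting the assumed failure. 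Getting the quantitative pigeonhole to close with $\ell = 24k^2$ (rather than, say, $\ell = \Theta(k^2/\alpha)$) is the crux; it relies on a second-moment/averaging observation that the \emph{sum} over the $\ell$ singleton parts of $\infl[f]{I_j}$ is, in expectation over the random partition, close to the total influence $\totinf[f]$, and Markov then bounds how many parts can be ``heavy.'' Since this is quoted verbatim from \cite{Blais:PhD:12}, I would present the reduction to the near-junta witness, state the sub-additivity and pigeonhole ingredients, and cite the source for the constant-tuning computation rather than redo it.
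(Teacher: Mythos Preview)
The paper does not actually prove this lemma; it is quoted verbatim from \cite[Lemma~5.4]{Blais:PhD:12} and used as a black box. So there is no in-paper argument to compare your sketch against.

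That said, as a standalone sketch your plan has a real gap. The union-bound route you outline --- $\binom{\ell}{k}$ choices of which $k$ parts form $T$, times a ``per-$T$ failure probability'' --- does not close: $\binom{24k^2}{k}$ is of order $(24ek)^k$, so the per-choice probability would have to be at most roughly $(24ek)^{-k}/6$, i.e.\ exponentially small in $k\log k$, and nothing you wrote produces that. Worse, the ``per-$T$'' event is not well-posed in the way you use it: once you fix a set $K$ of $k$ part-\emph{indices}, the set $T=\phi_{\mI}(K)$ is itself a random variable in the partition, so there is no deterministic $T$ whose influence you can analyze before revealing~$\mI$. And the sentence about ``some influential coordinate \ldots\ land[ing] in $T$ by luck'' has the direction reversed: influential coordinates landing \emph{inside} $T$ is precisely what drives $\infl[f]{\bar T}$ down, which is the bad event you are trying to rule out, not a low-probability coincidence you can exploit.

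The argument in \cite{Blais:PhD:12} does not union-bound over all $\binom{\ell}{k}$ part-subsets. Instead, from $\dist{f}{\junta[k]}=\alpha$ one first extracts a bounded family of $O(k)$ witness coordinates via an iterative influence argument, and then shows that a \emph{single} global event --- the random partition placing all these witnesses in distinct parts --- holds with probability at least $5/6$ by a birthday bound ($\binom{O(k)}{2}/\ell \le 1/6$ when $\ell=24k^2$). Conditioned on that one event, \emph{every} union $T$ of $k$ parts necessarily misses enough witnesses that $\infl[f]{\bar T}>\alpha$ follows from subadditivity and monotonicity of influence. You hint at pieces of this (``heavy coordinates,'' ``birthday-type bound'') but never commit to a single deterministic separation event that handles all $T$ at once; that is the missing idea.
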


\begin{lemma}\label{lemma:any:partition:completeness}
For  $f\colon\bool^n\to\bool$ and $k \geq 1$, let $\alpha\eqdef \dist{f}{\junta[k]}$ and let $\mI$ be any partition of $[n]$ into $\ell \geq k$ parts. Then $f$ $2\alpha$-approximates being a $k$-part junta with respect to $\mI$.
\end{lemma}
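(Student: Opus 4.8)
\textbf{Proof plan for Lemma~\ref{lemma:any:partition:completeness}.}

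The plan is to start from the hypothesis $\alpha = \dist{f}{\junta[k]}$ and fix a $k$-junta $g$ with $\dist{f}{g} = \alpha$ (the minimum is attained since there are finitely many $k$-juntas over $[n]$). Let $R \subseteq [n]$ be a set of at most $k$ relevant coordinates of $g$, so $g$ depends only on the coordinates in $R$. Since $\mI = \{I_1,\dots,I_\ell\}$ is a partition of $[n]$ into $\ell \geq k$ parts, the coordinates of $R$ are spread among at most $k$ of these parts; let $A \subseteq [\ell]$ be a set of exactly $k$ indices (pad arbitrarily if fewer than $k$ parts meet $R$, which is possible as $\ell \geq k$) such that $R \subseteq \phi_{\mI}(A)$. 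Then set $J \eqdef [\ell]\setminus A$, so $J \in \subsetcoll{[\ell]}{\ell-k}$ and $\phi_{\mI}(J) \subseteq [n]\setminus R$.

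Next I would bound $\infl[f]{\phi_{\mI}(J)}$. Write $S \eqdef \phi_{\mI}(J)$; we want to show $\infl[f]{S} \leq 4\alpha$, which by Definition~\ref{def:partition_juntas}(i) means $f$ $2\alpha$-approximates being a $k$-part junta with respect to $\mI$. By Definition~\ref{def:set:influence}, $\infl[f]{S} = 2\probaOf{f(x\sqcup u)\neq f(x\sqcup v)}$ where $x\sim \bool^{[n]\setminus S}$ and $u,v\sim \bool^{S}$ independently. The key observation is that since $S \subseteq [n]\setminus R$ and $g$ depends only on $R$, we have $g(x\sqcup u) = g(x\sqcup v)$ for every choice of $x,u,v$ — indeed both agree with $g$ on the common partial assignment on $[n]\setminus S \supseteq R$. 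Therefore $f(x\sqcup u)\neq f(x\sqcup v)$ can only happen if $f$ disagrees with $g$ on at least one of the two points $x\sqcup u$ or $x\sqcup v$. A union bound gives
\[
\probaOf{f(x\sqcup u)\neq f(x\sqcup v)} \leq \probaOf{f(x\sqcup u)\neq g(x\sqcup u)} + \probaOf{f(x\sqcup v)\neq g(x\sqcup v)}.
\]
The final step is to note that each of $x\sqcup u$ and $x\sqcup v$ is a uniformly random point of $\bool^n$ (the pair $(x,u)$ ranges uniformly over $\bool^{[n]\setminus S}\times\bool^S \cong \bool^n$, and likewise $(x,v)$), so each term equals $\dist{f}{g} = \alpha$. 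Hence $\probaOf{f(x\sqcup u)\neq f(x\sqcup v)} \leq 2\alpha$, giving $\infl[f]{S} \leq 4\alpha = 2(2\alpha)$, as required.

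There is no real obstacle here — the lemma is essentially a bookkeeping statement. The only points that need a little care are: (a) making sure one can pick a set $A$ of \emph{exactly} $k$ part-indices even when $R$ touches fewer than $k$ parts or $|R| < k$, which is why the hypothesis $\ell \geq k$ is used; and (b) getting the constant right — the factor $2$ in the definition of set-influence combines with the factor $2$ from the union bound to turn $\alpha$ into $4\alpha = 2\cdot(2\alpha)$, matching the ``$2\alpha$-approximates'' phrasing (which itself carries an implicit factor $2$ via $\infl{\cdot}\leq 2\eps$). I would also remark that this holds for \emph{any} partition, with no randomness involved, in contrast to Lemma~\ref{lemma:random:partition:soundness}.
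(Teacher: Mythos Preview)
Your proof is correct and follows essentially the same argument as the paper's: pick a closest $k$-junta $g$, choose $J$ of size $\ell-k$ so that all relevant variables of $g$ lie in the complement parts, and bound $\infl[f]{\phi_{\mI}(J)}$ by $4\alpha$ via the union bound and the uniformity of $x\sqcup u$ and $x\sqcup v$. Your write-up is in fact slightly more careful than the paper's in explicitly justifying that one can take $|A|=k$ (using $\ell\geq k$) and in tracking where each factor of~$2$ arises.
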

\BPF
Let $g\in\junta[k]$ be such that $\dist{f}{g} = \dist{f}{\junta} = \alpha$. Let $I_{i_1},\dots, I_{i_r}$ be the $r\leq k$ parts of $\mI$ containing the relevant variables of $g$. Then, for any set $J\subset[\ell]$ of size $\tcolor{\ell-} k$ such that $\{i_1,\dots,i_r\}\subseteq \tcolor{\bar{J}}$, we have that 
when drawing $x\sim {\bool^{\verynew{\phi_{\mI}(\bar J)}}}$, and $u,v \sim{\bool^{\verynew{\phi_{\mI}({J})}}}$ the following holds.
\begin{align*}
  \infl{\verynew{\phi_{\mI}({J})}} &= 2\probaOf{f(x\sqcup u)\neq f(x\sqcup v)}
  \leq 2\probaOf{f(x\sqcup u)\neq g(x\sqcup u) \text{ or } f(x\sqcup v)\neq g(x\sqcup v)} \\
  &\leq 2\left(\probaOf{f(x\sqcup u)\neq g(x\sqcup u) } + \probaOf{f(x\sqcup v)\neq g(x\sqcup v)} \right)
  \leq 2\left( \alpha + \alpha \right)
  = 4\alpha \;,
\end{align*}
where the first inequality follows from observing that (as $g$ does not depend on variables in $\verynew{\phi_{\mI}}({J})$) one can only have $f(x\sqcup u)\neq f(x\sqcup v)$ if $f$ disagrees with $g$ on at least one of the two points; and the third inequality holds since both $x\sqcup u$ and $x\sqcup v$ are uniformly distributed.
\EPF\medskip

The above two lemmas suggest the following approach
\dcolor{for distinguishing between} functions that are $\eps'$-close to some $k$-junta and functions that are $\eps$-far from every $k'$-junta. Suppose we select a random partition of $[n]$ into $O(k^2)$ parts. Then, with high probability over the choice of the partition, it is sufficient to distinguish between functions that $2\eps'$-approximate being a $k$-junta and functions that $\eps/2$-violate being a $k'$-part junta.
 Specifically, we get the proposition below, which we apply throughout this work:

\begin{proposition}[Reduction to part juntas]\label{prop:tol:testing:juntas:reduction:n:m}
Let $\Tester$ be an algorithm that is given query access to a function $f:\bool^n \to \bool$, a partition $\mI=\{I_1, \ldots, I_{\new{\ell}}\}$ of $[n]$  into $\new{\ell}$ parts, and parameters  $k \in \N$ and $\eps\in (0,1)$. Suppose that $\Tester$ performs $q(k,\eps,\new{\ell})$ queries to $f$ and satisfies the following guarantees, for a pair of functions $\new{r} \colon(0,1) \times \N \to (0,1)$ and $\new{r'}\colon\N \to \N$.
\begin{itemize}
\item If $f$ $\eps'$-approximates being a $k$-part junta with respect to $\mI$ and $\eps'\leq \new{r}(\eps,k)$, then $\Tester$ returns \accept with probability at least $5/6$;
\item If $f$ $\eps$-violates being a $k'$-part junta with respect to $\mI$ and $k' \geq \new{r'}(k)$, then $\Tester$ returns \reject with probability at least $5/6$.
\end{itemize}
Then there exists an algorithm $\Tester'$, that given query access to $f$ and parameters  $k \in \N$ and $\eps \in (0,1)$, satisfies the following.
\begin{itemize}
\item If $\dist{f}{\junta[k]} \leq \frac{\eps'}{2}$ and $\eps'\leq \new{r}(\eps, k)$, then $ \Tester'$ outputs \accept with probability at least $2/3$;
\item If $\dist{f}{\junta[k']} > 2\eps$ and $k' \geq \new{r'}(k)$, then $\Tester'$ outputs \reject with probability at least $2/3$.
\end{itemize}
Moreover, the algorithm $\Tester'$ has query complexity $q(k,\eps,\acolor{\ell})$.
\end{proposition}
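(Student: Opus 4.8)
The plan is to have $\Tester'$ form a partition $\mI$ of $[n]$ by assigning each coordinate independently and uniformly to one of $\ell\eqdef 24\,(r'(k))^2$ parts, then run $\Tester$ on $(f,\mI,k,\eps)$ and return its answer. The query complexity is then immediately that of $\Tester$ at this value of $\ell$, i.e. $q(k,\eps,\ell)$, and the only randomness beyond $\Tester$'s own is the draw of $\mI$. (We may assume $r'(k)\ge k$~--~this is forced by the two requirements imposed on $\Tester$, considering a function whose relevant variables fall into exactly $k$ parts~--~so that $\ell\ge 24k^2\ge k$, which we use below.)

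\emph{Completeness.} Suppose $\dist{f}{\junta[k]}\le\eps'/2$ with $\eps'\le r(\eps,k)$, and write $\alpha\eqdef\dist{f}{\junta[k]}$. Since $\ell\ge k$, \autoref{lemma:any:partition:completeness} applies to \emph{every} partition of $[n]$ into $\ell$ parts, so $f$ $2\alpha$-approximates being a $k$-part junta with respect to $\mI$. As the condition of \autoref{def:partition_juntas}(i) only relaxes when its parameter increases and $2\alpha\le\eps'$, $f$ then $\eps'$-approximates being a $k$-part junta with respect to $\mI$ as well; so this holds with certainty over the choice of $\mI$, the first guarantee of $\Tester$ fires, and $\Tester'$ accepts with probability at least $5/6\ge 2/3$.

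\emph{Soundness.} Suppose $\dist{f}{\junta[k']}>2\eps$ for some $k'\ge r'(k)$. I would first reduce to the fixed value $r'(k)$: since $\junta[r'(k)]\subseteq\junta[k']$, we get $\dist{f}{\junta[r'(k)]}\ge\dist{f}{\junta[k']}>2\eps$. Applying \autoref{lemma:random:partition:soundness} with its parameter instantiated to $r'(k)$~--~which is precisely why $\ell$ was set to $24(r'(k))^2$~--~yields that, with probability at least $5/6$ over $\mI$, $f$ $\tfrac12\dist{f}{\junta[r'(k)]}$-violates being an $r'(k)$-part junta with respect to $\mI$; and since the condition of \autoref{def:partition_juntas}(ii) only strengthens when its parameter increases and $\tfrac12\dist{f}{\junta[r'(k)]}>\eps$, on that event $f$ in fact $\eps$-violates being an $r'(k)$-part junta with respect to $\mI$. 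The second guarantee of $\Tester$, invoked with $k':=r'(k)$, then makes $\Tester$ reject with probability at least $5/6$; a union bound over a bad partition and over $\Tester$ failing on a good partition leaves $\Tester'$ rejecting with probability at least $1-\tfrac16-\tfrac16=\tfrac23$.

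I expect the one step needing genuine care to be the soundness reduction: collapsing ``$\exists\,k'\ge r'(k)$'' down to the single value $k'=r'(k)$ via the nesting of the junta classes is exactly what lets $\Tester'$ commit to a number of parts $\ell$ at the start while remaining oblivious to the true (and possibly much larger) $k'$. Everything else~--~the two monotonicity remarks about \autoref{def:partition_juntas} and the union bound~--~is routine bookkeeping.
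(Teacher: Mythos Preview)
Your proposal is correct and follows essentially the same approach as the paper: pick a random partition into $\ell=24(r'(k))^2$ parts, invoke \autoref{lemma:any:partition:completeness} for completeness and \autoref{lemma:random:partition:soundness} for soundness, and union-bound. Your write-up is in fact a bit more careful than the paper's in two places: you make explicit the collapse from an arbitrary $k'\ge r'(k)$ down to $k'=r'(k)$ via $\junta[r'(k)]\subseteq\junta[k']$ (the paper silently sets $\ell=24(k')^2$ without pinning down which $k'$), and you note that completeness holds for \emph{every} partition so no randomness is spent there. One small quibble: the parenthetical argument that $r'(k)\ge k$ is ``forced'' by the two guarantees on $\Tester$ is not quite a proof as stated, but since all uses of the proposition have $r'(k)\ge k$ this is harmless.
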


\BPFOF{\autoref{prop:tol:testing:juntas:reduction:n:m}} The algorithm $\Tester'$ first obtains a random partition $\mI$ of $[n]$ into $\new{\ell}\eqdef 24\dcolor{(}k'\dcolor{)}^2$ parts by uniformly and independently assigning each coordinate to a part. $\Tester'$ then invokes $\Tester$ with parameters $\eps,k,\new{\ell}$ and the partition $\mI$. By \autoref{lemma:random:partition:soundness}  and the choice of $\new{\ell}$, with probability at least $5/6$ the partition $\mI$ is \emph{good} in the following sense. For $\alpha=\dist{f}{\junta[k']}$, it holds that $f$ $\frac{\alpha}{2}$-violates being a $k'$-junta with respect to $\mI$. Conditioned on $\mI$ being good, and by~\autoref{lemma:any:partition:completeness}, we are guaranteed that the following holds.
	\begin{enumerate}[(i)]
		\item If $\dist{f}{\junta[k]} \leq \frac{\eps'}{2}$, then $f$ $\eps'$-approximates being a $k$-part junta with respect to $\mI$;
		\item If $\dist{f}{\junta[k']} > 2\eps$, then $f$ $\eps$-violates being a $k'$-part junta with respect to $\mI$.
	\end{enumerate}
	Therefore, $\Tester$ will answer as specified by the proposition with probability at least $5/6$, making $q(\eps,k,\new{\ell})$ queries. Overall, \dcolor{by a union bound,} $\Tester'$ is successful with probability at least $2/3$.
 \EPFOF

\paragraph{} As an illustration of the above technique, and a warmup towards the (more involved) algorithms of the next sections, we show how to obtain an algorithm $\Tester'$ as specified in~\autoref{prop:tol:testing:juntas:reduction:n:m} with query complexity $2^{(1+o(1))k \log k}/\eps$.
Given a partition $\mI$ of $[n]$ into $\new{\ell}$ parts, $\Tester$ considers all $\binom{\new{\ell}}{\ell- k}$ sets of variables that result from taking the union of $k$ parts. For each such set $T$, it obtains an estimate $\widetilde{\mathbf{Inf}}_f({T})$ of the influence of ${T}$, by performing \tcolor{$O\left(\frac{\ell \log \ell}{\eps}\right)$} queries to $f$. $\Tester$ accepts if for at least one of the sets $T$, $\widetilde{\mathbf{Inf}}_f({T})$ is at most $\frac{3}{2}\eps$. Performing \tcolor{$O(\frac{\ell \log \new{\ell}}{\eps})$} queries to the oracle for each set, ensures that the following holds with high constant probability. For every set $T$ such that $\infl[f]{T} \leq \frac{4}{3}\eps$, $\widetilde{\mathbf{Inf}}_f(\new{T}) \leq \frac{3}{2}\eps$ and for every set $T$ such that ${\mathbf{Inf}}_f(T) > 2\eps$, $\widetilde{\mathbf{Inf}}_f(T) > \frac{3}{2}\eps$. Hence, the algorithm $\Tester$ fulfills the requirements stated in~\autoref{prop:tol:testing:juntas:reduction:n:m} \new{(for $\new{r}(\eps,k)=\frac{2}{3}\eps$ and $\new{r}'(k)=k$)}, and it follows that:
 \BI
	 \item  If $f$ is $\frac{1}{3}\eps$-close to some  $k$-junta then $\Tester'$ accepts with probability at least $2/3$.
	 \item  If $f$ is $\eps$-far from every $k$-junta then $\Tester'$ rejects with probability at least $2/3$.
 \EI
 Since $\new{\ell}=24k^2$, the query complexity of the algorithm is $\binom{\new{\ell}}{\ell -k} \cdot O(\frac{\tcolor{\ell }\log \ell}{\eps}) = 2^{(1+o(1))k\log k}/\eps$.

	\section{Approximate submodular minimization under \dcolor{a} cardinality constraint}\label{sec:ASMC}
	\makeatletter{}In this section we show how a certain bi-criteria approximate version of submodular minimization with a cardinality constraint can be reduced to approximate submodular minimization with no cardinality constraint. This reduction holds even when given \emph{approximate oracle access} to the submodular function, and is meaningful when the cardinality constraint is sufficiently large. Precise details follow.

\BD[Approximate oracle]\label{def:approx-oracle}
Let $h\colon 2^{[\verynew{\ell}]}\to \R$ be a function. An \emph{approximate oracle} for $h$, denoted $\oracle_h$, is a randomized algorithm \dcolor{that}, for any input $J \subseteq [\verynew{\ell}]$ and parameters $\tau,\delta\in (0,1)$, returns a value $\tilde{h}(J)$ such that $|\tilde{h}(J) - h(J)| \leq \tau$ with probability at least $1-\delta$.
\ED

\BD[Approximate submodular minimization algorithm] \label{def:ASMM} Let $h\colon 2^{[\verynew{\ell}]} \to \R$ be a non-negative submodular function and let $\oracle_h$ denote an approximate oracle for $h$.
An \emph{approximate submodular function minimization} algorithm (ASFM) is an algorithm that, when given access to $\oracle_h$ and called with input parameters $\xi$ and $\delta$, returns a value $\nu$ such that $|\nu -\min_{\verynew{J\subseteq[\ell]}} \{h(\verynew{J})\} |\leq  \xi$ with probability at least $1-\delta$.
\ED
In~\autoref{coro:submodular:noisy:minimization} in~\autoref{sec:ASMM} we establish the existence of such an \dcolor{ASFM} algorithm. The running time of the algorithm is polynomial in $\verynew{\ell}$, logarithmic in the maximal value of the function and linear in the running time of the approximate oracle. We next present an algorithm for approximate submodular minimization under \dcolor{a} cardinality constraint.

\begin{algorithm}[H]
\caption{\dcolor{Approximate} Submodular Minimization under \dcolor{a Cardinality} Constraint($\oracle_h, \eps, \delta, \xi, k$)\label{alg:ASMC}}\begin{algorithmic}[1]
	
	\State Let $h'(J)=h(J)-\frac{\eps}{k}\abs{J}$ so that for every $\tau',\delta'$ $\oracle_{h'}(J,\tau',\delta')=\oracle_h(J, \tau',\delta')-\frac{\eps}{k}\abs{J}$ \label{step:def_h}.
	
	\State Let $\nu$ be the returned value from invoking an ASFM algorithm with access to $\oracle_{h'}$ and  parameters  $\xi$ and $\delta$. \label{step:def:nu}
	\State Accept if and only if $\nu\leq (1-\frac{\verynew{\ell}-k}{k})\cdot \eps + \xi$.
\end{algorithmic}
\end{algorithm}

\BT\label{thm:ASMC:correct}
For a submodular function $h$,~\autoref{alg:ASMC} satisfies the following conditions:
\BE
\item If there exists a set \tcolor{$J \subseteq [\ell]$} such that $\abs{J} \geq \ell-k$ and and $h(J)\leq \eps$, then the algorithm accepts with probability at least $1-\delta$.
\item If for every set \tcolor{$J \subseteq [\ell]$} such that $\abs{J} \geq \ell-2(1+\frac{\xi}{\eps})k$, we have  $h(J) > 2(\eps+\xi)$, then the algorithm rejects with probability at least $1-\delta$.
\EE
Moreover, the second item can be strengthened so that it holds for functions $h$ \dcolor{that} satisfy the following: (i) for every set \tcolor{$J\subseteq [\ell]$} such that  $\abs{J}\geq \ell-k$, $h(J) > 2\eps+2\xi$ and (ii) for every set \tcolor{$J\subseteq [\ell]$} such that $\abs{J} \geq \ell-2(1+\frac{\xi}{\eps})k$, $h(J) > \eps+2\xi$.
\ET

\BPF
By~\autoref{def:ASMM}, with probability at least $1-\delta$ the value $\nu$  defined in Step~\ref{step:def:nu} of the algorithm satisfies
\begin{equation}
|\nu -\min_{\tcolor{J \subseteq [\ell]}} \{h'(J)\} | \leq  \xi \;. \label{eq:nu:guarantee}
\end{equation}

We start with proving the first item \dcolor{in~\autoref{thm:ASMC:correct}}. If there exists a set $\tcolor{J^\ast\subseteq [\ell]}$ such that $\abs{J^\ast} \geq \ell-k$ and $h(J^*)\leq \eps$,
then by Equation~\eqref{eq:nu:guarantee} and the definition of $h'$,
\[\nu \leq \min_{\tcolor{J \subseteq [\ell]}} \dcolor{\{}h'(J)\dcolor{\}} + \xi \leq h'(J^\ast) + \xi \leq  h(J^\ast)-\frac{\eps}{k}\abs{J^\ast} + \xi \leq \eps - \frac{\ell-k}{k}\eps+\xi = \left(1- \frac{\ell-k}{k}\right)\eps + \xi \;.\]
Hence, the algorithm accepts.

\dcolor{Turning to the second item in the theorem, we}
now prove that if the algorithm accepts 
(and conditioning on Equation~\eqref{eq:nu:guarantee} holding), then either (i) there exists a set $\tcolor{J^\ast\subseteq [\ell]}$ such that $\abs{J^\ast} \geq \ell-k $ and $h(J^\ast) \leq 2\eps + 2\xi$ or (ii) $\abs{J^\ast} \geq \ell-2(1+\frac{\xi}{\eps})k$ and $h(J^\ast) \leq \eps + 2\xi$.
We let $J^\ast \eqdef \arg\!\min_{\tcolor{J \subseteq [\ell]}} \{h'(J) \}$, and break the analysis into two cases, depending on $\abs{J^\ast}$.
\begin{itemize}
\item If $\abs{J^\ast}\geq \ell-k$, then since $\min_{J}\{h'(J)\} \leq \nu + \xi$  and $\nu\leq (1-\frac{\ell-k}{k})\eps +\xi$,
\[
h(J^\ast) = h'(J^\ast) + \frac{\eps}{k}\abs{J^\ast} \leq \nu + \xi + \frac{\eps}{k}\cdot \ell \leq \left(1-\frac{\ell-k}{k}\right)\eps + 2\xi + \frac{\eps \ell}{k}
= 2\eps + 2\xi,
\]
as claimed in item (i).
\item If $\abs{J^\ast}\leq \ell-k$, then
\[
h(J^\ast) = h'(J^\ast) + \frac{\eps}{k}\abs{J^\ast} \leq \nu + \xi + \frac{\eps}{k}\cdot (\ell-k) \leq \left(1-\frac{\ell-k}{k}\right)\eps + 2\xi + \frac{\eps (\ell-k)}{k}
= \eps + 2\xi.
\]

Also, since for every set $S$, $h(S)\geq 0$ and $h'(J^\ast)\leq \nu + \xi$, \dcolor{we get that}
$$\frac{\eps}{k}|J^\ast| = h(J^\ast)-h'(J^\ast) \geq \left(\frac{\ell-k}{k}-1\right)\eps-2\xi .$$ Therefore,
$\abs{J^\ast} \geq \ell-(2+\frac{2\xi}{\eps})k ,$ and item (ii) holds.
\end{itemize}\EPF

	\section{Approximate submodular function minimization}\label{sec:ASMM}
	\makeatletter{}In this section we use results from \cite{LSW:15} to obtain an approximate submodular minimization algorithm, as defined in~\autoref{def:ASMM}.
This is done in three steps: \textsf{(1)} We use the known fact that the problem of finding the minimum of a submodular function $g$ can be reduced to finding the minimum of the Lov\'asz extension for that function, denoted $\mL_g$. \textsf{(2)} We then extend the results of~\cite{LSW:15} (and specifically of Theorem 61) and provide a noisy separation oracle for $\mL_g$ when only given approximate oracle access to the function $g$. \textsf{(3)} Finally, we apply Theorem 42 from~\cite{LSW:15}, which provides an algorithm that, when given access to a separation oracle for a function, returns an approximation to that function's minimum value. \medskip

\noindent We start with the following definition of the Lov\'asz extension of a submodular function.
\BD[Lov\'asz Extension]
Given a submodular function $g\colon 2^{[\ell]} \to \R$, the \emph{Lov\'asz extension} of $g$, is a function $\mL_g\colon [0,1]^\ell\to \R$, which is defined for all $x \in [0,1]^\ell$ by
\[
\lovasz[g]{x}\eqdef \exxp{t\sim {[0,1]}} \left[ g(\setOfSuchThat{i}{ x_i\geq t}) \right]\;,
\] where $t\sim{[0,1]}$ denotes that $t$ is drawn uniformly at random from $[0,1]$.
\ED
The following theorem is standard in combinatorial optimization (see e.g.~\cite{Bach:13:submodular} and~\cite{grotschel2012geometric,schrijver2002combinatorial}) and provides useful properties of the Lov\'asz extension.
\BT \label{thm:lovasz:properties}
The Lov\'asz extension $\mL_g$ of a submodular function $g\colon 2^{[\ell]} \to \R$ satisfies the following properties.
\BE
\item $\mL_g$ is convex and $\min_{x\in [0,1]^\ell}\{\lovasz[g]{x}\}= \min_{S\subseteq [\ell]}\{g(S)\}$.
\item If $x_1\ge\ldots \geq x_\ell$ , then
\[
\lovasz[g]{x}=\sum_{i=1}^{\ell}\big(g([i])-g([i-1]) \big)x_i\;.
\]
\EE
\ET

By  the first item of~\autoref{thm:lovasz:properties}, in order to approximate the minimum value of a submodular function $g$, it suffices to approximate the minimum of its Lov\'asz extension. As discussed at the start of the section, this is done by providing a \emph{separation oracle} for $\mL_g$.

\BD [(Noisy) Separation Oracle {\cite[Definition 2]{LSW:15}}] \label{def:separation:oracle}
Let $h$ be a convex function over $\R^\ell$ and let $\Omega$ be a convex set in $\R^\ell$. A \emph{separation oracle for $h$} with respect to $\Omega$ is an algorithm that for an input $x\in\Omega$ and parameters $\eta, \gamma \geq 0$ satisfies the follows. It either asserts that $h(x)\leq \min_{y\in \Omega }\{h(y)\}+\eta$ or it outputs a halfspace $H \eqdef \{z: a^T z \leq  a^Tx + c\}$ such that
\[
\setOfSuchThat{ y\in \Omega }{ h(y)\leq h(x) }\subset H\;,
\]
where $a \in [0,1]^\ell$, $a \neq 0$,  and $c\leq \gamma \normtwo{a}$.
\ED

In Theorem 61 in~\cite{LSW:15} it is shown how to define a separation oracle for a function $g$ when given \emph{exact} query access to $g$; we adapt the proof to the case where one is only granted access to an approximate oracle for $g$, and the resulting procedure has small failure probability.

\begin{algorithm}[H]
\caption{Separation Oracle ($\oracle_g, \bar{x}, \eta,\gamma, \delta$)} \label{Alg:SepOracle}

\begin{algorithmic}[1]
	\State Assume without loss of generality that $\bar{x}_1\geq \bar{x}_2 \geq \ldots\geq \bar{x}_\ell $ (otherwise re-index the coordinates).
	\State Let $\tau=\min\{\eta/4\ell,\gamma/2\ell\}$. \label{step:tau}
	\State For each $i$, let $\tilde{g}([i])$ be the returned value from invoking $\oracle_g$ on the set $[i]$ with parameters $\frac{\tau^2}{2}$ and $\frac{\delta}{\ell}$.
	\State Define $\tilde{a} \in \R^\ell$ by $\tilde{a}_{i} \eqdef \tilde{g}([i])-\tilde{g}([i-1])$ for each $i \in [\ell]$.
	\State Let  $\tilde{\mL}_g(\bar{x}) \eqdef \tilde{a}^T\bar{x}.$
	\If {for every $i\in [\ell]$, $\abs{\tilde{a}_i}< \tau$ } \label{step:a_i:geq:tau}
	\State \textbf{return} $\bar{x}$, which satisfies ``${\mL}_g(\bar{x}) \leq \min_{y \in [0,1]^\ell} \{L_g(y)\} + \eta$''.
	\Else
	\State \textbf{return} the halfspace $H=\{z:\; \tilde{a}^T z \leq \tilde{\mL}_g(\bar{x}) + 2\tau \ell \normtwo{ \tilde{a}}\}$ \label{step:def:hyperplane}.
	\EndIf
	
\end{algorithmic}
\end{algorithm}

\BL \label{lem-SepOrCost}
Let $g\colon 2^{[\ell]}\to\R$ be a convex function, and let $\Phi_g(\cdot,\cdot)$ denote the running time of the approximate oracle \dcolor{$\oracle_g$} for $g$. For every $x \in [0,1]^\ell$, $\eta,\gamma,\delta \in (0,1)$, with probability at least $1-\delta$,~\autoref{Alg:SepOracle} satisfies the guarantees of a separation oracle for $\mathcal{L}_g$ (with respect to $[0,1]^\ell$). The algorithm makes $\ell$ queries to $\oracle_g$ with parameters $\tau^2/2$ and $\delta/\ell$, where $\tau=\min\{\eta/4\ell,\gamma/2\ell\}$, and its running time is $\ell \cdot \left(\Phi_g(\frac{\tau^2}{2},\delta/\ell)+\log \ell\right)$.
\EL

In order to prove the above lemma we will use the following theorem from~\cite{LSW:15}.
\BT[{\cite[Theorem 61]{LSW:15}}, restated] \label{thm:cite:LSW:thm61}
Let $g\colon 2^\ell \to \R$ be a submodular function. For every $x\in [0,1]^\ell$,
\[
\sum_{i=1}^\ell \big(g([i])-g([i-1])\big)x_i \leq \mL_g(x) \;.
\]
\ET

\BPFOF{\autoref{lem-SepOrCost}}
For every $i \in [\ell]$, let $a_i \eqdef g([i])-g([i-1])$, and note that by a union bound over all $i\in[\ell]$, we have that  $\max_{i\in[\ell]} \dcolor{\{}\lvert g([i]) - \tilde{g}([i]) \rvert\dcolor{\}} \leq \tau^2/2$, with probability at least $1-\delta$.
We henceforth condition on this, and observe that this implies that, for any $y\in[0,1]^\ell$,
\begin{equation}
\lvert \tilde{a}^Ty - a^Ty \rvert \leq 2\ell\cdot \frac{\tau^2}{2} = \ell \tau^2\;. \label{eq:a_tilde:approx:a}
\end{equation}
\medskip
We next consider two cases. Assume first that there exists an index $i\in[\ell]$ such that $\abs{\tilde{a}_i}\geq\tau$. That is, assume that the condition in Step~\ref{step:a_i:geq:tau} of the algorithm does not hold.  Then we prove that for every  $y \in [0,1]^\ell$ such that $\mL_g(y) \leq \mL_g(\bar{x})$ it holds that $y \in H$, where $H$ is the halfspace defined in Step~\ref{step:def:hyperplane} of the algorithm.

By~\autoref{thm:cite:LSW:thm61}, we have that $\sum_{i=1}^\ell a_i \cdot y_i \leq \mL_g(y)$ for \dcolor{every}  $y\in [0,1]^\ell$. Since $\lovasz{y}\leq \lovasz{\bar{x}}$, we get that
\begin{equation}
\tilde{a}^Ty\leq {a}^Ty + {\ell\tau^2} \leq \lovasz{y} +{\ell \tau^2} \leq \lovasz{\bar{x}}+{\ell \tau^2}\;. \label{eqn:upper:bound:a:tilde:y}
\end{equation}
By~\autoref{thm:lovasz:properties}, together with the assumption that the coordinates of $\bar{x}$ are sorted,
\begin{equation} \label{eqn:ub:mLgx}
\mL_g(\bar{x}) = \sum_{i=1}^\ell a_i \cdot \bar{x}_i \leq \sum_{i=1}^\ell \tilde{a}_i \cdot \bar{x}_i + \ell\tau^2 = \tilde{\mL}_g(\bar{x}) + \ell \tau^2 .
\end{equation}
Combining Equation~\eqref{eqn:upper:bound:a:tilde:y} and Equation~\eqref{eqn:ub:mLgx}, and since there exists an $i$ such that $|\tilde{a}_i |\geq \tau$,
\[\tilde{a}^Ty \leq \tilde{\mL}_g(\bar{x}) +2\ell \tau^2 \leq \tilde{\mL}_g(\bar{x})+2\ell\tau \normtwo{\tilde{a}} \;.\]
This implies that $y$ is in $H$ and that for $c=\tilde{\mL}_g(\bar{x})$ and $\gamma=2\tau \ell $, $H$ fulfills the requirements \dcolor{stated} in~\autoref{def:separation:oracle}.

Now consider the case that $\abs{\tilde{a}_i}\leq \tau$ for all $i\in[\ell]$. It follows that for any $y\in [0,1]^\ell$, $-\ell \tau \le\tilde{a}^T{y}\le \ell \tau$. In particular, we have that
$-\ell \tau\leq \tilde{\mathcal{L}}_g(\bar{x})\le \ell\tau$, which implies that for every $y\in[0,1]^\ell$,
\[
\tilde{\mL}_g(\bar{x})-2\ell \tau \;\leq\; -\ell\tau \;\leq\; \tilde{a}^T{y}\;.
\]
Therefore, for every $y\in[0,1]^\ell$ we get
\[\tilde{\mL}_g(\bar{x}) -3\ell\tau \;\leq\; \tilde{a}^Ty -\ell\tau \;\leq\; a^Ty \;\leq\; \mL_g(y)\;,  \]
where the second inequality follows from Equation~\eqref{eq:a_tilde:approx:a}, and the last inequality follows from~\autoref{thm:cite:LSW:thm61}. Hence, if we let $x^\ast=\arg\min_{\dcolor{x}}\{\lovasz{x}\}$, we have that
\[ \tilde{\mL}_g(\bar{x})\;\leq\; \mL_g(x^\ast)+ 3\ell \tau \;.\]
By  Equation~\eqref{eqn:ub:mLgx} we have that  $\mL_g(\bar{x}) \leq \tilde{\mL}_g(\bar{x}) + \ell \tau^2 $.
Hence,
$$\mL_g(\bar{x}) \leq \mL_g(x^*)+3\ell\tau + \ell\tau^2 \leq \mL_g(x^*)+4\ell\tau\;,$$
and since by the setting of $\tau$ in Step~\ref{step:tau} of the algorithm, $\tau\leq \eta/4\ell$, we get that  $\bar{x}$  satisfies \[ \mL_g(\bar{x})\;\leq\; \min_{y \in [0,1]^\ell}\{\mL_g(y)\} + \eta. \]
Therefore, with probability at least $1-\delta$ the algorithm satisfies the conditions of a separation oracle with parameters $\eta$ and $\gamma$.

The algorithm performs $\ell$ queries to the approximate oracle for $g$ with parameters ${\tau^2}/{2}$ and ${\delta}/{\ell}$, where $\tau= \min\{\eta/4\ell,\gamma/2\ell\}$.
Hence, the running time of the algorithm is $\ell\cdot \Phi_g(\frac{\tau^2}{2},\frac{\delta}{\ell})+\ell\log \ell$, as it also sorts the coordinates of $\bar{x}$ (in order to re-index the coordinates).
\EPFOF

\medskip
We can now use the separation oracle for $\mL_g$ and apply the following theorem to get an approximate minimum of $\mL_g$, which is also an approximate minimum of $g$.

\BT [{\cite[Theorem 42]{LSW:15}, restated}] \label{thm-Cuttingplane}Let $h$ be a convex function on $\R^\ell$ and let $\Omega$ be a convex set with constant min-width\footnote{For a compact set $K\subseteq\R^\ell$, the min-width is defined as $\min_{a\in\R^\ell\colon \normtwo{a}=1} \max_{x,y\in K} \dcolor{\{}\dotprod{a}{x-y}\dcolor{\}}$.~\cite[Definition 41]{LSW:15}. In particular, it is not hard to see that the set $K=[0,1]^\ell\subseteq B_\infty(1)$ has unit min-width.} that contains a minimizer of $h$. Suppose we have  a separation oracle for $h$ and that $\Omega$ is contained inside $B_\infty(R)\eqdef\setOfSuchThat{x }{ \norminf{x}\leq R }$, where $R>0$ is a constant. Then there is an algorithm, which for any $0<\alpha<1$ and $\eta \dcolor{ > 0}$ outputs $x\in \R^\ell$ such that
\[
h(x)-\min_{y\in \Omega}\{h(y)\}\leq \eta+\alpha\cdot\left( \max_{y\in \Omega}\{h(y)\}-\min_{y\in\Omega}\{h(y)\}\right)\;.
\]
In expectation, the algorithm performs $O\left(\ell\cdot \log\left(\frac{\ell}{\alpha}\right) \right)$ calls to~\autoref{Alg:SepOracle}, and has expected running time of
\[
O\left(\ell\cdot \operatorname*{SO}(\eta, \gamma)\log\left(\frac{\ell}{\alpha}\right)+\ell^3\log^{O(1)}\left(\frac{\ell}{\alpha}\right)\right)\;,
\]
where $\gamma=\Theta\left(\frac{\alpha}{\ell^{3/2}}\right)$ and $\operatorname*{SO}(\eta, \gamma)$ denotes the running time of the separation oracle when
invoked with parameters $\eta$ and $\gamma$.
\ET

\BC\label{coro:submodular:noisy:minimization}
Let $g\colon 2^{[\ell]} \to \R$ be a submodular function. There exists an algorithm that, when given access to $\oracle_g$, and for input  parameters $\xi, \delta\in(0,1)$,  returns with probability at least $9/10-2\delta$ a value $\nu \in \R$ such that $\nu\leq \min_{S \subseteq [\ell]} \{g(S)\} + \xi$.

The algorithm performs $\ell\log\left(\frac{\ell M}{\xi}\right)$ calls to $\oracle_g$ with parameters $\frac{\xi^2}{128\ell^5M^2}$ and $\frac{\delta}{C\ell^2\log\left(\frac{\ell M}{\xi} \right)}$, where $M\eqdef\max\left\{2\max_{S \subseteq [\ell]}\{\abs{g(S)}\},\xi/2 \right\}$ and $C>0$ is an absolute constant.
The running time of the algorithm is
\[
O\!\left( \ell^2\cdot \Phi_g\left( \frac{\xi^2}{128\ell^{5}M^2}, \frac{\delta}{C\ell^2\log\frac{\ell M}{\xi}} \right)\log{\frac{\ell M}{\xi}}  \tcolor{+\ell^2\log \ell} + \ell^3 \log^{O(1)}\frac{\ell M}{\xi}\right) \;,
\]
where $\Phi_g$ is the running time of $\oracle_g$.
\EC
\BPF
\sloppy
We refer to the algorithm from~\autoref{thm-Cuttingplane} as {\em the minimization algorithm\/} and apply it to $\mL_g$, with~\autoref{Alg:SepOracle} as a separation oracle.
Once the minimization algorithm returns a point ${x} \in [0,1]^\ell$, we return the value $\nu=\oracle_{\mL_g}(x, \xi/4, \delta)$.

Let $M'\eqdef 2\max_{S \subseteq [\ell]}\{\abs{g(S)}\}$, and recall that $\mL_g(x) = \exxp{t \sim[0,1]}{[g(\setOfSuchThat{i }{ x_i \geq t }) ]}$. Hence, $\max_{x \in [0,1]^\ell} \{\mL_g(x)\} - \min_{x \in [0,1]^\ell} \{\mL_g(x)\} \leq M'$. Setting $\alpha<\xi/(2M)$ and $\eta = \xi/4$ ensures that $0 <\alpha <1$ and that
\begin{equation} \label{eqn:eta:alpha}
\eta + \alpha\cdot\left(\max_{x \in [0,1]^\ell} \{\mL_g(x)\} - \min_{x \in [0,1]^\ell} \{\mL_g(x)\} \right) \leq  \eta +\alpha M' \leq 3\xi/4 \;.
\end{equation}

The minimization algorithm invokes the separation oracle $C_1 \cdot \ell \log (\ell/\alpha) = C_1 \cdot \ell \log (\ell M/\xi)$ times in expectation, for some constant $C_1$. If at some point the number of calls to the separation oracle exceeds $ 10C_1 \cdot \ell \log (\ell M/\xi)$, then we halt and return \fail. By Markov's inequality this happens with probability at most $1/10$.
Hence, every time the minimization algorithm calls the separation oracle with parameters $\eta$ and $\gamma$ we invoke~\autoref{Alg:SepOracle} with parameters $\eta, \gamma$ and  \tcolor{$\delta' = \frac{\delta}{10C_1 \ell \log\left(\frac{\ell M}{\xi}\right) }$}. Therefore, with probability at least $1-1/10-\delta$ all the calls to~\autoref{Alg:SepOracle} satisfy the guarantee of a separation oracle for $\mL_g$ with parameters $\eta$ and $\gamma$. By~\autoref{thm-Cuttingplane} and Equation~\eqref{eqn:eta:alpha}, with probability at least $9/10-\delta$ the minimization algorithm returns a point $x$ such that
\[\mL_g(x) - \min_{{y}\in [0,1]^\ell} \{\mL_g({y})\}\leq \eta +  \alpha\cdot\left(\max_{y \in [0,1]^\ell} \{\mL_g(y)\} - \min_{y\in [0,1]^\ell} \{\mL_g(y)\} \right)\le \frac{3\xi}{4} \;,\]
and with probability at least  $9/10-2\delta$ the value $\nu$ satisfies
\[
\nu \leq \min_{y \in [0,1]^\ell} \{\mL_g(y)\} + \xi \;,
\]
as desired.

By the above settings and by~\autoref{lem-SepOrCost} we get that $\tau = \frac{\xi}{8\ell^{5/2}M}$ so the running time of each invocation of the separation oracle is

\[
\ell\cdot \Phi_g\left( \frac{\tau^2}{2}, \frac{\delta'}{\ell} \right)+\tcolor{\ell \log \ell}
= \ell\left( \Phi_g\left( \frac{\xi^2}{128^{5}M^2}, \frac{\delta}{10C_1 \ell^2\log\frac{\ell M}{\xi}} \right) +\tcolor{\log \ell}\right)\;.
\]
Since the evaluation of $\nu$ in the final step is negligible in the running time of the minimization algorithm, we get that the overall time complexity is
\[
O\!\left( \ell^2\cdot \Phi_g\left( \frac{\xi^2}{128\ell^{5}M^2}, \frac{\delta}{10C_1\ell^2\log\frac{\ell M}{\xi}} \right)\log{\frac{\ell M}{\xi}} \tcolor{+ \ell^2\log \ell} + \ell^3 \log^{O(1)}\frac{\ell M}{\xi}\right) \;.
\]

\EPF

\BC\label{coro:submodular:noisy:minimization:junta}
There exists an algorithm that, when given query access to a function $f\colon \bool^n\to\bool$ and a partition $\mI=\{I_1,\dots,I_\ell\}$ of $[n]$ into $\ell$ parts, as well as input parameters $k\in\N, \eps, \xi\in(0,1)$, satisfies the following. It has time and query complexity $\tilde{O}\left( \max\left( \frac{\ell^{12}}{\xi^4}, \frac{\ell^{16}\eps^4}{k^4\xi^4}\right) \right)$, and distinguishes with probability at least $5/6$ between the 
\dcolor{following two}
cases:
\BE
\item There exists a set $S\subseteq[\ell]$ such that $\abs{S} \geq \ell-k$ and $h(S)\leq \eps$.
\item For every set $S$ such that $\abs{S} \geq \ell-2(1+\frac{\xi}{\eps})k$, $h(S) > 2(\eps+\xi)$
\EE
where $h\colon 2^\ell \to\R$ is defined as $h(S)\eqdef \infl[f]{\phi_{\mI}(S)}$. \smallskip

\noindent Moreover, the second item can be strengthened so that it holds for functions $f$ \dcolor{that} satisfy the following: (i) for every set $S$ such that  $\abs{S}\geq \ell-k$, $h(S) > 2\eps+2\xi$ and (ii) for every set $S$ such that $\abs{S} \geq \ell-2(1+\frac{\xi}{\eps})k$, $h(S) > \eps+2\xi$. 
\EC

\BPF
We apply~\autoref{coro:submodular:noisy:minimization} to $h'\colon2^{[\ell]}\to\R$, defined as in~\autoref{alg:ASMC} by $h'(S) \eqdef h(S) - \frac{\eps}{k}\abs{S}$, with $\xi$, $M\eqdef \max\left( 2\max(2, \frac{\eps \ell}{k}) , \xi/2 \right)=4\max(1, \frac{\eps \ell}{2k})$, and $\delta\eqdef \frac{1}{30}$. In order to do so, we need to simulate
\dcolor{an approximate oracle for $h'$ (as defined in Definition~\ref{def:approx-oracle}).}
 Since $h(S) = \infl[f]{\tcolor{\phi_{\mI}(S)}}$, in order to estimate $h'(S)$ \dcolor{within} an additive \dcolor{approximation of} ${\tau'}$ with probability at least $1-\delta'$, it is sufficient to estimate $\infl[f]{\tcolor{\phi_{\mI}(S)}}\in[0,2]$ \dcolor{within} an additive \dcolor{approximation of}  ${\tau'}$ with probability at least $1-\delta'$ (indeed, the additional term $\frac{\eps}{k}\abs{S}$ can be computed exactly). By Chernoff bounds, this can be done with $\Phi_h(\tau',\delta')= O(\frac{1}{\tau'^2}\log\frac{1}{\delta'})$ queries to $f$.

This yields an approximate oracle $\oracle_h$, and therefore $\oracle_{h'}$ (with success probability $9/10-2\delta = 5/6$) which can be provided to the algorithm of~\autoref{thm:ASMC:correct}. \dcolor{The resulting} query complexity is
\[
O\!\left( \ell^2\cdot \Phi_h\left( \frac{\xi^2}{\ell^{5}M^2}, \frac{1}{10C_1\ell^2\log\frac{\ell M}{\xi}} \right)\log{\frac{\ell M}{\xi}}  \tcolor{+ \ell^2\log \ell} + \ell^3 \log^{O(1)}\frac{\ell M}{\xi}\right)
\]
which, given the above expression for $\Phi_h$, can be \dcolor{bounded} as follows.

\begin{itemize}
\item If $\eps < \frac{2k}{\ell}$, so that $M = 4$, this simplifies as
\[
O\!\left( \frac{\ell^{12}}{\xi^4}\log^2{\frac{\ell}{\xi}}\right).
\]
\item If $\eps \geq \frac{2k}{\ell}$, which implies that $M = \frac{2\eps \ell}{k}$, this becomes
\[
O\!\left( \frac{\ell^{16}\eps^4}{k^4\xi^4}\log^2 \ell \right).
\]
\end{itemize}
Observing that the function $h$ is indeed a non-negative submodular function (and that $h'$ is also submodular \dcolor{since it is} the sum of a submodular
 \dcolor{function} and a modular function) allows us to conclude by~\autoref{thm:ASMC:correct}.
\EPF

\noindent In particular, setting $\xi=\eps$ we get the following:
\BC\label{coro:testing:kpart:juntas:parameterized}
There exists an algorithm that, given query access to a function $f\colon\bool^n \to \bool$, a fixed partition $\mI$ of $[n]$ into $\ell= O(k^2)$ parts,
and parameters $k\geq 1$ and $\eps \in (0,1)$, satisfies the following. The query	complexity of the algorithm is
$
\tilde{O}\left(\frac{k^{24}}{\eps^4}+k^{28}\right) = \poly(k,1/\eps)
$, and:
\BE
\item if $f$ $\frac{\eps}{2}$-approximates being a $k$-part junta with respect to $\mI$, then the algorithm accepts
with probability at least $\frac{5}{6}$;
\item if $f$ $2\eps$-violates being a $4k$-part junta  with respect to $\mI$, then the algorithm rejects
with probability at least $\frac{5}{6}$.
\EE
Moreover, the second item can be strengthened to ``simultaneously $2\eps$-violates being a $k$-part junta and $\frac{3}{2}\eps$-violates being a $4k$-part junta.''
\EC
\BPF
Follows immediately from applying~\autoref{coro:submodular:noisy:minimization:junta} with $\xi=\eps$.
\EPF\vspace{5pt}

The tolerant junta testing theorem (\autoref{theo:tol:testing:juntas:relaxed}) follows immediately from the above, together with~\autoref{prop:tol:testing:juntas:reduction:n:m}. With probability at least $5/6$, a random partition of the variables into $\ell \eqdef 192k^2$ parts will have the right guarantees, reducing the problem to distinguishing between $\frac{\eps}{2}$-approximating being a $k$-part junta vs. $2\eps$-violating being a $4k$-part junta (with regard to this random partition). Overall, by a union bound, the result is therefore correct with probability at least $2/3$.

	\section{A tradeoff between tolerance and query complexity}\label{sec:tradeoff}
	\makeatletter{}In this section, we show how to obtain a smooth tradeoff between the amount of tolerance and the query complexity. Formally, we prove~\autoref{theo:tol:testing:juntas:tradeoff}, restated below.

\testingtradeoff*

Before delving into the proof of the theorem,
 we discuss some of its consequences. Setting $\rho=\bigOmega{1}$, we obtain a tolerant tester that distinguishes {between functions $\bigO{\eps}$-close to $\junta$ and functions $\eps$-far from $\junta$,} with query complexity ${2^{\bigO{k}}}/{\eps}$ -- thus matching (and even improving) the simple tester described in~\autoref{sec:basics}. At the other end of the spectrum, setting $\rho=\bigO{{1}/{k}}$ yields a weakly tolerant tester that distinguishes $\bigO{{\eps}/{k}}$-close {to $\junta$ from $\eps$-far from $\junta$}, but with query complexity $\tildeO{{k^2}/{\eps}}$ -- qualitatively matching the guarantees provided by the junta tester of~\cite{FKRSS:04}.
       
\subsection{Useful bounds on the expected influence of a random $\rho$-subset of a set}
\tcolor{In this subsection we formally define the $\rho$-subset influence of a set and prove that for every set $J \subseteq [\ell]$, its $\rho$-subset influence is at least $\frac{\rho }{3}\cdot\infl{\phi_{\mI}(J)}$ and at most $\infl{\phi_{\mI}(J)}$. Then in the next subsection we provide an algorithm that simultaneously estimates the $\rho$-subset influence of all subsets $J$ of $\ell$ of size $\ell -k$. The running time of the algorithm is $\bigO{ \frac{k\log k}{\eps\rho(1-\rho)^k} }$.
}
We start with a few definitions and notations.

\BD
For any $\rho\in(0,1)$ and any set $R$, we denote by $S \sim_\rho R$ the random $\rho$-biased subset of $R$, resulting from including independently each $i \in R$ in $S$ with probability $\rho$. We refer to such a set $S$ as a \emph{random $\rho$-subset of $R$}.
\ED

\BD For a partition $\mI=\{I_1, \ldots, I_\ell\}$ and a set $J \subseteq [\ell]$ we refer to the expected value of the influence of a random $\rho$-biased subset of $J$, $\shortexpect_{S\sim _\rho J} [ \infl{\phi_{\mI}(S)} ]$, as 
\dcolor{the} \emph{$\rho$-subset influence} \dcolor{of $J$ (with respect to $\mI$)}.
\ED

The next lemma describes the connection between the influence of a set $J$ and its $\rho$-subset influence.

\BL\label{lemma:infl:rhosubset}
Let  $\mI=\{I_1, \ldots, I_\ell\}$ be a partition of $[n]$. Then, for any $J\subseteq [\ell]$,
\[
\frac{\rho}{3}\infl{\phi_{\mI}(J)} \leq \shortexpect_{S\sim _\rho J} [ \infl{\phi_{\mI}(S)} ] \leq \infl{\phi_{\mI}(J)}.
\]
\EL

\BPF
The upper bound is immediate by monotonicity of the influence, as $\infl{\phi_{\mI}(S)} \leq \infl{\phi_{\mI}(J)}$ for all $S\subseteq {J}$.
As for the lower bound, let $j=|J|$ and observe that
\begin{equation}
\shortexpect_{S\sim _\rho J } [ \infl{\phi_{\mI}(S)}  ]  = \sum_{s=1}^{j} \sum_{S\subseteq {J} : \abs{S}=s } \rho^{\new{s}} (1-\rho)^{j - s } \cdot \infl{\phi_{\mI}(S)}  \label{eq:exp_infl}.
\end{equation}
We will lower bound the sum $\sum_{S\subseteq {J} : \abs{S}=s }   \infl{\phi_{\mI}(S)}$ for each $s$ separately.
In order to do so we define a \emph{legal collection of covers for a set $J$}:

\BD
Let $J$ be a set of $j$ elements, and for any $s \in [j]$ consider the  family $\subsetcoll{J}{s}$ of
all $\binom{j}{s}$ subsets of $J$ of size $s$. We shall say that $\mathcal{C} \subseteq \subsetcoll{J}{s}$ is a \emph{cover} of $J$
if $\bigcup_{Y\in \mathcal{C}} Y = J$.  We shall say that a collection of covers $\mathscr{C}_J= \{\mathcal{C}_1,\dots,\mathcal{C}_r\}$ is a \emph{legal collection of covers for $J$} if each $\mathcal{C}_t \in \mathscr{C}_J$ is a cover of $J$ and these covers are disjoint.
\ED
Thus, we are interested in showing that there exists a legal collection of covers for ${J}$ whose size $m$ is ``as big as possible.'' This is what the next claim guarantees, establishing that there exists such a cover achieving the optimal size:
\begin{claim}\label{claim:legal:collection:bound}
	For any set $J$ of $j$ elements, there exists a legal collection of covers $\mathscr{C}_J$ for $J$ of size at least
	\[
	|\mathscr{C}_J| \geq \flr{\frac{{\binom{j}{s}} }{ \clg{\frac{j}{s}} }}.
	\]
	(Moreover, this bound is tight.)
\end{claim}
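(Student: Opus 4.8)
The plan is to prove the two bounds separately: the ``$\le$'' direction (which gives the tightness claim) by an elementary counting argument, and the ``$\ge$'' direction (the existence of a legal collection of the stated size) by invoking Baranyai's theorem~\cite{Baranyai:75} on almost‑regular partitions of complete uniform hypergraphs.

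First I would dispose of the upper bound. Let $\mathscr{C}_J=\{\mathcal{C}_1,\dots,\mathcal{C}_r\}$ be any legal collection of covers of $J$. Since each $Y\in\mathcal{C}_t$ has $|Y|=s$ while $\bigcup_{Y\in\mathcal{C}_t}Y=J$, a single cover must contain at least $j/s$ sets, hence at least $\lceil j/s\rceil$ sets. The covers $\mathcal{C}_1,\dots,\mathcal{C}_r$ are pairwise disjoint subfamilies of $\binom{J}{s}$, so $r\lceil j/s\rceil\le\sum_{t=1}^{r}|\mathcal{C}_t|\le\binom{j}{s}$, whence $r\le\lfloor\binom{j}{s}/\lceil j/s\rceil\rfloor$. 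Thus no legal collection can beat the claimed bound, which is the tightness statement.

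For the lower bound, set $m\eqdef\lfloor\binom{j}{s}/\lceil j/s\rceil\rfloor$ (note $m\ge1$ whenever $1\le s\le j$; otherwise the statement is vacuous). The idea is to realize $\binom{J}{s}$ — equivalently, the edge set of the complete $s$-uniform hypergraph on the $j$-element vertex set $J$ — as a disjoint union of $m$ covers by prescribing their sizes and appealing to Baranyai. Concretely, write $\binom{j}{s}=a_1+\cdots+a_m$ with $a_1\eqdef\binom{j}{s}-(m-1)\lceil j/s\rceil$ and $a_2=\cdots=a_m\eqdef\lceil j/s\rceil$; the definition of $m$ forces $a_1\ge\lceil j/s\rceil\ge 1$, so all parts are positive integers summing to $\binom{j}{s}$, and each satisfies $a_t\ge\lceil j/s\rceil\ge j/s$. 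Baranyai's theorem then produces a partition of $\binom{J}{s}$ into classes $\mathcal{C}_1,\dots,\mathcal{C}_m$ with $|\mathcal{C}_t|=a_t$ such that, within each $\mathcal{C}_t$, every vertex of $J$ lies in either $\lfloor s a_t/j\rfloor$ or $\lceil s a_t/j\rceil$ of its sets. Since $a_t\ge j/s$ gives $s a_t/j\ge1$, the minimum degree $\lfloor s a_t/j\rfloor$ is at least $1$; i.e.\ every element of $J$ belongs to at least one set of $\mathcal{C}_t$, so each $\mathcal{C}_t$ is a cover of $J$. Being blocks of a partition, these covers are pairwise disjoint, so $\mathscr{C}_J\eqdef\{\mathcal{C}_1,\dots,\mathcal{C}_m\}$ is a legal collection of covers of $J$ of size $m$, matching the upper bound.

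The one point that requires care is the exact invocation of Baranyai's theorem: I need the general form guaranteeing, for any prescribed sequence of class sizes summing to $\binom{j}{s}$, a partition of $\binom{[j]}{s}$ into classes that are each as regular as possible (vertex degrees within a class differing by at most one), rather than only the special case $s\mid j$ in which one obtains a resolution into perfect matchings. Granting that form, the rest is the elementary arithmetic above — choosing the greedy size sequence so every $a_t$ stays at least $j/s$ — which is precisely what upgrades ``almost regular'' to ``minimum degree $\ge 1$,'' hence to ``cover.''
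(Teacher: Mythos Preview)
Your proof is correct and follows essentially the same approach as the paper: both invoke the general form of Baranyai's theorem with prescribed class sizes summing to $\binom{j}{s}$, arranged so that each class has size at least $\lceil j/s\rceil$ and hence minimum vertex degree at least~$1$. The only cosmetic difference is that the paper uses $m+1$ classes (the first $m$ of size exactly $\lceil j/s\rceil$, plus a discarded remainder class), whereas you use exactly $m$ classes by folding the remainder into $a_1$; your version is slightly tidier but the substance is identical.
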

\dcolor{\autoref{claim:legal:collection:bound}} follows from a result due to Baranyai~\cite{Baranyai:75} on factorization of regular hypergraphs: for completeness, we state this result, and describe how to derive the claim from it, in~\autoref{app:baranyai}. Observe that if $s$ divides $j$ then $\flr{\frac{ {\binom{j}{s}} }{ \clg{\frac {j}{s}} }} = \binom{j-1}{s-1}$; and otherwise
\[
\flr{\frac{ {\binom{j}{s}} }{ \clg{\frac {j}{s}} }}
= \flr{\frac{ \frac {j}{s} }{ \clg{\frac {j}{s}} }\binom{j-1}{s-1}  }
\geq \flr{\frac{ \frac {j}{s} }{ \frac {j}{s}+1 }\binom{j-1}{s-1}  }
\geq \flr{\frac{ 1 }{ 2 }\binom{j-1}{s-1}  }
\geq  \frac{1}{3} \binom{j-1}{s-1}
\;.\]
Therefore,
\begin{align*}
	\sum_{S\subseteq {J} : \abs{S}=s }   \infl{\phi_{\mI}(S)} &= \sum_{S \in \dcolor{\subsetcoll{J}{s}} }   \infl{\phi_{\mI}(S)} \geq  \sum_{\mathcal{C} \in \mathscr{C}_J} \sum_{S \in \mathcal{C}} \infl{\phi_{\mI}(S)} \\
	&=\abs{ \mathscr{C}_J }\cdot \infl{\phi_{\mI}(J)}  \geq \frac{1}{3}\binom{j-1}{s-1} \infl{\phi_{\mI}(J)}.
\end{align*}
Plugging the above into Equation~\eqref{eq:exp_infl}, \dcolor{we obtain} that
\begin{align*}
	\infl{\phi_{\mI}(J)} &\geq \sum_{s=1}^{j}  \rho^{\new{s}}(1-\rho)^{j-s} \cdot \left(\frac{1}{3}\binom{j-1}{s-1} \infl{\phi_{\mI}(J)} \right) \\
	&= \frac{\rho}{3} \infl{\phi_{\mI}(J)}\sum_{s=1}^{j} \binom{j-1}{s-1} \rho^{s-1}(1-\rho)^{j-s} \\
	&= \frac{\rho}{3} \infl{\phi_{\mI}(J) } (\rho+(1-\rho))^{j-1}=\frac{\rho}{3} \infl{\phi_{\mI}(J)},
\end{align*}
which concludes the proof.
\EPF

\subsection{Approximation of the $\rho$-subset influences}\label{ssec:algo:estimate:infl:rhobiasedset}

We now describe and analyze an algorithm that given a partition $\mI=\{I_1, \ldots, I_\ell\}$, allows to simultaneously get good estimates of the $\rho$-subset influences of all subset $J \in \subsetcoll{[\ell]}{\ell -k}$. This algorithm is the main building block of the tolerant junta tester of~\autoref{theo:tol:testing:juntas:tradeoff}.

\begin{algorithm}[H]
	\caption{Simultaneously Approximate $\rho$-subset Influence  ($\oraclequery{f},\rho,\eps,\gamma,k,\ell,\mI$)} \label{algo:approx.NoisyInf}
	\begin{algorithmic}[1]
				\State Set $m=\frac{C\cdot k\log \ell}{\gamma^2\eps\rho(1-\rho)^k}$, where $C\geq 1$ is an absolute constant. \Comment{$C\geq 256\ln 2$ is sufficient.}
		\For {$i=1$ to $m$}
		\State  Let $S^i \sim_\rho [\ell]$. \label{ChooseS}
		\State Pick $x^i \in \bool^{n}$ uniformly at random, and let $z^i\overset{}{\sim}  x^i_{\phi_{\mI}(S^i)}$.
		\State Set $y \gets x^i_{\phi_{\mI}(\bar{S}^i)}\sqcup z^i$.
		\State Set $\vartheta_{S^i} \gets \indicSet{\{ f(x^i)\neq f(y^i) \}}$ \;. \label{PickY}
		\EndFor
		\State Let  \dcolor{$\mS$ be the multiset of subsets $S^1,\dots,S^m$.} 		\For {every $J \in \subsetcoll{[\ell]}{\ell -k}$}\label{Approx-start}
		\State Let $\mS_{J}\subseteq \mS$ denote the subset of sets $S \dcolor{\in \mS}$ such that $S \subseteq{J}$\;. \label{mS_J}
		\State Let $\dcolor{\nu}^{\rho}_{{J}}\gets \frac{1}{\abs{\mS_J}}\sum\limits_{S \in \mS_J}\vartheta_{S}$\label{Approx-end} be the estimate of the $\rho$-subset influence of $J$. \label{step:nu_rho_J}
				\EndFor		
	\end{algorithmic}
\end{algorithm}

\BL \label{lem:BlackboxBiasedAprox}
Let $\mI=\{ I_1,\dots, I_\ell \}$ be a partition of $[n]$. For every $\eps\in(0,1)$ and \new{$\rho \in (0,1)$},~\autoref{algo:approx.NoisyInf} satisfies that, with probability at least $1-o(1)$, the following holds simultaneously for all sets $J \in \subsetcoll{[\ell]}{\ell -k}$ such that $\abs{J} =\ell-k$:
\begin{enumerate}
	\item if $\shortexpect_{S\sim _\rho J} [ \infl{\phi_{\mI}(S)} ] > \frac{\rho\eps}{3}$, \new{then} the estimate $\dcolor{\nu}^\rho_J$  is within a multiplicative factor of $(1\pm\gamma)$ of the $\rho$-subset influence of $J$.
		\item if  $\shortexpect_{S\sim _\rho J} [ \infl{\phi_{\mI}(S)} ]  \leq \frac{\rho\eps}{4}$, \new{then} the estimate $\dcolor{\nu}^\rho_J$ does not exceed $(1+\gamma)\frac{\rho\eps}{4}$.
\end{enumerate}
\EL

\BPF 
\tcolor{Let $m'\eqdef \frac{1}{2}(1-\rho)^k\cdot m=\frac{Ck\log \ell}{2\gamma^2\epsilon\rho}$}. We first claim that for any fixed set $J\subseteq [\ell]$ of size $\ell-k$, with probability at least $1-o(\ell^{-2k})$, $\abs{\mS_J} \geq m'$. To see why this is true, fix some $J\subset[\ell]$ of size  $\ell-k$. For every $i\in[m]$, let $\indic{S_i\subseteq {J} }$ be an indicator variable which is equal to $1$ if and only if $S_i\subseteq {J}$. Then, for every $i\in[m]$, $\probaOf{ \indic{S_i\subseteq {J} }=1 }=  (1-\rho)^{k}$.
By a Chernoff bound,
\[
\probaOf{ \frac{1}{m}\sum_{i=1}^{m}\indic{S_i\subseteq {J} } < \frac{1}{2}\cdot (1-\rho)^{k} }
\leq e^{-\frac{m}{8} (1-\rho)^{k} } = e^{- \frac{C\cdot k\log \ell}{8\eps\rho\gamma^2} }
< 2^{-4{k\log \ell} } \;,
\]
for a suitable choice of $C\geq 1$. Therefore, by a union bound over all $ \verynew{\binom{\ell}{\ell-k} =\binom{\ell}{k}= 2^{(1+o(1))k\log \ell}}$ sets $J \in \subsetcoll{[\ell]}{\ell -k}$,  it holds that with probability $1-o(1)$, for every such $J$, $\abs{\mS_J} \geq m'$. We hereafter condition on this. \medskip

We now turn to prove the two items of the lemma. \tcolor{Let $X=\{x^1, \ldots, x^m\}$ and $Z=\{z^1, \ldots, z^m \}.$ For a set $S^i$, $\shortexpect_{x^i,z^i}[\vartheta_S] = \infl{\phi_{\mI}(S)}$. Hence,} by the definition of $\nu^\rho_J$ in Step~\ref{step:nu_rho_J} of the algorithm,
 \begin{align*} \label{eq:exp_nu_J}
\shortexpect[\dcolor{\nu}^\rho_J] &= \shortexpect_{\mS\tcolor{,X,Z}}\left[\frac{1}{\abs{\mS_J}} \sum_{S \in \mS_J} \vartheta_S \right]
\tcolor{=\shortexpect_{\mS}\left[\frac{1}{\abs{\mS_J}} \sum_{S^i \in \mS_J} \shortexpect_{x^i,y^i}[\vartheta_{S^i}] \right]}
\\ &=
\sum_{S \subseteq J}\Pr[S \in \mS] \cdot \infl{\phi_{\mI}(S)}=
\shortexpect_{S \sim_\rho J}\left[ \infl{\phi_{\mI}(S)}\right] . \numberthis
\end{align*}
Consider a set $J$ with $\shortexpect_{S\sim _\rho J} [ \infl{\phi_{\mI}(S)} ]  > \frac{\rho\eps}{3}$.
 By Equation~\eqref{eq:exp_nu_J}, $\shortexpect[\dcolor{\nu}^\rho_J] > \frac{\rho\eps}{3}$. Therefore, by a Chernoff bound, and  since for every $J$, $\abs{\mS_J} \geq m'=\frac{C k \log \ell}{2\gamma^2 \eps \rho}$,
\begin{eqnarray*}
\probaOf{ \abs{ 
				\tcolor{\nu^{\rho}_J}-  \rhosubsetinf } > \gamma \rhosubsetinf  }
&\leq&  2e^{-\frac{\verynew{\abs{\mS _J}} \gamma^2 \cdot \rhosubsetinf}{3}} \\
\leq  2e^{-\frac{m'\gamma^2\eps\rho}{9}}
< 2^{-4k\log \ell},
\end{eqnarray*}
\new{again} for a suitable choice of the constant $C\geq 1$.
By taking a union bound over all subsets $J \in \subsetcoll{[\ell]}{\ell -k}$, we get that, with probability at least $1-o(1)$, for every ${J}$ such that  $\rhosubsetinf > \frac{\rho\eps}{3}$, it holds that $\dcolor{\nu}\tcolor{^{\rho}_J} \in (1\pm \gamma)\cdot \rhosubsetinf$.

Now consider a set $J\subseteq[\ell]$ such that $\abs{\bar{J}} > \ell-k$ and \new{$\shortexpect_{S\sim _\rho J} [ \infl{\phi_{\mI}(S)} ] \leq \frac{\rho\eps}{4}$}. By a  Chernoff bound:
\begin{align*}
\probaOf{ \tcolor{\nu^\rho_J} > (1+\gamma)\frac{\rho\eps}{\tcolor{4}} }
&\leq  e^{-\frac{\gamma^2}{3} \frac{\rho\eps}{4}\verynew{\abs{\mS_J}} }
\leq  e^{-\frac{\gamma^2\rho\eps}{12}m' }
< 2^{-4k\log \ell} .
\end{align*}
The claim follows by taking a union bound over all subsets $J \in \subsetcoll{[\ell]}{\ell -k}$ for which \new{$\shortexpect_{S\sim _\rho J} [ \infl{\phi_{\mI}(S)} ] \leq \rho\eps/4$}. Overall, the conclusions above hold with probability at least $1-o(1)$, as stated.\smallskip
\EPF

\subsection{Tradeoff between tolerance and query complexity}\label{sec:tolerant:testing:juntas:noisy}

\medskip
We now describe how the algorithm from the previous section lets us easily derive the tolerant tester of~\autoref{theo:tol:testing:juntas:tradeoff}.

\begin{algorithm}[H] \label{algo:tradeoff:noisy:influence}
	\caption{$\rho$-Tolerant Junta Tester ($\oraclequery{f},\eps,\rho,k$)}\label{algo:junta:tradeoff}
	\begin{algorithmic}[1]
		\State Create a random partition $\mI$ of $\ell=24k^2$ parts by uniformly and independently assigning each coordinate to a part.
		\State Run~\autoref{algo:approx.NoisyInf} with the partition $\mI$, $\ell=24k^2$ and $\gamma=1/8$.
		\If {there is a set $J\subset [\ell]$ of size $\ell-k$ such that $\dcolor{\nu}^{\rho}_J \le \frac{9\rho\epsilon}{32}$}
		\State\label{algo:tradeoff:noisy:influence:accept} \Return \accept.
		\EndIf
		\State \Return \reject.
	\end{algorithmic}
\end{algorithm}

\BPFOF {\autoref{theo:tol:testing:juntas:tradeoff}}  Given~\autoref{prop:tol:testing:juntas:reduction:n:m} it is sufficient to consider a partition $\mI$ of size $\ell = 24k^2$ and show that~\autoref{algo:junta:tradeoff} distinguishes with probability at least $5/6$ between the following two cases.
\begin{enumerate}
	\item $f$ $\frac{\rho\eps}{8}$-approximates being a $k$-part junta with respect to $\mI$;
	\item $f$ $\frac{\eps}{2}$-violates being a $k$-part junta with respect to $\mI$
\end{enumerate}
Suppose first that $f$ $\frac{\rho\eps}{8}$-approximates being a $k$-part junta with respect to $\mI$. Then by \autoref{def:partition_juntas}, there exists a set $J \in \ellsubsets$ such that $\infl{\phi({J})} \leq \frac{\rho\eps}{4}$. By~\autoref{lemma:infl:rhosubset}, $\rhosubsetinf \leq \frac{\rho\eps}{4}$, and by~\autoref{lem:BlackboxBiasedAprox}, we have that with probability at least $1-o(1)$, the estimate $\dcolor{\nu}^\rho_{J}$ is at most $(1+1/8)\frac{\epsilon\rho}{4}\le \frac{9\eps\rho}{32}$. Therefore,~\autoref{algo:junta:tradeoff} will return \accept when considering $J$.

Consider now the case where $f$ $\frac{\eps}{2}$-violates being a $k$-part junta with respect to $\mI$. Hence, by \autoref{def:partition_juntas},
every set $J \in \ellsubsets$ is such that $\infl{\phi({J})} > \eps$,  and by~\autoref{lemma:infl:rhosubset}, we have that $\rhosubsetinf \geq \frac{\rho}{3}\infl{\phi(J)} > \frac{\rho \eps}{3}$. Therefore, by \autoref{lem:BlackboxBiasedAprox}, for every $J \in \ellsubsets$, with probability at least $1-o(1)$,
\[
\dcolor{\nu}^\rho_{{J}} \geq \frac{7}{8} \rhosubsetinf >\frac{9\rho \eps}{32}\;.
\]
Thus, with probability at least $1-o(1)$,~\autoref{algo:junta:tradeoff} will reject $f$.
\EPFOF

	\section{``Instance-adaptive'' tolerant isomorphism testing}\label{sec:isomorphism}
	\makeatletter{}In this section, we show how the machinery developed in~\autoref{sec:tradeoff}, and more precisely the algorithm from~\autoref{theo:tol:testing:juntas:tradeoff}, can be leveraged to obtain \emph{instance-adaptive tolerant isomorphism testing} between two unknown Boolean functions $f$ and $g$, as defined below.

We begin with some notation: for $f,g\colon\bool^n\to\bool$, we denote by $\distiso{f}{g}$ the distance between $f$ and the closest isomorphism of $g$, that is $\distiso{f}{g} \eqdef \min_{\pi\in\perm[n]} \dist{f}{g\circ\pi}$. Given query access to two unknown Boolean functions $f,g\colon\bool^n\to\bool$ and a parameter $\eps\in(0,1)$, isomorphism testing then amounts to distinguishing between \textsf{(i)}~$\distiso{f}{g}=0$; and \textsf{(ii)}~$\distiso{f}{g} > \eps$.\footnote{Phrased differently, this is testing the property $\property = \setOfSuchThat{(f,f\circ\pi)}{f\in 2^{2^n}, \pi\in\perm[n]} \subseteq 2^{2^n}\times 2^{2^n}$.}

\noindent Our result will be parameterized in terms of the \emph{junta degree} of the unknown functions $f$ and $g$, formally defined below:
\begin{definition}[Junta degree]
Let $f\colon\bool^n\to\bool$ be a Boolean function, and $\gamma\in[0,1]$ a parameter. We define the \emph{$\gamma$-junta degree of $f$} as the smallest integer $k$ such that $f$ is $\gamma$-close to being a $k$-junta, that is
\[
   \kclose(f,\gamma) \eqdef \min\setOfSuchThat{ k\in [n] }{ \dist{f}{\junta[k]} \leq \gamma }.
\]
Finally, we extend this definition to two functions $f,g$ by setting $\kclose(f,g,\gamma) = \min( \kclose(f,\gamma),\kclose(g,\gamma) )$.
\end{definition}

\noindent With this terminology in hand, we can restate~\autoref{theo:iso:testing:robust:kclose}:
\begin{theorem}[\autoref{theo:iso:testing:robust:kclose}, rephrased]\label{theo:iso:testing:robust:kclose:detailed}
 There exist absolute constants $c \in (0,1)$, $\eps_0\in(0,1)$ and a \verynew{tolerant} testing algorithm for isomorphism of two unknown functions $f$ and $g$ with the following guarantees. On inputs $\eps\in(0,\eps_0]$, $\delta\in(0,1]$, and query access to functions $f,g\colon\bool^n\to\bool$:
  \begin{itemize}
    \item if $\distiso{f}{g} \leq c\eps$, then it outputs \accept with probability at least $1-\delta$;
    \item if $\distiso{f}{g} > \eps$, then it outputs \reject with probability at least $1-\delta$.
  \end{itemize}
  The query complexity of the algorithm satisfies the following, where $\kclose=\kclose(f,g,\frac{\rho c\eps}{\new{16}})$ is the \newest{$\frac{\rho c\eps}{16}$}-junta degree of $f$ and $g$:
  \begin{itemize}
    \item it is \tcolor{$\tilde{O}\big(2^{\frac{\kclose}{2}} \frac{1}{\eps}\log\frac{1}{\delta}\big)$} with probability at least $1-\delta$;
    \item it is always at most \tcolor{$\tilde{O}\big(2^{\frac{n}{2}}\frac{1}{\eps}\log\frac{1}{\delta} \big)$}.
  \end{itemize}
  Moreover, one can take \new{$c=\frac{1}{1750}$}, and \new{$\eps_0\eqdef \frac{16}{15} (5-2\sqrt{6}) \simeq 0.108$}.
\end{theorem}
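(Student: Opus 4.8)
The plan is to follow the two-phase structure outlined in \autoref{ssec:overview:techniques}.

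\textbf{Phase 1 (discovering $\hat k$).} I would run a linear search over $k=1,2,3,\dots$, and at each step invoke the tolerant junta tester of \autoref{theo:tol:testing:juntas:tradeoff} on both $f$ and $g$, with that value of $k$, a distance parameter of order $\eps$, and a fixed bias $\rho \le 1-1/\sqrt{2}$ (so that the factor $(1-\rho)^{-k}$ in the tester's cost is at most $2^{k/2}$, making each invocation cost $\tildeO{2^{k/2}/\eps}$). The search halts at the first $\hat k$ for which the tester accepts one of $f,g$. By the completeness side of \autoref{theo:tol:testing:juntas:tradeoff}, once $k$ reaches $\kclose(f,g,\frac{\rho c\eps}{16})$ the tester accepts with high probability, so $\hat k \le \kclose(f,g,\frac{\rho c\eps}{16})$; by the soundness side, the search does not stop while $\min(\dist{f}{\junta[k]},\dist{g}{\junta[k]})$ is still larger than the fed-in $\Theta(\eps)$ threshold. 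Running the $k$-th invocation at failure probability $\delta/\mathrm{poly}(k)$ and union-bounding keeps the total failure probability below $\delta$ at the cost of only a $\mathrm{polylog}$ factor per step, so Phase 1 costs $\tildeO{2^{\hat k/2}\cdot\frac1\eps\log\frac1\delta}$ (and at most $\tildeO{2^{n/2}\cdot\frac1\eps\log\frac1\delta}$ in the unlikely event the search runs up to $n$).

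\textbf{Phase 2 (tolerant isomorphism test via the cores).} With $\hat k$ in hand, I would test isomorphism of $f$ and $g$ by testing isomorphism between the nearest $\hat k$-juntas $f_{\hat k}$ and $g_{\hat k}$, using the birthday-paradox tester of \cite{ABCGM:13} specialized to $\hat k$-juntas (cost $\tildeO{2^{\hat k/2}/\eps}$). Two obstacles arise. (i)~We have no oracle for $f_{\hat k},g_{\hat k}$: following \cite{CGM:11}, I would build a \emph{noisy sampler} for each of $f,g$ --~a routine producing (nearly) uniform points labelled, with high probability, by the nearest $\hat k$-junta~-- by combining the influence-estimation machinery with a tolerant junta tester. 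The only modification of the \cite{CGM:11} sampler is that theirs needs the function $\poly(\eps/k)$-close to a junta, while we need it to tolerate $\Theta(\eps)$-closeness; replacing the weak tester of \cite{blais2009testing} used inside it by \autoref{theo:tol:testing:juntas:tradeoff} (equivalently, its part-junta consequence) achieves this, at a higher but non-dominant query cost. (ii)~In the completeness case $f,g$ are only $c\eps$-close to isomorphic and each only $\Theta(\eps)$-close to a $\hat k$-junta, so $f_{\hat k}$ and $g_{\hat k}$ are merely $\Theta(\eps)$-close to isomorphic; hence Phase 2 must itself be a \emph{tolerant} isomorphism test. I would obtain this by re-opening the analysis of \cite{ABCGM:13}: describe the distribution of the collision pairs its procedure collects, show this distribution is robust to the two error sources introduced by the samplers (the $\Theta(\eps)$ sampling nonuniformity and the $\Theta(\eps)$ label error), and verify by a triangle-inequality chase that a constant-factor gap persists between ``$\distiso{f_{\hat k}}{g_{\hat k}}=\Theta(\eps)$'' (accept) and ``$\distiso{f_{\hat k}}{g_{\hat k}}=\Omega(\eps)$'' (reject). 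Pushing this back through $|\distiso{f}{g}-\distiso{f_{\hat k}}{g_{\hat k}}| = \Theta(\eps)$ and optimizing every slack constant is exactly what fixes $c=\frac1{1750}$ and the threshold $\eps_0=\frac{16}{15}(5-2\sqrt{6})$ below which all the close/far thresholds stay separated.

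\textbf{Assembly, and the main obstacle.} Conditioning on the good events of Phase 1 (a correct $\hat k$) and of the two noisy samplers, running the Phase-2 tolerant isomorphism test on top of them, and taking a final union bound gives the claimed guarantees; the total query complexity is the sum of the two phases, i.e.\ $\tildeO{2^{\hat k/2}\cdot\frac1\eps\log\frac1\delta}$ with probability at least $1-\delta$ and $\tildeO{2^{n/2}\cdot\frac1\eps\log\frac1\delta}$ always, with $\hat k\le\kclose(f,g,\frac{\rho c\eps}{16})$ as required. The step I expect to be hardest is obstacle~(ii) combined with the sampler noise: making the \cite{ABCGM:13} birthday-paradox tester \emph{simultaneously} tolerant and robust to slightly-biased, slightly-mislabelled samples. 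This requires a precise description of the distribution of the point-pairs it examines and a coupling argument showing that swapping exact access to $f_{\hat k},g_{\hat k}$ for the noisy samplers perturbs the acceptance probability only by a controlled amount, followed by the quantitative bookkeeping of all accumulated error terms that pins down the (necessarily small) constants in the statement.
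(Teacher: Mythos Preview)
Your proposal is correct and follows essentially the same approach as the paper: the two-phase structure (linear search with $\rho=1-1/\sqrt{2}$ and per-step failure probability $\Theta(\delta/k^2)$, then a noisy-sampler-based tolerant isomorphism test adapting~\cite{CGM:11} and~\cite{ABCGM:13}) matches the paper's \autoref{lemma:testing:iso:linear:search} and \autoref{prop:iso:testing:robust:junta} exactly. One small simplification you will find in the paper's execution of Phase~2 is that the $(\eta,\mu)$-noisy sampler is first converted (via a rejection step from~\cite{CGM:11}) into an exactly-uniform $\eta'$-noisy sampler, so only label noise remains and the analysis reduces to direct Chernoff bounds on counts of violating pairs rather than a coupling argument handling both bias and mislabelling simultaneously.
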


\subsection{Proof of~\autoref{theo:iso:testing:robust:kclose:detailed}}
As described in~\autoref{ssec:overview:techniques}, our algorithm first performs a linear search on $k$, invoking at each step the tolerant tester of~\autoref{sec:tradeoff} with parameter $\eps'$, to obtain (with high probability) a value $k^\ast$ such that $k^\ast(f,g,\eps') \leq k^\ast \leq k^\ast(f,g,\frac{\rho \eps'}{\new{16}})$. In the second stage, it calls a ``noisy sampler'' to obtain uniformly random labeled samples from the ``cores'' of the $\kclose$-juntas closest to $f$ and $g$ \dcolor{(both notions are defined formally in~\autoref{subsec:noisysamp-core})}, and robustly tests isomorphism between them. We accordingly divide this section in two, proving respectively these two statements:

\begin{lemma}\label{lemma:testing:iso:linear:search}
  There exists an algorithm (\autoref{alg:testing:iso:linear:search}) with the following guarantees. On inputs $\eps',\delta\in(0,1)$ and query access to $f,g\colon\bool^n\to\bool$, it returns a value $0\leq k \leq n$, such that:
  \begin{itemize}
    \item with probability at least $1-\delta$, we have that:
    \begin{enumerate}[(i)]
      \item $k^\ast(f,g,\eps') \leq \newest{k} \leq k^\ast(f,g,\frac{\rho \eps'}{\new{16}})$;
      \item the algorithm performs $\bigO{ 2^{\frac{k}{2} + o(k)} \cdot  \frac{1}{\eps} \log\frac{1}{\delta} }$ queries;
    \end{enumerate}
    \item the algorithm performs at most $\bigO{ 2^{\frac{n}{2} + o(n)}  \cdot \frac{1}{\eps} \log\frac{1}{\delta} }$ queries.
  \end{itemize}
\end{lemma}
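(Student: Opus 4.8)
The plan is to realize the first phase outlined in~\autoref{ssec:overview:techniques}: a linear search on $k$ driven by repeated calls to the tolerant junta tester of~\autoref{theo:tol:testing:juntas:tradeoff}. The algorithm \autoref{alg:testing:iso:linear:search} would first dispose of the boundary value $k=0$ by running a standard tolerant test for closeness to a constant function on each of $f$ and $g$ --~drawing $O(\frac{1}{\eps'}\log\frac1\delta)$ uniform samples and estimating $\min(\probaOf{f=1},\probaOf{f=0})$~-- and return $0$ if either estimate is small; this step is needed because~\autoref{theo:tol:testing:juntas:tradeoff} requires $k\geq 1$, yet a function $\frac{\rho\eps'}{16}$-close to a constant must be reported as $k=0$. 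Otherwise, for $k=1,2,\dots,n$ in increasing order, it runs the tester of~\autoref{theo:tol:testing:juntas:tradeoff} with parameters $(k,\eps',\rho)$ on $f$ and on $g$, each amplified to error probability at most $\delta_k/2$ by $O(\log\frac1{\delta_k})$ independent repetitions and a majority vote, where $\delta_k\eqdef\frac{6\delta}{\pi^2 k^2}$; it halts and returns the current $k$ the first time an (amplified) call accepts, and returns $n$ if it reaches $k=n$ without accepting. Since $\sum_{k\geq 1}\delta_k\leq\delta$, a union bound shows that with probability at least $1-\delta$ every call made by the algorithm (including the $k=0$ preprocessing) returns the verdict dictated by its promise; call this the \emph{good event} and condition on it.

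Item~(i) is then immediate from monotonicity of $\kclose$ in its last argument together with the promise guarantees. On the good event, for every $0\leq k<\kclose(f,g,\eps')$ both $f$ and $g$ are $\eps'$-far from $\junta[k]$ (by definition, $k<\kclose(f,g,\eps')=\min(\kclose(f,\eps'),\kclose(g,\eps'))$ forces $\dist{f}{\junta[k]}>\eps'$ and $\dist{g}{\junta[k]}>\eps'$), so the amplified calls (or, at $k=0$, the preprocessing) reject and the search does not stop; hence the returned value is at least $\kclose(f,g,\eps')$. For the upper bound, put $K\eqdef\kclose(f,g,\frac{\rho\eps'}{16})$ and let $h\in\{f,g\}$ attain that minimum, so $\dist{h}{\junta[K]}\leq\frac{\rho\eps'}{16}$; if $K=0$ the preprocessing returns $0\leq K$, and if $K\geq 1$ the call on $h$ at level $K$ accepts on the good event, so the search stops at or before $K$. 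In both cases the returned value is at most $K$, and since $\rho/16<1$ the interval $[\kclose(f,g,\eps'),\,\kclose(f,g,\frac{\rho\eps'}{16})]$ is nonempty, giving~(i).

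For item~(ii), by~\autoref{theo:tol:testing:juntas:tradeoff} the unamplified tester at level $k$ costs $O\!\big(\frac{k\log k}{\eps'\rho(1-\rho)^k}\big)$ queries, so the level-$k$ step (both functions, amplified) costs $O\!\big((\log k+\log\frac1\delta)\cdot\frac{k\log k}{\eps'\rho(1-\rho)^k}\big)$. As $\rho$ is a fixed constant, $\sum_{k'=1}^{k}(1-\rho)^{-k'}=\Theta\big((1-\rho)^{-k}\big)$, so summing the step costs over all levels actually visited (up to the returned value $k$), together with the $O(\frac1\eps\log\frac1\delta)$ preprocessing cost, yields $O\!\big((\log k+\log\frac1\delta)\cdot\frac{k\log k}{\eps'\rho(1-\rho)^k}\big)$; absorbing the $\poly(k)$ and $\log$ factors into $2^{o(k)}$, using $\eps'=\Theta(\eps)$, and invoking the specific constant $\rho$ fixed for this section --~for which $(1-\rho)^{-k}=2^{k/2+o(k)}$, which is precisely where the $2^{\kclose/2}$ in~\autoref{theo:iso:testing:robust:kclose:detailed} originates~-- gives $\bigO{2^{k/2+o(k)}\cdot\frac1\eps\log\frac1\delta}$. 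The unconditional bound follows identically by taking $k=n$, since in the worst case the search runs through all $n$ levels.

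The one nonmechanical point, and what I expect to be the main (minor) obstacle, is the bookkeeping behind the \emph{high-probability} query bound: naively amplifying each of the up-to-$2n$ tester calls to error $\delta/(2n)$ would inject a spurious $\log n$ factor, incompatible with a bound of the form $2^{k/2+o(k)}\cdot\frac1\eps\log\frac1\delta$ once $k=o(\log\log n)$. The level-dependent budgets $\delta_k\propto 1/k^2$ fix this: they still sum to at most $\delta$, yet cost only $O(\log k+\log\frac1\delta)=2^{o(k)}\log\frac1\delta$ repetitions at level $k$, while the geometric growth of $(1-\rho)^{-k}$ guarantees the total cost is dominated by the last level reached. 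Everything else --~the monotonicity $\kclose(\cdot,\gamma_1)\geq\kclose(\cdot,\gamma_2)$ for $\gamma_1\leq\gamma_2$, the union bound over calls, and the geometric-series estimate~-- is routine.
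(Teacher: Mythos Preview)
Your proposal is correct and follows essentially the same approach as the paper: a linear search invoking the tolerant tester of~\autoref{theo:tol:testing:juntas:tradeoff} at each level, with level-dependent confidence budgets $\delta_k \propto 1/k^2$ (the paper uses $\frac{3\delta}{2\pi^2(k+1)^2}$), a union bound over all calls, and the geometric-series estimate for the query complexity exploiting $\rho = 1-\tfrac{1}{\sqrt{2}}$. Your explicit handling of $k=0$ via a separate constant-function test is a refinement the paper glosses over (it simply starts the loop at $k=0$), and your closing paragraph articulating \emph{why} the $1/k^2$ budgets are needed to avoid a spurious $\log n$ factor makes explicit a point the paper leaves implicit.
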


\begin{proposition}\label{prop:iso:testing:robust:junta}
 There exists an algorithm (\autoref{alg:testing:iso:robust:kjunta}) with query complexity $\tildeO{\frac{2^{k/2}}{\eps}}$ for testing of isomorphism of two unknown functions $f$ and $g$, under the premise that $f$ is close to $\junta[k]$. More precisely, there exist absolute constants $c>0$ and $\eps_0\in(0,1]$ such that, on inputs $k\in\N$, $\eps\in(0,\eps_0]$ and query access to functions $f,g\colon\bool^n\to\bool$, the algorithm has the following guarantees. Conditioned on $\dist{f}{\junta[k]}\leq c\eps$, it holds that:
  \begin{itemize}
    \item if $\distiso{f}{g} \leq c\eps$, then it outputs \accept with probability at least \verynew{$8/15$};
    \item if $\distiso{f}{g} > \eps$, then it outputs \reject with probability at least $8/15$.
  \end{itemize}
  Moreover, one can take \new{$c=\frac{1}{1750}$}, and \new{$\eps_0\eqdef \frac{16}{15} (5-2\sqrt{6}) \simeq 0.108$}. \end{proposition}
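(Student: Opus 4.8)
The plan is to prove~\autoref{prop:iso:testing:robust:junta} by combining two components, both developed in~\autoref{subsec:noisysamp-core}: a \emph{noisy sampler} for functions close to $k$-juntas, and a \emph{tolerant} adaptation of the isomorphism tester of~\cite{ABCGM:13}. A noisy sampler for a function $h$ that is $\gamma$-close to $\junta[k]$ is a procedure that, after a preprocessing phase, outputs i.i.d.\ labeled examples $(z,b)$ with $z\in\bool^n$ $O(\gamma)$-close in total variation to uniform and $b=\closejunta[k]{h}(z)$ except with probability $O(\gamma)$ --- i.e., after restriction to the (at most $k$) influential coordinates, a near-uniform sample of $\core{h}$. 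I would follow the construction of~\cite{CGM:11}, whose preprocessing amounts to running a junta tester and identifying the influential parts, but replace the weakly tolerant tester of~\cite{blais2009testing} by the tolerant tester of~\autoref{theo:tol:testing:juntas:tradeoff}, run with $\rho$ a fixed constant (say $\rho=1/4$). Then the preprocessing costs $O\!\big(\tfrac{k\log k}{\eps\rho(1-\rho)^k}\big)=\tilde O\big((4/3)^k/\eps\big)$, which since $4/3<\sqrt2$ is $o(2^{k/2}/\eps)$, while its tolerance $\tfrac{\rho\eps}{16}=\tfrac{\eps}{64}$ is a constant fraction of $\eps$. This last point is exactly what~\cite{CGM:11}'s sampler, built on the $\poly(\eps/k)$-tolerant tester of~\cite{blais2009testing}, does not give, and it is the reason we need a constant-$\rho$ regime of~\autoref{theo:tol:testing:juntas:tradeoff} here.

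The algorithm of~\autoref{alg:testing:iso:robust:kjunta} then builds a noisy sampler for $f$ --- valid by the premise $\dist{f}{\junta[k]}\le c\eps$ --- and attempts to build one for $g$, and feeds the resulting labeled examples to the adapted~\cite{ABCGM:13} tester. For $g$: in the completeness case, $\dist{g}{\junta[k]}=\dist{g\circ\pi}{\junta[k]}\le\dist{g\circ\pi}{f}+\dist{f}{\junta[k]}\le 2c\eps$ (taking $\pi$ realizing $\distiso{f}{g}$), so the sampler is valid; in the soundness case, if $g$ is too far from $\junta[k]$ for the sampler's internal call to~\autoref{theo:tol:testing:juntas:tradeoff} to accept, we \reject, and otherwise (with parameters set appropriately) the sampler for $g$ is accurate enough for the core-isomorphism test below to still reject. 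The~\cite{ABCGM:13} tester distinguishes $\distiso{f}{g}=0$ from $\distiso{f}{g}>\eps$ in $\tilde O(2^{k/2}/\eps)$ queries when $f$ is a $k$-junta, essentially by a birthday-paradox argument over the effective domain $\bool^k$; I would adapt it on two axes. \textbf{Robustness}: since $f,g$ are only close to isomorphic and only close to $\junta[k]$, the cores $\core{f},\core{g}$ are only $O(\eps)$-close to isomorphic, not equal, so I re-run the completeness/soundness analysis carrying the accumulated distances (the $O(\eps)$ noise of both samplers, the $O(\eps)$ non-exactness of the cores, and the slack of the birthday test) through the triangle inequality; balancing the resulting inequalities is what fixes $c=\tfrac1{1750}$ and the feasibility threshold $\eps_0=\tfrac{16}{15}(5-2\sqrt6)=\tfrac{16}{15}(\sqrt3-\sqrt2)^2$ (the $\sqrt6$ coming from a $2\sqrt{pq}$-type cross term in the collision count). \textbf{Access through the samplers}: the tester must be driven by near-uniform, approximately labeled examples rather than exact evaluations, and a hybrid argument over the $\tilde O(2^{k/2})$ examples drawn bounds the total-variation error this introduces --- which the same analysis of the queried-point distribution lets us absorb.

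Combining the two components and taking a union bound over the constantly many failure events --- the samplers succeeding (say with probability $\ge 4/5$) and the core-isomorphism test succeeding (with probability $\ge 2/3$) --- yields success probability $\ge 4/5\cdot2/3=8/15$, with query complexity $\tilde O(2^{k/2}/\eps)$ as claimed. I expect the robustness analysis to be the main obstacle: pinning down how the sampler noise, the inexact isomorphism of the cores, and the birthday-paradox error compound, and checking that with the explicit constants $c=1/1750$ and $\eps_0\simeq0.108$ the completeness threshold $c\eps$ and the soundness threshold $\eps$ stay separated. The noisy-sampler construction should be mostly a matter of verifying that~\cite{CGM:11}'s analysis still goes through once the inner junta tester's tolerance is a constant fraction of $\eps$.
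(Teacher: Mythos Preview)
Your approach is essentially the same as the paper's: build noisy samplers for $f$ and $g$ by plugging the tolerant tester of~\autoref{theo:tol:testing:juntas:tradeoff} into the~\cite{CGM:11} construction, then run a birthday-paradox collision test on the sampled core labels and count violating pairs against a threshold. Two quantitative details in your sketch do not line up with the stated constants, however. First, the paper takes $\rho=1-\tfrac{1}{\sqrt{2}}$, not $\rho=\tfrac14$; the completeness argument needs $\dist{g}{\junta[k]}\le 2c\eps\le \tfrac{\rho\eps'}{16}$ with $\eps'=\eps/16$, which forces $c\le \rho/512$, and $c=\tfrac{1}{1750}$ satisfies this for $\rho=1-\tfrac{1}{\sqrt2}\approx0.293$ but not for $\rho=\tfrac14$. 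Second, your guess that the $\sqrt6$ in $\eps_0$ arises from a ``$2\sqrt{pq}$ cross term in the collision count'' is off: it comes from the soundness side, where one needs each correctly-placed collision to be correctly \emph{labeled} by both samplers with probability at least $(1-3\eps')^2>\tfrac{24}{25}$, and solving $(1-3\eps')^2=\tfrac{24}{25}$ with $\eps'=\eps/16$ gives $\eps_0=\tfrac{16}{15}(5-2\sqrt6)$. Neither point affects the structure of the argument, only the arithmetic.
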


\noindent\autoref{theo:iso:testing:robust:kclose:detailed} follows by the combination of~\autoref{lemma:testing:iso:linear:search} and~\autoref{prop:iso:testing:robust:junta}.\medskip

\BPFOF{\autoref{theo:iso:testing:robust:kclose:detailed}}
Let $\rho \eqdef 1-\frac{1}{\sqrt{2}}$, and $\eps'=c \eps$. The algorithm proceeds as follows: it first invokes~\autoref{alg:testing:iso:linear:search} with inputs $f, g, \eps', \delta/2$, and gets by \autoref{lemma:testing:iso:linear:search}, a value $1\leq \kclose \leq n$ such that $\kclose(f,g,\eps')\leq \kclose \leq \kclose(f,g,\frac{\rho\eps'}{\new{16}})$ with probability at least $1-\frac{\delta}{2}$. In particular, conditioning on this we are guaranteed that either $f$ or $g$ is $\eps'$-close to some $\kclose$-junta (i.e., by our choice of $c$, one of the functions is $c \eps$-close to $\junta[\kclose]$). It then calls~\autoref{alg:testing:iso:robust:kjunta} with inputs $f,g,\kclose, \eps$ independently $O(\log\frac{1}{\delta})$ times (for probability amplification from $8/15$ to $1-\frac{\delta}{2}$), and accepts if and only if the majority of these executions returned \accept. The correctness of the algorithm follows from \autoref{prop:iso:testing:robust:junta} and the bound on the query complexity follows from the bounds in~\autoref{lemma:testing:iso:linear:search} and~\autoref{prop:iso:testing:robust:junta}.
\EPFOF

\subsubsection{Linear search: finding $\kclose$.}

Let $\Tester$ denote the algorithm of~\autoref{theo:tol:testing:juntas:tradeoff}, with probability of success amplified by standard techniques to $1-\delta$ for any $\delta \in(0,1]$ (at the price of a factor $\bigO{\log\frac{1}{\delta}}$ in its query complexity); and write $q_{\scalebox{.5}{\Tester}}(k,\eps,\rho,\delta) = \bigO{\frac{k\log k}{\eps\rho(1-\rho)^k}\log\frac{1}{\delta}}$ for its query complexity. \autoref{alg:testing:iso:linear:search}, given next, performs the linear search for $\kclose$: we then analyze its correctness and query complexity.

\begin{algorithm}[H]
  \caption{Junta Degree Finder($\oraclequery{f}, \oraclequery{g}, \eps', \rho,\delta$)\label{alg:testing:iso:linear:search}}
  \begin{algorithmic}[1]
      \State Set $\rho\gets 1-\frac{1}{\sqrt{2}}$ and let $\Tester$ be the algorithm of~\autoref{theo:tol:testing:juntas:tradeoff}.
      \For{$k=0$ to $n$}
        \State Call \Tester on $f$ with parameters $k$, $\eps'$, $\rho$, and $3\delta/(2\pi^2(k+1)^2)$.
        \State Call \Tester on $g$ with parameters $k$, $\eps'$, $\rho$, and $3\delta/(2\pi^2(k+1)^2)$.
        \If{either call to \Tester returned \accept} \Return $k$. \EndIf
      \EndFor
      \State \Return $n$
  \end{algorithmic}
\end{algorithm}

\BPFOF{\autoref{lemma:testing:iso:linear:search}}
By a union bound, all executions of $\Tester$ will be correct with probability at least $1-2\sum_{j=1}^\infty \frac{3\delta}{2\pi^2j^2} = 1-\frac{\delta}{2}$. Conditioning on this, the tester will accept for some $k$ between $\kclose(f,g,\eps')$ and \new{$\kclose(f,g,\rho \eps'/16)$}. This is true since as long as we invoke $\Tester$ with values $k$ such that $f$ and $g$ are $\eps'$-far from $\junta[k]$, both invocations of $\Tester$ will reject. Therefore, once we accept, we have that either $f$ or $g$ is at least $\eps'$-close to $\junta$. Hence, $k \geq \kclose(f,g,\eps')$. Also, $\Tester$ is guaranteed to accept on some $k'$ whenever invoked on a function that is \new{$\rho \eps'/16$}-close to $\junta[k']$. By definition, \new{$\kclose(f,g,\rho \eps'/16)$} is such a $k^\prime$ for either $f$ or $g$; hence, \new{$k \leq \kclose(f,g,\rho \eps'/16)$}.
	
In the case that all the executions of $\Tester$ returned correctly, the query complexity is
\[
    q(\eps,f,g) = \sum_{k=0}^{ \new{\kclose(f,g, \frac{\rho \eps}{16})} } 2q_{\scalebox{.5}{\Tester}}\mleft( k,\eps', \rho, \frac{3\delta}{2\pi^2(k+1)^2} \mright).
\]
By the expression of $q_{\scalebox{.5}{\Tester}}$, we get that $q(\eps,f,g)$ is upper bounded by
\[
q(\eps,f,g) \leq \frac{O(1)}{\eps \rho} \sum_{k=1}^{ k^\star } \frac{k\log k \log\frac{k}{\delta}}{(1-\rho)^k} \leq \frac{O(1)}{\eps } (k^\star \log k^\star)^2  2^{k^\star\log\frac{1}{1-\rho}} \log\frac{1}{\delta}
\]
where \new{$k^\star \eqdef \kclose(f,g, \frac{\rho\eps'}{16})$}. In particular, from the choice of $\rho$, we get
$
q(\eps,f,g) \leq  \tcolor{\bigO{2^{\frac{k^\star}{2} + o(k^\star)}  \frac{1}{\eps} \log\frac{1}{\delta} }}.
$
\medskip

\noindent (If not all the executions of the tester are successful, in the worst case the algorithm considers all possible values of $k$, before finally returning $n$. In this case, the query complexity is similarly bounded by \tcolor{$\bigO{ 2^{\frac{n}{2} + o(n)}  \frac{1}{\eps} \log\frac{1}{\delta} }$}.)

\EPFOF

\subsubsection{Noisy samplers and core juntas.}\label{subsec:noisysamp-core}

For a Boolean function $f\colon\bool^n\to\bool$ we denote by $\closejunta[k]{f}:\bool^n \to \bool$ the $k$-junta closest to $f$. That is, the function $h\in\junta[k]$ such that $\dist{f}{h}=\dist{f}{\junta[k]}$ (if this function is not unique, then we define $\closejunta[k]{f}$ to be the first according to lexicographic order). Moreover, following Chakraborty et al.~\cite{CGM:11},  for a $k$-junta $f\in\junta[k]$ (where we assume without loss of generality that $f$ depends on exactly $k$ variables) we define the \emph{core} of $f$, as follows. The core of $f$, denoted $\core{f}\colon\bool^k\to\bool$, is the  restriction of $f$ to its relevant variables (where these variables are numbered according to the natural order); so that for some $i_1\leq \dots \leq i_k\in[n]$ we have
\[
f(x) = \core{f}(x_{i_1},\dots,x_{i_k})
\]
for every $x\in\bool^n$.

\begin{definition}[{\cite[Definition 1]{CGM:11}}]
Let $g\colon \bool^k \to \bool$ be a function and let $\eta,\mu\in[0,1)$. An \emph{$(\eta,\mu)$-noisy sampler for $g$} is a probabilistic algorithm $\tilde{g}$ that on each execution outputs \verynew{a pair} $(x,a)\in\bool^k\times \bool$ such that
\begin{enumerate}[(i)]
\item For all $y\in \bool^k$, $\probaOf{ x=y } \in \left[\frac{1-\mu}{2^k},\frac{1+\mu}{2^k}\right]$;
\item $\probaOf{ a=g(x) } \geq 1-\eta$;
\item the pairs output on different executions are mutually independent. \end{enumerate}
An \emph{$\eta$-noisy sampler} is an $(\eta,0)$-noisy sampler, i.e., one that on each execution selects a uniformly random $x \in \bool^k$.
\end{definition}

Chakraborty et al.~\cite{CGM:11} show how to build an efficient $\bigO{\eps}$-noisy sampler for $\core{\closejunta[k]{f}}$, which is guaranteed to apply as long as $\dist{f}{\junta[k]} = \bigO{{\eps^6}/{k^{10}}}$. In more detail, they first run a modified version of the junta tester from~\cite{blais2009testing}, which, whenever it accepts, also returns some preprocessing information that enables one to build such a noisy sampler. Moreover, they show that this tester will indeed accept any function that is $\bigO{{\eps^6}/{k^{10}}}$-close to $\junta[k]$ (in addition to rejecting those $\eps$-far from it), giving the above guarantee. Using instead (a small modification of) our tolerant tester from~\autoref{sec:tradeoff}, we are able to extend their techniques to obtain the following~--~less efficient, but more robust~--~noisy sampler.

\begin{restatable}[Noisy sampler for close-to-junta functions]{proposition}{propnoisysampler}\label{prop:noisy:sampler}
There are algorithms $\Algo_P,\Algo_S$ (respectively preprocessor and sampler), which both require oracle access to a function $f\colon\bool^n\to\bool$, and satisfy the following properties.

\begin{itemize}
  \item The preprocessor $\Algo_P$ takes $\eps'\in(0,1]$, \new{$\rho\in(0,1)$}, $k\in\N$ as inputs, makes $\bigO{\frac{k\log\frac{k}{\eps'}}{\eps'\rho(1-\rho)^k}}$ queries to $f$, and either returns \fail or a state $\sigma\in\{0,1\}^{\poly(n)}$. The sampler $\Algo_S$ takes as input such a state $\sigma\in\{0,1\}^{\poly(n)}$, makes a single query to $f$, and outputs a pair $(x,a)\in\bool^k\times\bool$.
  We say that a state $\sigma$ is \emph{$\gamma$-good} if for some permutation $\pi\in\mathcal{S}_k$, $\Algo_S(\sigma)$ is a $\gamma$-noisy sampler for $\core{\closejunta[k]{f}}\circ\pi$.
  \item $\Algo_P(\eps',\rho,k)$ fulfills the following conditions:
    \begin{enumerate}[(i)]
      \item\label{noisy:sampler:completeness} If $\dist{f}{\junta[k]} \leq \frac{\rho}{\new{16}}\eps'$,  then with probability at least $4/5$, $\Algo_P$ returns a state $\sigma$  that is \new{$3\eps'$}-good.
      \item If $\dist{f}{\junta[k]} > \eps'$,  then with probability at least $4/5$, $\Algo_P$ returns \fail.
      \item If  $\dist{f}{\junta[k]} \leq \eps'$, then with probability at least $4/5$, $\Algo_p$ either returns \fail or returns  a state $\sigma$  that is \new{$3\eps'$}-good.
    \end{enumerate}
\end{itemize}
\end{restatable}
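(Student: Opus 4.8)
The plan is to follow the blueprint of Chakraborty et al.~\cite{CGM:11}, substituting their use of the (weakly tolerant) junta tester of~\cite{blais2009testing} with the tolerant tester of~\autoref{theo:tol:testing:juntas:tradeoff} (instantiated with our parameter $\rho$), and then re-deriving the guarantees (i)--(iii) with the improved tolerance. First I would recall the structure of the~\cite{CGM:11} construction: run a modified junta tester that, upon accepting, outputs not just a verdict but also the partition $\mI$ into $\ell = O(k^2)$ parts together with a list of $(\le k)$ ``blocks'' $I_{j_1},\dots,I_{j_r}$ whose union captures (almost all of) the influence of $f$; the preprocessor $\Algo_P$ additionally identifies, for each such block $I_{j_t}$, one representative coordinate of $f$ inside it (by binary search / independence testing within the block, at a cost of $O(\log\ell)$ queries per block). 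The state $\sigma$ then records $\mI$, the chosen blocks, and the representative coordinates. The sampler $\Algo_S(\sigma)$ draws $x\in\bool^n$ uniformly, queries $f(x)$, and outputs $(x|_{\text{reps}}, f(x))\in\bool^k\times\bool$, i.e.\ the projection of $x$ onto the $k$ recorded representative coordinates together with the label $f(x)$.

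Next I would establish the three items. For (i): if $\dist{f}{\junta[k]}\le \frac{\rho}{16}\eps'$, then by~\autoref{theo:tol:testing:juntas:tradeoff} (more precisely, by the part-junta version in~\autoref{algo:junta:tradeoff} / \autoref{lem:BlackboxBiasedAprox}) the tester accepts with high probability and the returned block collection $J$ satisfies $\infl[f]{\phi_{\mI}(\bar J)}$ small; combined with~\autoref{lemma:any:partition:completeness} and the fact that the $k$-junta $\closejunta[k]{f}$ has all its relevant variables inside $\phi_{\mI}(J)$, a union bound over the (at most $k$) blocks shows that with probability $\ge 4/5$ each recorded representative coordinate is a genuine relevant variable of $\closejunta[k]{f}$ sitting in its own block, so that the map $x\mapsto x|_{\text{reps}}$ is (up to the fixed permutation $\pi$ reindexing the relevant variables in natural order) a bijection onto $\bool^k$ pushing the uniform distribution to the uniform distribution — hence $\mu=0$. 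For the label: since $\dist{f}{\closejunta[k]{f}}\le\frac{\rho}{16}\eps'\le 3\eps'$ and $x$ is uniform, $\probaOf{f(x)\ne\closejunta[k]{f}(x)}\le\frac{\rho}{16}\eps'\le 3\eps'$, giving $\eta = 3\eps'$ (the slack absorbs the small constant probability that the preprocessing misidentifies a coordinate, by conditioning and adjusting constants). Item (ii) is immediate: if $\dist{f}{\junta[k]}>\eps'$ the tester rejects with probability $\ge 4/5$, so $\Algo_P$ returns \fail. Item (iii) is the ``intermediate regime'' $\frac{\rho}{16}\eps'<\dist{f}{\junta[k]}\le\eps'$, where the tester has no guarantee either way: here I would argue that conditioned on the tester accepting and returning a low-influence block collection $J$, the same reasoning as in (i) shows the output is a $3\eps'$-noisy sampler (now the label-noise bound $\probaOf{f(x)\ne\closejunta[k]{f}(x)}\le\eps'\le 3\eps'$ is what is used, which is exactly why the tolerance budget $3\eps'$ was chosen with a factor $3$); and conditioned on the tester rejecting, $\Algo_P$ returns \fail. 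Either way the conclusion of (iii) holds, up to the $4/5$ failure probability of the underlying tester-and-preprocessing step.

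The query complexity bound $\bigO{\frac{k\log(k/\eps')}{\eps'\rho(1-\rho)^k}}$ is then the query complexity of~\autoref{algo:junta:tradeoff} (run with accuracy $\eps'$ and bias $\rho$, which is $\bigO{\frac{k\log k}{\eps'\rho(1-\rho)^k}}$, after padding the failure probability to a small constant) plus the $O(k\log\ell)=O(k\log k)$ queries of the coordinate-identification preprocessing, which is absorbed; the extra $\log\frac1{\eps'}$ factor comes from the slightly higher accuracy needed inside~\autoref{lem:BlackboxBiasedAprox} so that the identified blocks are individually (not just collectively) low-influence.

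The main obstacle, and where the technical care from~\cite{CGM:11} genuinely has to be redone rather than cited, is establishing $\mu = 0$ in item (i)/(iii): one must verify that the preprocessing reliably extracts, from each of the $\le k$ influential blocks, exactly one coordinate that is relevant for the \emph{closest} $k$-junta $\closejunta[k]{f}$ (not merely ``somewhat influential for $f$''), and that distinct blocks yield the full set of $k$ relevant coordinates, so that the induced map on $\bool^n\to\bool^k$ is measure-preserving onto the uniform distribution. This requires (a) showing that when $f$ is close to a $k$-junta, its relevant-for-the-core coordinates are spread one-per-block across exactly $k$ of the $O(k^2)$ parts (which follows from~\autoref{lemma:any:partition:completeness} and a birthday-type argument over the random partition, as in~\autoref{lemma:random:partition:soundness}), and (b) a within-block search procedure whose correctness probability can be boosted to survive a union bound over the $k$ blocks — the delicate point being that influence is measured with respect to $f$ while we need a statement about $\closejunta[k]{f}$, so one pays an additive $\dist{f}{\junta[k]}$ error throughout and must keep it below the relevant influence gaps. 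Everything else is a bookkeeping exercise in tracking the constant-probability failure events and folding the label noise $\dist{f}{\closejunta[k]{f}}\le\eps'$ into the $3\eps'$ budget.
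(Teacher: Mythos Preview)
Your high-level plan---swap in the tolerant tester of~\autoref{theo:tol:testing:juntas:tradeoff} for the~\cite{blais2009testing} tester inside the~\cite{CGM:11} construction and rederive the three items---is exactly what the paper does. The gap is in your sampler. You have $\Algo_S$ draw $x\in\bool^n$ uniformly and output $(x|_{\text{reps}},f(x))$, which forces $\Algo_P$ to locate, inside each of the $k$ chosen blocks, the one coordinate that is actually relevant to $\closejunta[k]{f}$. This coordinate-identification step is both unnecessary and unworkable within the stated bounds. First, a block has expected size $n/\ell$, so binary search inside it costs $\Theta(\log n)$ influence estimates, not $O(\log\ell)$; this injects an $n$-dependence into the query complexity that the proposition explicitly excludes. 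Second, a coordinate relevant to $\closejunta[k]{f}$ can have influence as small as $2^{1-k}$, and after perturbation by $\dist{f}{\closejunta[k]{f}}\le\eps'$ its influence in $f$ may be indistinguishable from that of an irrelevant blockmate. Your own ``main obstacle'' paragraph flags exactly this, but the proposed fix (``keep the additive $\dist{f}{\junta[k]}$ error below the relevant influence gaps'') fails whenever $\eps'\gtrsim 2^{-k}$, which is the regime of interest.

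The paper (following~\cite{CGM:11}) sidesteps coordinate identification entirely. The sampler does \emph{not} draw $x$ uniformly but from a block-constant distribution $\D_{\mI}$: pick a balanced string $z\in\bool^\ell$ of weight $\ell/2$ and set $y_i=z_j$ for every $i\in I_j$. All coordinates in a block then share one value, and $\Algo_S$ simply reads off that common value for each of the $k$ chosen blocks---no need to know which coordinate inside the block is the relevant one. The price is that the induced marginal on $\bool^k$ is only $\mu$-close to uniform with $\mu\le 4k^2/\ell$; this is why the paper enlarges $\ell$ from $24k^2$ to $Ck^2/\eps'$ (which is the actual source of the extra $\log(1/\eps')$ in the query bound, not higher accuracy in~\autoref{lem:BlackboxBiasedAprox}) and then applies a conversion lemma from~\cite{CGM:11} to fold $\mu$ into the label noise. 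The analysis of item~(iii) likewise proceeds via an auxiliary junta $h$ supported on whichever \emph{subset} of $\closejunta[k]{f}$'s relevant variables landed in the chosen blocks, with $\dist{h}{\closejunta[k]{f}}$ absorbing the missed ones, rather than asserting that all relevant variables were captured.
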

The proof of \autoref{prop:noisy:sampler} is deferred to~\autoref{app:noisy:sampler}; indeed, it is almost identical to the proof of Proposition 4.16 in \cite{CGM:11}, with small adaptations required to comply with the use of the tolerant tester from~\autoref{sec:tradeoff} instead of the tester from~\cite{blais2009testing}.

We note that the main difference between the guarantees of our noisy sampler and those of the noisy sampler in~\cite[Lemma 2]{CGM:11} lies in the set of functions for which the noisy sampler is required to return a good state. In our case, this set consists of functions that are 	 \emph{somewhat} close to $k$-juntas. In comparison, the construction from~\cite{CGM:11} is more query-efficient (only $\tildeO{{k}/{\eps}}$ queries to $f$ in the preprocessing stage), but only guarantees the output of a noisy sampler for functions $f$ that are $\bigO{{\eps^6}/{k^{10}}}$-close to $\junta[k]$.\medskip

With these primitives in hand, we are almost ready to prove the main proposition of this subsection, \autoref{prop:iso:testing:robust:junta}. To state the algorithm (\autoref{alg:testing:iso:robust:kjunta}) and proceed with its analysis, we will require the following definition:
\begin{definition}[Number of violating pairs $V_\pi$]
Given two sets $Q_1,Q_2\subseteq \bool^k\times \bool$ and a permutation $\pi\in \mathcal{S}_k$ we say that pairs $(x,a_1)\in Q_1$ and $(y,a_2)\in Q_2$ are \emph{violating with respect to $\pi$}, if $y=\pi(x)$ and $a_1\neq a_2$. We denote the number of violating pairs with respect to $\pi$ by $V_\pi$.
\end{definition}

\begin{algorithm}[H]
\caption{Tolerant isomorphism testing to an unknown $f$ such that $\dist{f}{\junta[k]}\leq c\eps$ ($\oraclequery{f}, \oraclequery{g}, \eps, k$)\label{alg:testing:iso:robust:kjunta}}
  \begin{algorithmic}[1]
  \State Let $\Algo_P,\Algo_S$ be as in~\autoref{prop:noisy:sampler}, $\rho \gets 1- \frac{1}{\sqrt{2}}$, $\eps'\gets\frac{\eps}{\new{16}}$, $\alpha \gets 4c\eps$.
  \State $s\gets C\frac{2^{k/2}}{\eps}\sqrt{k\ln k}$, $t\gets (3\alpha+9\eps')\frac{s^2}{2^k}$. \Comment{\;\;$C>1$ is an absolute constant.}
  \State Run the preprocessor $\Algo_P$ on $f$ and $g$ with parameters $\eps'$, $\rho$, $k$.  \label{step:robust:iso:juntatest}
  \If{either invocation of $\Algo_P$ returned \fail}
	  \State\label{step:robust:iso:juntatest:reject} \Return\reject.
  \EndIf
  \State Using the \new{$3\eps'$}-noisy sampler $\Algo_S$ (called with the states returned on Step~\ref{step:robust:iso:juntatest}), construct ``core'' sets $Q_f,Q_g\subseteq \bool^k\times \bool$ each of size $s\gets C\frac{2^{k/2}}{\eps}\sqrt{k\ln k}$. \label{step:robust:iso:getcoresets}
  \If {there exist $\pi\in \mathcal{S}_k$ such that $V_\pi\leq t$ }
      \State\Return\accept.
  \EndIf
  \State \Return\reject.

  \end{algorithmic}
\end{algorithm}

\BPFOF{\autoref{prop:iso:testing:robust:junta}}
	The query complexity is the sum of the query complexities from Steps~\ref{step:robust:iso:juntatest} and~\ref{step:robust:iso:getcoresets}, i.e.,
	\[
	\bigO{\frac{k\log\frac{k}{\eps}}{\eps\rho(1-\rho)^k}} + 2s\cdot 1 = \bigO{\frac{2^{k/2}}{\eps}k\log\frac{k}{\eps} + \frac{2^{k/2}}{\eps}\sqrt{k\ln k}}
	= \bigO{\frac{2^{k/2}}{\eps}k\log\frac{k}{\eps}}.
	\]
	\paragraph{Completeness.}
	Assume that $g$ is $c\eps$-close to isomorphic to $f$, which itself is $c\eps$-close to being a $k$-junta. Therefore, by the triangle inequality and by our choice of \new{$c \leq \frac{\rho}{512}$}, $\dist{g}{\junta[k]}\leq 2c \eps  \leq \rho\eps'/\new{16}$ as well, so that with probability at least $3/5$ the algorithm does not output $\reject$ on Step~\ref{step:robust:iso:juntatest:reject} (we thereafter analyze this case). Moreover, by the triangle inequality there exists a permutation $\pi\in\mathcal{S}_n$ such that $\dist{\closejunta{f}}{\closejunta{g}\circ\pi} \leq 2c\eps + 2c \eps =4c\eps \eqdef \alpha$. In particular, this implies that there exists a permutation $\pi^\ast\in\mathcal{S}_k$ such that $\dist{\core{\closejunta{f}}}{\core{\closejunta{g}}\circ\pi^\ast} \leq \alpha$. Let $T^\ast \subseteq \bool^k$ be the disagreement set between $\core{\closejunta{f}}$ and $\core{\closejunta{g}}\circ\pi^\ast$: by the above $\abs{T^\ast} \leq \alpha 2^k$.

	Let $Q_f^s, Q_g^s \subseteq \bool^k$ denote the sets resulting from taking the first element in each pair in $Q_f$ and $Q_g$ respectively. The size of the intersection $Z\eqdef\abs{Q_f^s\cap T^\ast}$ is distributed as a \newer{Binomial} random variable, namely
	$Z \sim \newer{\binomial{s}{\frac{\abs{T^\ast}}{2^k}}}$, and conditioned on $Z$ we have $Z^\ast\eqdef\abs{Q_f^s\cap Q_g^s \cap T^\ast}\sim \newer{\binomial{s}{\frac{Z}{2^k}}}$. In particular, we get
	\[
	\expect{Z} = \frac{s \abs{T^\ast}}{2^k}, \quad
	\expectCond{Z^\ast  }{ Z} =  \expectCond{\abs{Q_f^s\cap Q_g^s\cap T^\ast}}{ \abs{Q_f^s\cap T^\ast} } = \frac{s Z }{2^{k}} \;.
		\]
						
	Let $\Algo_S^f$ denote the noisy sampler algorithm when invoked for $f$, and for every $x \in Q_f^s $ let $\Algo_S^f(x)$ denote the label given to $x$ by $\Algo_S^f$. Since $\Algo_S^f$ is a $\new{3\eps'}$-noisy sampler for $\core{\closejunta{f}}$, $\Pr[\Algo_S^f(x) \neq \core{\closejunta{f}}(x)] \leq \new{3\eps'}$. An analogous statement holds for $g$. We let
	$N\eqdef\dabs{ \{x \in Q_f^s \cap Q_g^s \;\colon\; \Algo_S^f(x) \neq \core{\closejunta{f}}(x) \text{ or } \Algo_S^g(x) \neq \core{\closejunta{g}}(x) \} }$ be the number of common samples incorrectly labelled by either noisy sampler,	and observe that  $N$ is dominated by a Binomial random variable $\tilde{N}\sim \binomial{ \abs{Q_f^s\cap Q_g^s} }{ \new{6\eps'} }$.
	
	\noindent With this in hand, we can bound $\probaOf{ V_{\pi^\ast} > t }$ as follows (recall that $t=3\alpha+9\eps'$):
	\begin{align*}
	\probaOf{ V_{\pi^\ast} > (3\alpha+9\eps')\frac{s^2}{2^k} }
			&\leq   \probaOf{ \abs{Q_f^s\cap Q_g^s\cap T^\ast} >  3\alpha\frac{s^2}{2^k} } + \probaOf{ N > 9\eps' \frac{s^2}{2^k} } \\
	&\leq   \probaOf{ \abs{Q_f^s\cap Q_g^s\cap T^\ast} >  3\alpha\frac{s^2}{2^k} } + \probaOf{ \tilde{N} > 9\eps' \frac{s^2}{2^k} } .
	\end{align*}
	Recall that $Z^\ast = \abs{Q_f^s\cap Q_g^s\cap T^\ast} $. Since $\probaOf{ \abs{Q_f^s\cap Q_g^s\cap T^\ast} >  3\alpha\frac{s^2}{2^k} }$ is maximized when $\abs{T^\ast}$ is maximal, we assume without loss of generality that $\abs{T^\ast} = \alpha 2^k$.
	We will handle each term separately. 	
	\begin{align*}
	\probaOf{ Z^\ast >  \frac{3}{2} \cdot \alpha\frac{s^2}{2^k} }
	&= \probaOf{ Z^\ast >  \frac{3}{2}\frac{s^2\abs{T^\ast}}{2^{2k}} }
	\\&= \probaCond{ Z^\ast >  \frac{3}{2}\frac{s^2\abs{T^\ast}}{2^{2k}} }{ Z > \frac{5}{4} \frac{s\abs{T^\ast}}{2^k} } \cdot \probaOf{ Z > \frac{5}{4} \frac{s\abs{T^\ast}}{2^k} }
	\\ &\phantom{=}+
	\probaCond{ Z^\ast >  \frac{3}{2}\frac{s^2\abs{T^\ast}}{2^{2k}} }{ Z \leq \frac{5}{4} \frac{s\abs{T^\ast}}{2^k} } \cdot \probaOf{ Z \leq \frac{5}{4} \frac{s\abs{T^\ast}}{2^k} }
	\\& \leq \probaOf{ Z > \frac{5}{4} \frac{s\abs{T^\ast}}{2^k} }  + \probaCond{ Z^\ast >  \frac{3}{2}\frac{s^2\abs{T^\ast}}{2^{2k}} }{ Z \leq \frac{5}{4} \frac{s\abs{T^\ast}}{2^k} } \;.
	\end{align*}
	We again bound the two terms separately. By the assumption that $\abs{T^\ast} = \alpha2^k$ and by the choice of~$s$, 	\begin{align*}
		\probaOf{ Z > \frac{5}{4} \frac{s\abs{T^\ast}}{2^k} }  = \probaOf{ Z > \frac{5}{4} \expect{Z} } < \exp\left(-\frac{1}{2}\cdot \left(\frac{1}{4}\right)^2 \cdot \frac{s\abs{T^\ast}}{2^k}\right) < \frac{1}{30} .
	\end{align*}
	As for the second term, since $\expect{Z^\ast}= \frac{s Z}{2^k}$ and by the assumption on $T^\ast$ and the setting of $s$,
	\begin{align*}
	 \probaCond{ Z^\ast >  \frac{3}{2}\frac{s^2\abs{T^\ast}}{2^{2k}} }{ Z < \frac{5}{4} \frac{s\abs{T^\ast}}{2^k} }
	 &\leq 	\probaCond{ Z^\ast >  \frac{3}{2}\frac{s^2\abs{T^\ast}}{2^{2k}} }{ Z = \frac{5}{4} \frac{s\abs{T^\ast}}{2^k} }
	 \\ &= \probaCond{ Z^\ast >  \frac{6}{5}\expect{Z^\ast} }{ Z = \frac{5}{4} \frac{s\abs{T^\ast}}{2^k} }
	 \\ &< \exp\left(-\frac{1}{2}\cdot \left(\frac{1}{5}\right)^2 \cdot \frac{s}{2^k} \frac{s\abs{T^\ast}}{2^k}\right) < \frac{1}{30}
	 \end{align*}
	 for a sufficiently large constant $C$ in the definition of $s$.

\noindent As for \new{the last term of the initial expression,} since $\expect{\tilde{N}} = \new{6} \eps' \abs{Q_f^s \cap Q_g^s}$ we have,
	\begin{align*}
	\probaOf{ \tilde{N} > 9\eps' \frac{s^2}{2^k} }
	&\leq \probaCond{ \tilde{N} > 9\eps' \frac{s^2}{2^k} }{ \abs{Q_f^s\cap Q_g^s} \leq \frac{5}{4}\frac{s^2}{2^k} }\cdot\probaOf{ \abs{Q_f^s\cap Q_g^s} \leq \frac{5}{4}\frac{s^2}{2^k} }
	+ \probaOf{ \abs{Q_f^s\cap Q_g^s} > \frac{5}{4}\frac{s^2}{2^k} } \\
	&\leq \probaCond{ \tilde{N} >  9\eps' \frac{s^2}{2^k}}{ \abs{Q_f^s\cap Q_g^s} = \frac{5}{4}\frac{s^2}{2^k} } + \probaOf{ \abs{Q_f^s\cap Q_g^s} > \frac{5}{4}\frac{s^2}{2^k} } \\
	&\leq \probaCond{ \tilde{N} > \frac{6}{5}\expect{\tilde{N}} }{ \abs{Q_f^s\cap Q_g^s} = \frac{5}{4}\frac{s^2}{2^k} } + \probaOf{ \abs{Q_f^s\cap Q_g^s} > \frac{5}{4}\frac{s^2}{2^k} } \\
	&< \exp\left(-\frac{1}{2} \cdot \left(\frac{1}{6}\right)^2 \cdot  \frac{16 \eps' \cdot 5s^2 }{4 \cdot 2^k} \right)
	+ \exp\left(-\frac{1}{2} \cdot \left(\frac{1}{4}\right)^2 \cdot  \frac{s^2 }{2^k} \right) \leq \frac{1}{15} \tag{Actually $o(1)$.}.
	\end{align*}
	The algorithm will therefore reject with probability at most $\frac{2}{5}+\frac{1}{15}+\frac{1}{15} = \frac{7}{15}$.

	\paragraph{Soundness.} Assume that $\dist{f}{\junta} \leq \verynew{c\eps}$, and that $g$ is $\eps$-far from being isomorphic to $f$. Then one of the following must hold:
	\begin{enumerate}
		\item $\dist{g}{\junta} > \eps'$.
		\item for all $\pi\in \mathcal{S}_k$, $\dist{\core{\closejunta{f}}}{\core{\closejunta{g}}\circ{\pi}} > \eps - (\eps' + c \eps) > \eps-2\eps'$.
	\end{enumerate}
	If the first case holds, then the function will be rejected in Step~\ref{step:robust:iso:juntatest} with probability at least $\frac{4}{5}$, and so the algorithm will reject as desired. We can therefore focus on the second case.
	
	If the second case holds, either the tester rejects in Step~\ref{step:robust:iso:juntatest:reject} (and we are done) or it outputs a state which will be used to get the $\new{3\eps'}$-noisy sampler. Fix any $\pi\in \mathcal{S}_k$. Since $\dist{\core{\closejunta{f}}}{\core{\closejunta{g}}\circ{\pi}} > (\eps-2\eps')$, there are $m\eqdef m(\pi) \geq (\eps-2\eps')2^k$ inputs $x\in\bool^k$ such that $\core{\closejunta{f}}(x)\neq\core{\closejunta{g}}\circ{\pi}(x)$. Let $T=T(\pi)\subseteq \bool^k$ denote the set of all such inputs (so that $\abs{T}=m$).
	
	We can make a similar argument as for the completeness case: we have that $\abs{Q_f^s\cap T}$ is a random variable with \newer{Binomial} distribution (of parameters $s$, \newer{and $\frac{\abs{T}}{2^k}$}).  	Conditioned on $\abs{Q_f^s\cap T}$, we also have $\abs{Q_f^s\cap Q_g^s\cap T}\sim\newer{\binomial{s}{\frac{\abs{Q_f^s\cap T}}{2^k}}}$, so that
	\[
	\expect{\abs{Q_f^s\cap Q_g^s\cap T}} = \expect{ \expectCond{ \abs{Q_f^s\cap Q_g^s\cap T} }{ \abs{Q_f^s\cap T} } }
	= \expect{ \frac{s \abs{Q_f^s\cap T}}{2^k} }
	= \frac{s^2 \abs{T}}{2^{2k}}  \geq (\eps-2\eps')\frac{s^2}{2^k} = \new{14}\eps'\frac{s^2}{2^k}.
	\]
	(Recall that our threshold was set to $t=(3\alpha+9\eps')\frac{s^2}{2^k} \leq \new{12}\eps'\frac{s^2}{2^k}$.) Moreover, each element  $x \in Q_f^s\cap Q_g^s\cap T$ will contribute to $V_\pi$ with probability at least $\left(1-\new{3\eps'}\right)^2 > \frac{24}{25}$ (since this is a lower bound on the probability that both $\Algo_S^f(x)= \core{\closejunta{f}} (x)$ and $\Algo_S^g(x)= \core{\closejunta{g}} (x)$, \new{and as $\eps' \leq \frac{\eps_0}{16}$}). As before, we can therefore write, letting $Z\eqdef \abs{Q_f^s\cap Q_g^s\cap T}$, and taking $\abs{T}$  to be minimal  so that $\abs{T}=(\eps - 2\eps')2^k$,
	\begin{align*}
	\probaOf{ V_{\pi} > t }
	&\geq \probaCond{ V_{\pi} > t }{ Z \geq \frac{13}{12}t }\probaOf{ Z \geq \frac{13}{12}t }\\
	&\geq (1-e^{-\frac{1}{2}\left(\frac{1}{26}\right)^2\cdot\frac{24}{25}\cdot\frac{13t}{12}})\probaOf{ Z \geq \frac{13}{12}t } \tag{Chernoff bound} 	= \left(1-e^{-\frac{t}{1300}}\right)\probaOf{ Z \geq \frac{13}{12}t }
	\end{align*}
	so that it is sufficient to lower bound $\probaOf{ Z \geq \frac{13}{12}t }$. To do so, we will bound the probability of the two following events:
	\begin{description}
		\item[E1:] $Y\eqdef \abs{Q_f^s\cap T} < \frac{99}{100} \frac{s\abs{T}}{2^k}$
		\item[E2:] $Z=\abs{Q_f^s\cap Q_g^s\cap T} < \frac{99}{100} \frac{s}{2^k}\abs{Q_f^s\cap T}$, conditioning on $\abs{Q_f^s\cap T} \geq \frac{99}{100} \frac{s\abs{T}}{2^k}$.
	\end{description}
	This will be sufficient for us to conclude, as by our choice of $t=(3\alpha+9\eps')\frac{s^2}{2^k}$,the setting $|T|=(\eps -2\eps')2^k=\frac{7}{8}\eps2^k$, and since $\alpha \leq \eps'$,
	we have 
\[\frac{13}{12}\cdot t=\frac{13}{12}\cdot 12\eps'\cdot \frac{s^2}{2^k}=\frac{13}{16}\cdot \eps \cdot \frac{s^2}{2^k} \leq \left(\frac{99}{100}\right)^2\frac{s^2|T|}{2^{2k}}\;.\]	
Therefore, by a Chernoff bound
	\begin{align*}
	\probaOf{ Z < \frac{13}{12}t }
	&\leq \probaOf{ Z < \left(\frac{99}{100}\right)^2 \frac{\new{s^2\abs{T}}}{2^{2k}} } \\
	&\leq  \probaOf{ Y < \frac{99}{100} \frac{s\abs{T}}{2^k} } +
	\probaCond{ Z < \frac{99}{100} \frac{s}{2^k} Y  }{ Y \geq \frac{99}{100} \frac{s\abs{T}}{2^k} }\probaOf{ Y \geq \frac{99}{100} \frac{s\abs{T}}{2^k} }
	\\ &\leq  \probaOf{ Y < \frac{99}{100} \frac{s\abs{T}}{2^k} } +
	\probaCond{ Z < \frac{99}{100} \frac{s}{2^k} Y  }{ Y \geq \frac{99}{100} \frac{s\abs{T}}{2^k} }
	\\ &< \exp\left(- \frac{1}{2} \cdot \left(\frac{1}{100}\right)^2 \cdot \frac{s\abs{T}}{2^k} \right) +
	 \probaCond{ Z < \frac{99}{100} \frac{sY}{2^k} }{ Y= \frac{99}{100} \frac{s\abs{T}}{2^k} }
	 \\ &\leq \exp\left(- \frac{1}{2} \cdot \left(\frac{1}{100}\right)^2 \cdot \frac{s (\eps -2\eps')\cdot 2^k}{2^k} \right) +
	 \exp\left(- \frac{1}{2} \cdot \left(\frac{2}{100}\right)^2 \cdot \frac{s^2 (\eps -2\eps')\cdot 2^k}{2^{2k}} \right)
	 \\& < \exp(-\tau C^2 k \ln k),
	 \end{align*}
	by the choice $s=C \frac{2^{k/2}}{\eps}\sqrt{k \ln k}$, and for some constant $\tau\in(0,1)$. Hence setting $C$ to a sufficiently large constant, the foregoing analysis implies that $\probaOf{V_{\pi} \leq t }\leq e^{-\frac{t}{1300}} + e^{-\tau C^2 k \ln k} = e^{-\frac{12c+3/4}{1300}\eps {s^2}/{2^k}} + e^{-\tau C^2 k \ln k} \leq \frac{7}{15k^k}$. A union bound over all $k!< k^k$ permutations $\pi\in\mathcal{S}_k$ finally yields $\probaOf{ \exists \pi,\, V_{\pi} \leq t } \leq \frac{7}{15}$ as claimed.

\EPFOF

	 \section{Acknowledgments}
	The authors would like to thank Lap Chi Lau for useful discussions, and in particular his suggestion to look at~\cite{LSW:15}. Talya Eden is grateful to the Azrieli Foundation for the award of an Azrieli Fellowship.												
	\bibliographystyle{alpha}
	\bibliography{references}

\newcommand{\etalchar}[1]{$^{#1}$}
\begin{thebibliography}{KNOW14}

\bibitem[AB10]{AB:10}
Noga Alon and Eric Blais.
\newblock Testing {B}oolean function isomorphism.
\newblock In Maria~J. Serna, Ronen Shaltiel, Klaus Jansen, and Jos{\'{e}} D.~P.
  Rolim, editors, {\em {APPROX}--{RANDOM} 2010}, volume 6302 of {\em Lecture
  Notes in Computer Science}, pages 394--405. Springer, 2010.

\bibitem[ABC{\etalchar{+}}13]{ABCGM:13}
Noga Alon, Eric Blais, Sourav Chakraborty, David Garc{\'{\i}}a{-}Soriano, and
  Arie Matsliah.
\newblock Nearly tight bounds for testing function isomorphism.
\newblock {\em SIAM Journal on Computing}, 42(2):459--493, 2013.

\bibitem[ACCL07]{ACCL}
Nir Ailon, Bernard Chazelle, Seshadhri Comandur, and Ding Liu.
\newblock Estimating the distance to a monotone function.
\newblock {\em Random Structures and Algorithms}, 31(3):371--383, 2007.

\bibitem[Bac13]{Bach:13:submodular}
Francis~R. Bach.
\newblock Learning with submodular functions: {A} convex optimization
  perspective.
\newblock {\em Foundations and Trends in Machine Learning}, 6(2-3):145--373,
  2013.

\bibitem[Bar75]{Baranyai:75}
Zsolt Baranyai.
\newblock On the factorization of the complete uniform hypergraph.
\newblock In {\em Infinite and finite sets (Colloq., Keszthely, 1973; dedicated
  to P. Erd\H{o}s on his 60th birthday)}, volume~1, pages 91--108, 1975.

\bibitem[BL97]{blum1997selection}
Avrim~L. Blum and Pat Langley.
\newblock Selection of relevant features and examples in machine learning.
\newblock {\em Artificial intelligence}, 97(1):245--271, 1997.

\bibitem[Bla08]{blais2008improved}
Eric Blais.
\newblock Improved bounds for testing juntas.
\newblock In {\em Approximation, Randomization and Combinatorial Optimization.
  Algorithms and Techniques}, pages 317--330. Springer, 2008.

\bibitem[Bla09]{blais2009testing}
Eric Blais.
\newblock Testing juntas nearly optimally.
\newblock In {\em Proceedings of STOC}, pages 151--158. ACM, 2009.

\bibitem[Bla12]{Blais:PhD:12}
Eric Blais.
\newblock {\em Testing properties of Boolean functions}.
\newblock PhD thesis, CMU, 2012.

\bibitem[Blu94]{blum1994relevant}
Avrim~L. Blum.
\newblock Relevant examples and relevant features: Thoughts from computational
  learning theory.
\newblock In {\em AAAI Fall Symposium on ‘Relevance}, volume~5, 1994.

\bibitem[BMR16]{BMR:16}
Piotr Berman, Meiram Murzabulatov, and Sofya Raskhodnikova.
\newblock Tolerant testers of image properties.
\newblock In {\em Proceedings of ICALP}, volume~55 of {\em LIPIcs}, pages
  90:1--90:14. Schloss Dagstuhl - Leibniz-Zentrum fuer Informatik, 2016.

\bibitem[Bra15]{Brandt:2015:thesis}
Madeline~V. Brandt.
\newblock Intersecting hypergraphs and decompositions of complete uniform
  hypergraphs.
\newblock {B.A. Thesis}, Reed College, 2015.

\bibitem[CFGM12]{CFGM:12}
Sourav Chakraborty, Eldar Fischer, David Garc{\'\i}a{-}Soriano, and Arie
  Matsliah.
\newblock Junto-symmetric functions, hypergraph isomorphism and crunching.
\newblock In {\em Conference on Computational Complexity}, pages 148--158.
  IEEE, 2012.

\bibitem[CG04]{chockler2004lower}
Hana Chockler and Dan Gutfreund.
\newblock {A Lower Bound for Testing Juntas}.
\newblock {\em Information Processing Letters}, 90(6):301--305, 2004.

\bibitem[CGM11]{CGM:11}
Sourav Chakraborty, David Garc{\'{\i}}a{-}Soriano, and Arie Matsliah.
\newblock Efficient sample extractors for juntas with applications.
\newblock volume 6755 of {\em Lecture Notes in Computer Science}, pages
  545--556. Springer, 2011.

\bibitem[CGR13]{CGR:13}
Andrea Campagna, Alan Guo, and Ronitt Rubinfeld.
\newblock Local reconstructors and tolerant testers for connectivity and
  diameter.
\newblock In {\em {APPROX-RANDOM}}, volume 8096 of {\em Lecture Notes in
  Computer Science}, pages 411--424. Springer, 2013.

\bibitem[DLM{\etalchar{+}}07]{diakonikolas2007testing}
Ilias Diakonikolas, Homin~K Lee, Kevin Matulef, Krzysztof Onak, Ronitt
  Rubinfeld, Rocco~A Servedio, and Andrew Wan.
\newblock Testing for concise representations.
\newblock In {\em Proceedings of FOCS}, pages 549--558. IEEE, 2007.

\bibitem[FF06]{FF:06}
Eldar Fischer and Lance Fortnow.
\newblock {Tolerant Versus Intolerant Testing for Boolean Properties}.
\newblock {\em Theory of Computing}, 2(9):173--183, 2006.

\bibitem[FKR{\etalchar{+}}04]{FKRSS:04}
Eldar Fischer, Guy Kindler, Dana Ron, Shmuel Safra, and Alex Samorodnitsky.
\newblock Testing juntas.
\newblock {\em Journal of Computer and System Sciences}, 68(4):753--787, 2004.

\bibitem[FN07]{FN:07}
Eldar Fischer and Ilan Newman.
\newblock Testing versus estimation of graph properties.
\newblock {\em SIAM Journal on Computing}, 37(2):482--501, 2007.

\bibitem[FR10]{fattal2010approximating}
Shahar Fattal and Dana Ron.
\newblock Approximating the distance to monotonicity in high dimensions.
\newblock {\em ACM Transactions on Algorithms (TALG)}, 6(3):52, 2010.

\bibitem[GLS12]{grotschel2012geometric}
Martin Gr{\"o}tschel, L{\'a}szl{\'o} Lov{\'a}sz, and Alexander Schrijver.
\newblock {\em Geometric algorithms and combinatorial optimization}, volume~2.
\newblock Springer Science \& Business Media, 2012.

\bibitem[GR05]{GR:05}
Venkatesan Guruswami and Atri Rudra.
\newblock Tolerant locally testable codes.
\newblock In {\em {APPROX-RANDOM}}, volume 3624 of {\em Lecture Notes in
  Computer Science}, pages 306--317. Springer, 2005.

\bibitem[JL10]{john2010elements}
Z.~Q. John~Lu.
\newblock The elements of statistical learning: data mining, inference, and
  prediction.
\newblock {\em Journal of the Royal Statistical Society: Series A (Statistics
  in Society)}, 173(3):693--694, 2010.

\bibitem[KNOW14]{kothari2014testing}
Pravesh Kothari, Amir Nayyeri, Ryan O'Donnell, and Chenggang Wu.
\newblock Testing surface area.
\newblock In {\em Proceedings of the Twenty-Fifth Annual ACM-SIAM Symposium on
  Discrete Algorithms}, pages 1204--1214. Society for Industrial and Applied
  Mathematics, 2014.

\bibitem[KR98]{kearns1998testing}
Michael Kearns and Dana Ron.
\newblock Testing problems with sub-learning sample complexity.
\newblock In {\em Proceedings of COLT}, pages 268--279. ACM, 1998.

\bibitem[KS09]{KS:09}
Swastik Kopparty and Shubhangi Saraf.
\newblock Tolerant linearity testing and locally testable codes.
\newblock In {\em {APPROX-RANDOM}}, volume 5687 of {\em Lecture Notes in
  Computer Science}, pages 601--614. Springer, 2009.

\bibitem[LSW15]{LSW:15}
Yin~Tat Lee, Aaron Sidford, and Sam Chiu-wai Wong.
\newblock A faster cutting plane method and its implications for combinatorial
  and convex optimization.
\newblock In {\em Proceedings of FOCS}, pages 1049--1065. IEEE, 2015.

\bibitem[MOS03]{mossel2003learning}
Elchanan Mossel, Ryan O'Donnell, and Rocco~P Servedio.
\newblock Learning juntas.
\newblock In {\em Proceedings of STOC}, pages 206--212. ACM, 2003.

\bibitem[MR09]{MR:09}
Sharon Marko and Dana Ron.
\newblock Approximating the distance to properties in bounded-degree and
  general sparse graphs.
\newblock {\em {ACM} Trans. Algorithms}, 5(2), 2009.

\bibitem[PR02]{parnas2002testing}
Michal Parnas and Dana Ron.
\newblock Testing the diameter of graphs.
\newblock {\em Random Structures \& Algorithms}, 20(2):165--183, 2002.

\bibitem[PRR06]{parnas2006tolerant}
Michal Parnas, Dana Ron, and Ronitt Rubinfeld.
\newblock Tolerant property testing and distance approximation.
\newblock {\em Journal of Computer and System Sciences}, 72(6):1012--1042,
  2006.

\bibitem[Sch02]{schrijver2002combinatorial}
Alexander Schrijver.
\newblock {\em Combinatorial optimization: polyhedra and efficiency},
  volume~24.
\newblock Springer Science \& Business Media, 2002.

\bibitem[SF11]{SF:11}
Zoya Svitkina and Lisa Fleischer.
\newblock Submodular approximation: Sampling-based algorithms and lower bounds.
\newblock {\em SIAM Journal on Computing}, 40(6):1715--1737, 2011.

\bibitem[STW15]{ServedioTW:15}
Rocco~A. Servedio, Li{-}Yang Tan, and John Wright.
\newblock Adaptivity helps for testing juntas.
\newblock In {\em Conference on Computational Complexity}, volume~33 of {\em
  LIPIcs}, pages 264--279. Schloss Dagstuhl - Leibniz-Zentrum fuer Informatik,
  2015.

\bibitem[Tel16]{tell2016note}
Roei Tell.
\newblock A note on tolerant testing with one-sided error.
\newblock In {\em Electronic Colloquium on Computational Complexity (ECCC)},
  volume~23, page~32, 2016.

\bibitem[Val15]{valiant2015finding}
Gregory Valiant.
\newblock Finding correlations in subquadratic time, with applications to
  learning parities and the closest pair problem.
\newblock {\em Journal of the ACM}, 62(2):13, 2015.

\end{thebibliography}
	
	\clearpage
	\appendix
	\makeatletter{}\section{Proof of~\autoref{prop:noisy:sampler} (construction of a noisy sampler)} \label{app:noisy:sampler}

We provide in this appendix the proof of~\autoref{prop:noisy:sampler}, restated below:

\propnoisysampler*

We will very closely follow the argument from the full version of~\cite{CGM:11} (Proposition 4.16),\footnote{The full version can be found at~\url{http://www.cs.technion.ac.il/~ariem/eseja.pdf}.} adapting the corresponding parts in order to obtain our result. For completeness, we tried to make this appendix below self-contained, reproducing almost verbatim several parts of the proof from~\cite{CGM:11}.\footnote{The reader may notice that Chakraborty et al. rely on a definition of set-influence that differs from ours by a factor $2$; we propagated the changes through the argument.}\medskip

\BPFOF{\autoref{prop:noisy:sampler}}
In order to use our result from~\autoref{sec:tradeoff} \textit{in lieu} of the junta tester from~\cite{blais2009testing}, we first need to make a small modification to our algorithm. Specifically, in its first step our tester will now pick a random partition $\mI$ of $[n]$ in $\ell\eqdef \frac{Ck^2}{\eps}$ parts instead of $24k^2$ (for some (small) absolute constant $C>1$). It is easy to check that both~\autoref{lemma:random:partition:soundness} and~\autoref{lemma:any:partition:completeness} still hold (e.g., from the proof of~\cite[Lemma 5.4]{Blais:PhD:12}), now with probability at least $19/20$. Moreover, our modified tolerant tester offers the same soundness and completeness guarantees as~\autoref{theo:tol:testing:juntas:tradeoff}, at the price of a query complexity $\bigO{\frac{k\log({k}/{\eps})}{\eps\rho(1-\rho)^k}}$ (instead of $\bigO{\frac{k\log k}{\eps\rho(1-\rho)^k}}$). Moreover, in Step~\ref{algo:tradeoff:noisy:influence:accept} of~\autoref{algo:tradeoff:noisy:influence}, i.e. when the algorithm found a suitable set $J\subseteq [\ell]$ \new{(of size $\ell-k$)} as a witness for accepting, we make the algorithm return $\mI$ and the set $\mathcal{J}\eqdef\{I_j\}_{j\in \bar{J}}$ along with the verdict $\accept$.\medskip

We will also require the definitions of the distribution induced by a partition $\mI$ and a subset $\mathcal{J}\subseteq \mI$, and of such a couple $(\mI,\mathcal{J})$ being \emph{good} for a function:
\begin{definition}[{\cite[Definition 4.6]{CGM:11}}]
  For any partition $\mI=\{ I_1,\dots,I_\ell \}$ of $[n]$, and subset of parts $\mathcal{J}\subseteq \mI$, we define a pair of distributions:
  \begin{description}
    \item[The distribution $\D_{\mI}$ on $\bool^n$.] An element $y\sim \D_{\mI}$ is sampled by
        \begin{enumerate}
          \item picking $z\in\bool^\ell$ uniformly at random among all $\binom{\ell}{\ell/2}$ strings of weight $\frac{\ell}{2}$;
          \item setting $y_i = z_j$ for all $j\in[\ell]$ and $i\in I_j$.
        \end{enumerate}
      \item[The distribution $\D_{\mathcal{J}}$ on $\bool^{\abs{\mathcal{J}}}$.] An element $x\sim \D_{\mathcal{J}}$ is sampled by
        \begin{enumerate}
          \item picking $y\sim \D_{\mI}$;
          \item outputting $\mathsf{extract}_{(\mI,\mathcal{J})}(y)$, where $x=\mathsf{extract}_{(\mI,\mathcal{J})}(y)$ is defined as follows. For all $j\in[\ell]$ such that $I_j\in\mathcal{J}$:
              \begin{itemize}
                \item if $I_j\neq \emptyset$, set $x_j = y_i$ (where $i\in I_j$);
                \item if $I_j = \emptyset$, set $x_j$ to be a uniformly random bit.
              \end{itemize}
        \end{enumerate}
  \end{description}
\end{definition}

\begin{lemma}[{\cite[Lemma 4.7]{CGM:11}}]\label{lemma:properties:distributions:cgm}
  $\D_{\mI}$ and $\D_{\mathcal{J}}$ as above satisfy the following.
  \begin{itemize}
    \item For all $a\in\bool^n$, $\probaDistrOf{\mI, y\sim \D_{\mI}}{y=a} = \frac{1}{2^n}$.
    \item Assume $\ell > 4\abs{\mathcal{J}}^2$. For every $\mI$ and $\mathcal{J}\subseteq\mI$, the total variation distance between $\D_{\mathcal{J}}$ and the
uniform distribution on $\bool^{\abs{\mathcal{J}}}$ is bounded by $2\abs{\mathcal{J}}^2/\ell$. Moreover, the $\lp[\infty]$ distance between the
two distributions is at most $4\abs{\mathcal{J}}^2/(\ell 2^{\abs{\mathcal{J}}})$.
  \end{itemize}
\end{lemma}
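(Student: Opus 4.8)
The plan is to treat the two bullets separately; in each case the decisive move is to condition on the balanced string $z\in\bool^\ell$ before anything else, which exposes the independence needed to make the rest routine.

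\emph{First bullet.} Recall that drawing $\mI$ amounts to drawing a map $\sigma\colon[n]\to[\ell]$ with the $\sigma(i)$ i.i.d.\ uniform, and that $y\sim\D_{\mI}$ is then given by $y_i = z_{\sigma(i)}$ for an independent uniform weight-$\ell/2$ string $z$. I would condition on $z$ first: once $z$ is fixed, $y_1,\dots,y_n$ are mutually independent (being a coordinatewise function of the independent $\sigma(i)$'s), and each $y_i=z_{\sigma(i)}$ is uniform on $\bool$ since $\Pr_{\sigma(i)}[z_{\sigma(i)}=1]=\frac{\ell/2}{\ell}=\frac12$ for every balanced $z$. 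Hence $\Pr[\,y=a \mid z\,]=2^{-n}$ for every balanced $z$ and every $a$, and averaging over $z$ gives the claim; this part is essentially immediate.

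\emph{Second bullet.} Fix $\mI$ and $\mathcal{J}\subseteq\mI$, and set $k=|\mathcal{J}|$. First I would reduce to the ``no empty parts'' case: the coordinates of $x\sim\D_{\mathcal{J}}$ lying on the empty parts of $\mathcal{J}$ are fresh independent uniform bits, so if $m$ of the parts of $\mathcal{J}$ are empty then $x$ is the product of a uniform distribution on $\bool^m$ with the restriction $z|_T$ of a uniform weight-$\ell/2$ string $z$ to a set $T\subseteq[\ell]$ of size $k-m$. A one-line computation shows this multiplies the $\ell_\infty$ distance to uniform by $2^{-m}$ and leaves the total variation distance unchanged, so it suffices to bound $\bigl|\Pr[z|_T=b]-2^{-|T|}\bigr|$ for arbitrary $b$ and $|T|\le k$. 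Writing this probability as $\binom{\ell-|T|}{\ell/2-w}/\binom{\ell}{\ell/2}$, where $w$ is the number of $1$'s in $b$, and expanding the binomials gives
\[
2^{|T|}\cdot\frac{\binom{\ell-|T|}{\ell/2-w}}{\binom{\ell}{\ell/2}}
= \frac{\prod_{i=0}^{w-1}(\ell-2i)\cdot\prod_{i=0}^{|T|-w-1}(\ell-2i)}{\prod_{i=0}^{|T|-1}(\ell-i)},
\]
a ratio of $|T|$ numerator factors, each in $[\ell-2|T|,\ell]$, to $|T|$ denominator factors, each in $[\ell-|T|,\ell]$; hence it lies in $\bigl[(1-2|T|/\ell)^{|T|},\,(1-|T|/\ell)^{-|T|}\bigr]$. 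Since $\ell>4k^2$ forces $|T|/\ell<1/4$ and $2|T|^2/\ell<1/2$, the elementary inequalities $(1-t)^n\ge1-nt$ and $(1-t)^{-n}\le e^{2nt}\le1+4nt$ (valid in these ranges) place the interval inside $[1-2|T|^2/\ell,\,1+4|T|^2/\ell]$. Thus $\bigl|\Pr[z|_T=b]-2^{-|T|}\bigr|\le 4|T|^2/(\ell2^{|T|})$; folding back in the $2^{-m}$ factor from the empty-part reduction yields the stated $\ell_\infty$ bound $4k^2/(\ell2^k)$, and summing over the $2^k$ points gives the total-variation bound $2k^2/\ell$.

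The only genuine computation is in the second bullet, and the main (minor) obstacle I anticipate is bookkeeping: getting the two-sided estimate on the ratio of products tight enough that the constants emerge exactly as stated ($2|\mathcal{J}|^2/\ell$ and $4|\mathcal{J}|^2/(\ell2^{|\mathcal{J}|})$), and correctly carrying the $2^{-m}$ factor through the empty-part reduction so that the final bound is phrased in terms of $k=|\mathcal{J}|$ rather than $|T|$. The structural steps — conditioning on $z$, and the reduction to $z|_T$ — are short and carry no real risk.
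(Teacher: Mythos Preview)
The paper does not give its own proof of this lemma: it is simply quoted verbatim from~\cite[Lemma 4.7]{CGM:11} and used as a black box in the appendix. So there is nothing in the paper to compare your argument against.

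That said, your proof is correct and self-contained. The first bullet is exactly right: conditioning on the balanced string $z$ and using the independence and uniformity of the $\sigma(i)$'s immediately gives uniformity of $y$. For the second bullet, your reduction to the distribution of $z|_T$ (peeling off the fresh uniform bits from empty parts) is clean, the hypergeometric expression $\binom{\ell-|T|}{\ell/2-w}\big/\binom{\ell}{\ell/2}$ is correct, and your product expansion and two-sided estimate go through. One cosmetic remark: when you write ``summing over the $2^k$ points gives the total-variation bound $2k^2/\ell$,'' the sum itself is $4k^2/\ell$ and you are implicitly using that total variation is half the $\ell_1$ distance; it would not hurt to say so explicitly. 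Also, your variable name $t$ is briefly overloaded (both as $|T|$ and as the dummy in $(1-t)^n$), but the meaning is clear from context.
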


\begin{definition}[{\cite[Definition 4.8]{CGM:11}}]\label{def:sampler:cgm}
  Given $(\mI,\mathcal{J})$ as above and oracle access to $f\colon\bool^n\to\bool$, we define a probabilistic algorithm $\mathsf{sampler}_{(\mI,\mathcal{J})}(f)$ that on each execution produces a pair $\dotprod{x}{a} \in \bool^{\abs{\mathcal{J}}}\times\bool$ as follows: first it picks a random $y\sim\D_{\mI}$, then it queries $f$ on $y$, computes $x=\mathsf{extract}_{(\mI,\mathcal{J})}(y)$ and outputs the pair
$\dotprod{x}{f(y)}$.
\end{definition}

\begin{definition}[{\cite[Definition 4.9]{CGM:11}}]\label{def:noisysampler:good}
  Given $\alpha>0$, a function $f\colon\bool^n\to\bool$, a partition $\mI=\{ I_1,\dots,I_\ell \}$ of $[n]$ and
  a subset $\mathcal{J}\subseteq \mI$ of $k$ parts, we call the pair $(\mI,\mathcal{J})$ \emph{$\alpha$-good} (with respect to $f$) if there exists a $k$-junta $h\in\junta[k]$ such that the following conditions are satisfied:
  \begin{enumerate}
    \item Conditions on $h$:
          \begin{enumerate}
            \item Every relevant variable of $h$ is also a relevant variable of $\closejunta[k]{f}$;
            \item $\dist{h}{\closejunta{f}} \leq \alpha$.
          \end{enumerate}
    \item Conditions on $\mI$:
          \begin{enumerate}
            \item For all $j\in[\ell]$, $I_j$ contains at most one variable of $\core{\closejunta[k]{f}}$;
            \item $\probaDistrOf{y\sim\D_{\mI}}{f(y)\neq \closejunta[k]{f}(y)} \leq 10\cdot\dist{f}{\closejunta[k]{f}}$.
          \end{enumerate}
    \item Condition on $\mathcal{J}$: the set $S\eqdef \bigcup_{I\in \mathcal{J}} I$ contains all relevant variables of $h$.
  \end{enumerate}
\end{definition}

\begin{lemma}[{\cite[Lemma 4.10]{CGM:11}}]\label{lemma:noisysampler:good}
  Let $\alpha,f,\mI,\mathcal{J}$ be as in the preceding definition. If the pair $(\mI,\mathcal{J})$ is $\alpha$-good (with respect
to $f$), then $\mathsf{sampler}_{(\mI,\mathcal{J})}(f)$ (as per~\autoref{def:sampler:cgm}) is an $(\eta,\mu)$-noisy sampler for some permutation of $\core{\closejunta[k]{f}}$,
with $\eta \leq 2\alpha+\frac{4k^2}{\ell}+10\cdot\dist{f}{\closejunta[k]{f}}$ and $\mu\leq\frac{4k^2}{\ell}$.
\end{lemma}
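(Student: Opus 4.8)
The plan is to verify the three defining conditions of an $(\eta,\mu)$-noisy sampler (in the sense of~\cite{CGM:11}) for a suitable permutation $\pi\in\mathcal{S}_k$ of $\core{\closejunta[k]{f}}$. Write $g\eqdef\closejunta[k]{f}$ (which, as we may assume, depends on exactly $k$ variables $i_1,\dots,i_k$, the coordinates of $\core{g}$), and let $h\in\junta[k]$ be the $k$-junta witnessing that $(\mI,\mathcal{J})$ is $\alpha$-good (\autoref{def:noisysampler:good}). On each call $\mathsf{sampler}_{(\mI,\mathcal{J})}(f)$ draws $y\sim\D_\mI$, queries $f$ once, and outputs $(x,a)$ with $x=\mathsf{extract}_{(\mI,\mathcal{J})}(y)\in\bool^{\abs{\mathcal{J}}}=\bool^k$ and $a=f(y)$; in particular $x\sim\D_{\mathcal{J}}$. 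Condition~(iii) (mutual independence across executions) is immediate, since distinct calls use independent draws of $y$. Condition~(i) (near-uniformity of $x$ with $\mu\leq\frac{4k^2}{\ell}$) is precisely the $\lp[\infty]$ bound of~\autoref{lemma:properties:distributions:cgm} applied to $\D_{\mathcal{J}}$, using $\abs{\mathcal{J}}=k$ and $\ell>4k^2$.

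The heart of the argument is condition~(ii): producing $\pi$ with $\probaOf{a\neq\core{g}(\pi(x))}\leq\eta$. First I would construct $\pi$. By condition~1(a) the relevant variables of $h$ lie among $i_1,\dots,i_k$; by condition~2(a) no part of $\mI$ contains two of the $i_m$; and by condition~3 each relevant variable of $h$ lies in $S=\bigcup_{I\in\mathcal{J}}I$. Hence distinct relevant variables of $h$ occupy distinct (nonempty) parts of $\mathcal{J}$, so $h(y)$ is a deterministic function $\widetilde h\colon\bool^{\abs{\mathcal{J}}}\to\bool$ of the coordinates of $x$ (the coordinate of $x$ indexed by a part holding the relevant variable $i_m$ records the common value $y_{i_m}$). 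I then take $\pi\in\mathcal{S}_k$ to be any permutation that sends, for every $m$ with $i_m$ relevant for $h$, the $m$-th coordinate of $\core{g}$ to the coordinate of $x$ indexed by the part of $\mathcal{J}$ containing $i_m$, extended arbitrarily on the remaining coordinates. With this choice, $\widetilde h$ and $\core{g}\circ\pi$ are obtained from $h$ and $g$ (both viewed as functions of $i_1,\dots,i_k$, which is legitimate by~1(a)) by the \emph{same} relabelling of coordinates; since relabelling by a permutation preserves Hamming distance and $g,h$ depend only on those $k$ coordinates, $\dist{\widetilde h}{\core{g}\circ\pi}=\dist{h}{g}\leq\alpha$.

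Now the noise bound follows from a union bound over three events:
\[
\probaOf{a\neq\core{g}(\pi(x))}\;\leq\;\probaOf{a\neq g(y)}\;+\;\probaOf{g(y)\neq h(y)}\;+\;\probaOf{\widetilde h(x)\neq\core{g}(\pi(x))},
\]
using $a=f(y)$ and $h(y)=\widetilde h(x)$. The first term is at most $10\cdot\dist{f}{g}$ by condition~2(b). For the second term, $g$ and $h$ are functions of $(y_{i_1},\dots,y_{i_k})$ only, and since the $i_m$ lie in distinct parts this tuple is distributed as $k$ distinct coordinates of a uniformly random weight-$\tfrac{\ell}{2}$ string, which is within total variation $\tfrac{2k^2}{\ell}$ of uniform on $\bool^k$ (the same computation underlying~\autoref{lemma:properties:distributions:cgm}); combined with $\dist{h}{g}\leq\alpha$ this bounds the term by $\alpha+\tfrac{2k^2}{\ell}$. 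For the third term, $x\sim\D_{\mathcal{J}}$ is within total variation $\tfrac{2k^2}{\ell}$ of uniform, so $\probaOf{\widetilde h(x)\neq\core{g}(\pi(x))}\leq\dist{\widetilde h}{\core{g}\circ\pi}+\tfrac{2k^2}{\ell}\leq\alpha+\tfrac{2k^2}{\ell}$. Summing gives $\eta\leq2\alpha+\tfrac{4k^2}{\ell}+10\cdot\dist{f}{g}$, as desired.

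I expect the construction of $\pi$ and the bookkeeping around it to be the main obstacle: one has to keep track of which coordinates of $x$ are genuine readings of variables relevant for $h$, which read variables relevant for $g$ but not for $h$, and which come from empty parts of $\mathcal{J}$ (where $\mathsf{extract}$ inserts a fresh random bit), and then check that the identity $\dist{\widetilde h}{\core{g}\circ\pi}=\dist{h}{g}$ genuinely survives all of these relabellings. The remaining estimates are routine invocations of~\autoref{lemma:properties:distributions:cgm} and union bounds, and the overall argument mirrors that of~\cite[Lemma 4.10]{CGM:11} almost verbatim.
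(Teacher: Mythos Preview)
Your proof is correct and follows essentially the same approach as the original argument in~\cite[Lemma 4.10]{CGM:11}, which the present paper cites without reproving. The construction of $\pi$, the decomposition into three error events, and the two invocations of the total-variation bound from~\autoref{lemma:properties:distributions:cgm} (once for the $k$ parts of $\mathcal{J}$, once for the $k$ parts containing $i_1,\dots,i_k$) are exactly the intended steps, and your bookkeeping around the identity $\dist{\widetilde h}{\core{g}\circ\pi}=\dist{h}{g}$ is sound.
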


The last piece we shall need is the ability to convert an $(\eta,\mu)$-noisy sampler to a $(\eta',0)$-noisy sampler -- that is, one whose samples are exactly uniformly distributed.
\begin{lemma}[{\cite[Lemma 4.4]{CGM:11}}]\label{lemma:noisysampler:uniform:conversion}
  Let $\tilde{g}$ be an $(\eta,\mu)$-noisy sampler for $g\colon\bool^k\to\bool$, that on each execution picks $x$ according to some fixed (and fully known) distribution $\D$. Then $\tilde{g}$ can be turned into an $(\eta+\mu)$-noisy sampler $\tilde{g}_{\rm unif}$ for $g$.
\end{lemma}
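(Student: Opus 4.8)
The plan is to turn $\tilde g$ into $\tilde g_{\rm unif}$ by a rejection-sampling (thinning) argument that flattens the query distribution $\D$ into the uniform distribution on $\bool^k$, paying only a small additive increase in the label-error parameter. Since $\D$ is fully known by hypothesis, we may compute $p_{\min}\eqdef\min_{y\in\bool^k}\D(y)$, and the first defining condition of an $(\eta,\mu)$-noisy sampler gives $p_{\min}\geq\frac{1-\mu}{2^k}$. Define $\tilde g_{\rm unif}$ as follows: repeatedly invoke $\tilde g$ to obtain a pair $(x,a)$; with probability $p_{\min}/\D(x)\in[\frac{1-\mu}{1+\mu},1]$ halt and output $(x,a)$, and otherwise discard $(x,a)$ and repeat using fresh randomness and a fresh run of $\tilde g$. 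Each iteration halts with probability $\sum_{y}\D(y)\cdot\frac{p_{\min}}{\D(y)}=2^kp_{\min}\geq 1-\mu>0$, so the procedure terminates almost surely, making $\leq\frac1{1-\mu}=O(1)$ calls to $\tilde g$ in expectation per execution.

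Next I would check the three defining conditions. For the output marginal: conditioned on halting in a given iteration, $\probaOf{x'=y}=\frac{\D(y)\cdot(p_{\min}/\D(y))}{2^kp_{\min}}=\frac1{2^k}$, so $x'$ is \emph{exactly} uniform (i.e. the first condition holds with parameter $0$). Mutual independence across executions of $\tilde g_{\rm unif}$ follows because distinct executions use disjoint fresh blocks of internal randomness and disjoint runs of $\tilde g$, which are mutually independent by the third condition for $\tilde g$. The substantive point is the label bound. The key observation is that the acceptance probability $p_{\min}/\D(x)$ depends only on $x$, not on $a$, so conditioned on the output being $x'=y$, the output label $a'$ has exactly the same law as $a$ conditioned on $\{x=y\}$ under $\tilde g$. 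Hence
\[
\probaOf{a'\neq g(x')}=\sum_{y\in\bool^k}\frac1{2^k}\,\probaDistrOf{\tilde g}{a\neq g(y)\mid x=y}
=\sum_{y}\frac1{2^k}\cdot\frac{\probaDistrOf{\tilde g}{x=y,\,a\neq g(y)}}{\D(y)}
\leq\frac{1}{2^kp_{\min}}\,\probaDistrOf{\tilde g}{a\neq g(x)}\leq\frac{\eta}{1-\mu},
\]
where the last two steps use $\D(y)\geq p_{\min}\geq\frac{1-\mu}{2^k}$ and the second condition for $\tilde g$.

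It then remains to verify $\frac{\eta}{1-\mu}\leq\eta+\mu$, which is the one spot where a small computation is needed: if $\eta+\mu\geq1$ the claimed bound is vacuous, and otherwise $\eta<1-\mu$ yields $\frac{\eta}{1-\mu}-\eta=\frac{\eta\mu}{1-\mu}\leq\mu$ since $\frac{\eta}{1-\mu}\leq1$. Putting the three conditions together shows $\tilde g_{\rm unif}$ is an $(\eta+\mu)$-noisy sampler (indeed an $(\eta+\mu,0)$-noisy sampler) for $g$. I do not expect a real obstacle here — the argument is elementary; the only thing to be careful about is exactly the step just described, namely that the reweighting which repairs the $x$-marginal does not inflate the label error beyond $\eta+\mu$, together with the edge case where $\eta+\mu\geq1$. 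A secondary point to state explicitly is that the construction is well-defined only because $\D$ is assumed fully known, which is what lets us compute $p_{\min}$ and the acceptance ratios without any additional queries to the underlying function.
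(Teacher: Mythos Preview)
Your proof is correct. The paper does not actually prove this lemma: it is quoted verbatim from~\cite{CGM:11} (as Lemma~4.4 there) and used as a black box, so there is no in-paper argument to compare against. The rejection-sampling construction you give is the natural approach and the analysis is sound, including the one nontrivial step --- bounding the reweighted label error by $\frac{\eta}{1-\mu}$ and then checking $\frac{\eta}{1-\mu}\le \eta+\mu$ via the case split on whether $\eta+\mu\ge 1$.
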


With this in hand, we are ready to prove the main lemma:
\begin{lemma}[{Analogue of~\cite[Proposition 4.16]{CGM:11}}]\label{lemma:analogue:cgm}
The tester from~\autoref{theo:tol:testing:juntas:tradeoff}, modified as above, as the following guarantees. It has query complexity $\bigO{\frac{k\log({k}/{\eps})}{\eps\rho(1-\rho)^k}}$ and outputs, in case of acceptance, a partition $\mI$ of $[n]$ in $\ell\eqdef \bigO{{k^2}/{\eps}}$ parts along with a subset $\mathcal{J}\subseteq \mI$ of $k$ parts such that for any $f$ the following conditions hold:
\begin{itemize}
  \item if $\dist{f}{\junta[k]}\leq \frac{\rho}{\new{16}}\eps$, the algorithm accepts with probability at least $9/10$;
  \item if $\dist{f}{\junta[k]}> \eps$, the algorithm rejects with probability at least $9/10$;
  \item for any $f$, with probability at least $4/5$ either the algorithm rejects, or it outputs $\mathcal{J}$ such that the pair $(\mI,\mathcal{J})$ is \new{$\frac{1}{2}(1+\frac{3}{8}\rho)\eps$}-good (as per~\autoref{def:noisysampler:good}).
\end{itemize}
In particular, if $\dist{f}{\junta[k]}\leq \frac{\rho}{\new{16}}\eps$, then with probability at least $4/5$ the algorithm outputs a set $\mathcal{J}$ such
that $(\mI,\mathcal{J})$ is \new{$\frac{1}{2}(1+\frac{3}{8}\rho)\eps$}-good.
\end{lemma}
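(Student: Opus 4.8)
The plan is to follow the proof of~\cite[Proposition 4.16]{CGM:11} closely, substituting our modified tolerant tester (and \autoref{lem:BlackboxBiasedAprox}) for the tester of~\cite{blais2009testing} and its influence‑estimation guarantee. As noted above, with the partition size $\ell\eqdef Ck^2/\eps$ both \autoref{lemma:random:partition:soundness} and \autoref{lemma:any:partition:completeness} still hold (the former now with probability at least $19/20$), and the query complexity of \autoref{algo:approx.NoisyInf} becomes $\bigO{\frac{k\log(k/\eps)}{\eps\rho(1-\rho)^k}}$ since $\log\ell=O(\log(k/\eps))$. The first two bullets then follow exactly as in the proof of \autoref{theo:tol:testing:juntas:tradeoff} via \autoref{prop:tol:testing:juntas:reduction:n:m}: conditioning on the event that the random partition is good (probability $\geq 19/20$) and on the success event of \autoref{algo:approx.NoisyInf} (probability $1-o(1)$), a union bound gives success probability at least $9/10$. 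So the work lies in the third bullet, the goodness guarantee.

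For the third bullet I would condition on the intersection of three events; write $f_k\eqdef\closejunta[k]{f}$. First, $E_{\mathrm{coll}}$: no two relevant variables of $f_k$ fall in the same part of $\mI$; since $f_k$ has at most $k$ relevant variables and $\ell=Ck^2/\eps$, a birthday bound gives $\probaOf{E_{\mathrm{coll}}}\geq 1-\binom{k}{2}/\ell\geq 1-\eps/(2C)$. Second, $E_{\mathrm{Mark}}$: $\probaDistrOf{y\sim\D_\mI}{f(y)\neq f_k(y)}\leq 10\dist{f}{f_k}$; by the first item of \autoref{lemma:properties:distributions:cgm} the expectation over $\mI$ of the left‑hand side equals $\dist{f}{f_k}$, so Markov gives $\probaOf{E_{\mathrm{Mark}}}\geq 9/10$. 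Third, $E_{\mathrm{est}}$: the event of \autoref{lem:BlackboxBiasedAprox} that all estimates $\nu^\rho_J$ produced by \autoref{algo:approx.NoisyInf} are simultaneously accurate, which has probability $1-o(1)$. For $\eps$ below a suitable absolute constant and $n$ (hence $\ell$) large, these combine to probability at least $4/5$; note $E_{\mathrm{coll}}$ and $E_{\mathrm{Mark}}$ are exactly conditions~2(a) and~2(b) of \autoref{def:noisysampler:good}.

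It remains to check that, on these events, whenever \autoref{algo:junta:tradeoff} accepts — i.e. it finds $J\subseteq[\ell]$ with $\abs{J}=\ell-k$ and $\nu^\rho_J\leq\frac{9\rho\eps}{32}$, and outputs $\mathcal{J}=\{I_j:j\in\bar J\}$ — the pair $(\mI,\mathcal{J})$ satisfies conditions~1 and~3. On $E_{\mathrm{est}}$, item~1 of \autoref{lem:BlackboxBiasedAprox} (with $\gamma=1/8$) forces $\shortexpect_{S\sim_\rho J}[\infl{\phi_\mI(S)}]\leq\frac{\rho\eps}{3}$: otherwise $\nu^\rho_J\geq\frac{7}{8}\cdot\frac{\rho\eps}{3}=\frac{7\rho\eps}{24}>\frac{9\rho\eps}{32}$, contradicting acceptance. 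Then \autoref{lemma:infl:rhosubset} gives $\infl[f]{\phi_\mI(J)}\leq\eps$. Let $B$ be the set of relevant variables of $f_k$ contained in $\phi_\mI(J)$ — equivalently, the relevant variables of $f_k$ not contained in $S\eqdef\bigcup_{I\in\mathcal{J}}I=\phi_\mI(\bar J)$. By monotonicity of set‑influence and the Lipschitz bound $\abs{\infl[f]{B}-\infl[f_k]{B}}\leq 4\dist{f}{f_k}$ (immediate from \autoref{def:set:influence}), $\infl[f_k]{B}\leq\infl[f_k]{\phi_\mI(J)}\leq\infl[f]{\phi_\mI(J)}+4\dist{f}{f_k}\leq\eps+4\dist{f}{f_k}$. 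Define $h$ to be $f_k$ with the coordinates in $B$ hard‑wired so as to minimize the resulting disagreement with $f_k$ (coordinatewise plurality over $B$). Then $h\in\junta[k]$; its relevant variables are a subset of those of $f_k$ and lie in $S$, yielding conditions~1(a) and~3; and $\dist{h}{f_k}\leq\frac12\infl[f_k]{B}\leq\frac{\eps}{2}+2\dist{f}{f_k}$. Thus whenever $\dist{f}{\junta[k]}\leq\frac{\rho\eps}{16}$ this is at most $\frac{\eps}{2}+\frac{\rho\eps}{8}\leq\frac12(1+\tfrac38\rho)\eps=\alpha$, so $(\mI,\mathcal{J})$ is $\alpha$-good (condition~1(b)); and whenever $\dist{f}{\junta[k]}>\eps$ the algorithm rejects with probability $\geq 9/10$, so the disjunction "rejects or outputs a good $\mathcal{J}$" holds.

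The "in particular" clause then follows by revisiting the completeness argument: if $\dist{f}{\junta[k]}\leq\frac{\rho\eps}{16}$, then by \autoref{lemma:any:partition:completeness} there is a set $J$ of size $\ell-k$ with $\infl[f]{\phi_\mI(J)}\leq\frac{\rho\eps}{4}$, hence $\shortexpect_{S\sim_\rho J}[\infl{\phi_\mI(S)}]\leq\frac{\rho\eps}{4}$ by \autoref{lemma:infl:rhosubset}, and on $E_{\mathrm{est}}$ item~2 of \autoref{lem:BlackboxBiasedAprox} gives $\nu^\rho_J\leq\frac98\cdot\frac{\rho\eps}{4}=\frac{9\rho\eps}{32}$; so \autoref{algo:junta:tradeoff} accepts, and by the previous paragraph it outputs an $\alpha$-good $\mathcal{J}$. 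Thus on $E_{\mathrm{coll}}\cap E_{\mathrm{Mark}}\cap E_{\mathrm{est}}$ (probability $\geq 4/5$) the algorithm both accepts and outputs an $\alpha$-good set. The main obstacle is the bookkeeping: confirming the failure probabilities sum to the claimed $4/5$ and $9/10$, checking that the acceptance threshold $\frac{9\rho\eps}{32}$ is compatible with $\gamma=1/8$ in \autoref{lem:BlackboxBiasedAprox} and with the target tolerance $\frac{\rho\eps}{16}$, and carefully verifying that the hard‑wired junta $h$ meets all of conditions~1(a), 1(b), and~3 of \autoref{def:noisysampler:good} with the stated $\alpha$ — none conceptually hard, but, as in~\cite{CGM:11}, requiring care.
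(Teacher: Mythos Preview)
Your proposal is correct and follows essentially the same approach as the paper: condition on the same three events (no collisions among the relevant variables of $\closejunta[k]{f}$, the Markov bound on $\D_\mI$, and the success of the influence estimates), deduce $\infl[f]{\phi_\mI(J)}\leq\eps$ from acceptance, and build a junta $h$ witnessing goodness. The only variation is in constructing $h$: the paper bounds $\infl[f]{\bar U}$ for $U=R\cap S$ (via subadditivity $\infl[f]{\bar U}\leq\infl[f]{\bar V}+\infl[f]{V\setminus U}$) and takes $h$ to be the closest junta on $U$ to $f$, whereas you hard-wire the ``missed'' coordinates $B$ of $\closejunta[k]{f}$ and pass through $\infl[f_k]{B}$ using the Lipschitz bound on influence; both routes give conditions~1(a), 1(b), 3, and your numerics in fact land slightly below the paper's $\alpha=\tfrac12(1+\tfrac38\rho)\eps$.
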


\BPFOF{\autoref{lemma:analogue:cgm}}
The first two items follow from the analysis of the tester (\autoref{theo:tol:testing:juntas:tradeoff}) and the foregoing discussion; we thus turn to establishing the third item. 

Called with parameters $k,\rho,\eps$, our algorithm, with probability at least $19/20$, either rejects or outputs a partition $\mI$ of $[n]$ into $\ell=\bigO{k^2}$ parts and set $\mathcal{J}\subseteq\mI$ satisfying \new{$\infl[f]{\overline{\phi(J)}} \leq \eps$}. Let $R\subseteq [n]$ (with $\abs{R}\leq k$) denote the set of relevant variables of $\closejunta[k]{f}$, and $V\supseteq R$ (with $\abs{V}=k$) the set of relevant variables of $\core{\closejunta[k]{f}}$. Assume that $\dist{f}{\junta[k]} \leq \frac{\rho\eps}{\new{16}}$.\footnote{For other $f$'s, the third item follows from the second item.} We then have:
  \begin{itemize}
    \item by the above, with probability at least $19/20$ the algorithm outputs a set $\mathcal{J}\subseteq\mI$      which satisfies
    \[
      \infl[f]{\overline{\phi(J)}} \leq \new{\eps};
    \]
    \item since $\ell \gg k^2$, with probability at least $19/20$ all elements of $V$ fall in  different parts of the partition $\mI$;
    \item by~\autoref{lemma:properties:distributions:cgm} and by Markov's inequality, with probability at least $9/10$ the partition $\mI$ satisfies $\probaDistrOf{y\sim\mathcal{D}_{\mI}}{ f(y) \neq \closejunta[k]{f}(y) } \leq 10\cdot \dist{f}{\closejunta[k]{f}}$.
  \end{itemize}
  So by a union bound, with probability at least $4/5$ all three of these events occur. Now we show that conditioned on them, the pair $(\mI,\mathcal{J})$ is $(1+\frac{3}{2}\rho)\eps$-good.
Let $U\eqdef R\cap \left( \bigcup_{I\in\mathcal{J}} I\right)$ (informally, U is the subset of the relevant variables of $\closejunta[k]{f}$ that were successfully ``discovered'' by the tester). Since $\dist{f}{\junta[k]}\leq \frac{\rho\eps}{\new{16}}$, we have $\infl[f]{\bar{V}} \leq \new{4}\dist{f}{\junta[k]} \leq \frac{\rho\eps}{\new{4}}$. By the subadditivity and monotonicity of influence we get
\[
    \infl[f]{\bar{U}} \leq \infl[f]{\bar{V}} + \infl[f]{V\setminus U} \leq \infl[f]{\bar{V}} + \infl[f]{\overline{\phi(J))}} \leq \frac{\rho\eps}{\new{4}}+\new{\eps}.
\]
where the second inequality follows from $V\setminus U\subseteq \overline{\phi(J)}$. This means (see e.g.~\cite[Lemma 2.21]{Blais:PhD:12}) that there is a $k$-junta $h$ on $U$ such that $\dist{f}{h}\leq \new{\frac{1}{2}(\frac{\rho\eps}{4}+\eps)}$, and by the triangle inequality $\dist{\closejunta[k]{f}}{h} \leq \new{\frac{1}{2}(\frac{\rho\eps}{4}+\eps)+\frac{\rho\eps}{16}} = \new{\frac{1}{2}(1+\frac{3}{8}\rho)\eps}$. Based on this $h$, we can verify that the pair $(\mI,\mathcal{J})$ is  $\new{\frac{1}{2}(1+\frac{3}{8}\rho)\eps}$-good by going over the conditions in~\autoref{def:noisysampler:good}.
\EPFOF
  
\paragraph{Concluding the proof of~\autoref{prop:noisy:sampler}.}
We conclude as in Section~4.6 of~\cite{CGM:11}, and start by describing how $\Algo_P$ and $\Algo_S$ operate. The preprocessor $\Algo_P$ starts by calling the tester $\Tester$ of~\autoref{lemma:analogue:cgm}. Then, in case $\Tester$ accepted, $\Algo_P$ encodes in the state $\sigma$ the partition $\mI$ and the subset $\mathcal{J}\subseteq \mI$ output by $\Tester$ (see~\autoref{lemma:analogue:cgm}), along with the values
of $k$ and $\eps$. The sampler $\Algo_S$, given $\sigma$, obtains a pair $\dotprod{x}{a} \in \bool^k\times\bool$ by executing $\mathsf{sampler}_{(\mI,\mathcal{J})}(f)$ (from~\autoref{def:sampler:cgm}) once.
Now we show how~\autoref{prop:noisy:sampler} follows from~\autoref{lemma:analogue:cgm}. The first two items are immediate. As for the third item, notice that we only have to analyze the case where $\dist{f}{\closejunta[k]{f}}\leq  \frac{\rho\eps}{\new{16}}$ and $\Tester$ accepted; all other cases are taken care of by the first two items. By the third item in~\autoref{lemma:analogue:cgm}, with probability at least $4/5$ the pair $(\mI,\mathcal{J})$ is \new{$\frac{1}{2}(1+\frac{3}{8}\rho)\eps$}-good. If so, by~\autoref{lemma:noisysampler:good}, $\mathsf{sampler}_{(\mI,\mathcal{J})}(f)$ is
an $(\eta,\mu)$-noisy sampler for some permutation of $\core{\closejunta[k]{f}}$, where 
\[
  \eta
  \leq 2\cdot\new{\frac{1}{2}(1+\frac{3}{8}\rho)\eps}+\frac{4k^2}{\ell}+10\cdot\dist{f}{\junta[k]}
  \leq (1+\frac{3}{8}\rho)\eps + \frac{10\rho\eps}{16}+\frac{4k^2}{\ell}
  = \new{(1+\rho)\eps+\frac{4k^2}{\ell}}
\] and $\mu \leq \frac{4k^2}{\ell}$. This in turn implies by~\autoref{lemma:noisysampler:uniform:conversion} an $\eta'$-noisy sampler, for
\[
    \eta' = \eta+\mu \leq \new{(1+\rho)}\eps+\frac{8k^2}{\ell} \leq \new{2+\rho)}\eps \leq \new{3}\eps
\]
as claimed. (Where we used that $\frac{8k^2}{\ell} \leq \eps$ by our choice of $\ell$.)
\EPFOF
 
	\makeatletter{}\section{Proof of~\autoref{claim:legal:collection:bound}} \label{app:baranyai}

We provide in this appendix the proof of~\autoref{claim:legal:collection:bound}, restated below:

\begin{definition}
Let $X$ be a set of $j$ elements, and for any $s \in [j]$ consider the family $\subsetcoll{X}{s}$ of
all subsets of $X$ that have size $s$. We shall say that $\mathcal{C} \subseteq \subsetcoll{X}{s}$ is a \emph{cover} of $X$,
if $\bigcup_{Y\in \mathcal{C}} Y = X$.  We shall say that $\mathcal{C}_1,\dots,\mathcal{C}_m$ is a \emph{legal collection of covers for $X$}, if each $\mathcal{C}_t$ is a cover of $X$, and these covers are disjoint.
\end{definition}

\begin{claim}
For any set $X$ of $j$ elements, there exists a legal collection of covers for $X$ of size at least 
\[
m \geq \flr{\frac{{\binom{j}{s}} }{ \clg{\frac{j}{s}} }}.
\]
(Moreover, this bound is tight.)
\end{claim}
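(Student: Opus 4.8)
The plan is to derive the claim from Baranyai's theorem on factorizations of complete uniform hypergraphs. Recall that Baranyai's theorem asserts: if $s \mid j$, then the complete $s$-uniform hypergraph on $j$ vertices (i.e.\ all $\binom{j}{s}$ $s$-subsets of $X$) can be partitioned into $\binom{j-1}{s-1}$ perfect matchings, each of which is a collection of $j/s$ pairwise disjoint $s$-sets that partition $X$ (and hence, in particular, covers $X$). More generally, Baranyai's theorem handles the case $s \nmid j$ by allowing near-perfect matchings; the cleanest way to state what we need is the ``rounding'' form: $\binom{X}{s}$ can be partitioned into $N \eqdef \clg{\binom{j}{s}/\clg{j/s}}$ classes, each of which is a collection of at most $\clg{j/s}$ pairwise disjoint $s$-sets, and where the classes can be chosen so that a prescribed number of them have size exactly $\clg{j/s}$ and the rest size $\flr{j/s}$. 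I will cite this (with a pointer to Baranyai~\cite{Baranyai:75} and a standard exposition) rather than reprove it.

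First I would reduce to the case $s \mid j$, where the statement is cleanest: Baranyai gives $\binom{j-1}{s-1}$ matchings, each a cover (indeed a partition) of $X$, and these are pairwise disjoint since they partition $\binom{X}{s}$. This yields a legal collection of covers of size $\binom{j-1}{s-1} = \frac{s}{j}\binom{j}{s} = \binom{j}{s}/(j/s) = \flr{\binom{j}{s}/\clg{j/s}}$, matching the claimed bound. For the general case, I would invoke the near-perfect-matching form of Baranyai: among the $\clg{\binom{j}{s}/\clg{j/s}}$ classes produced, at least $\flr{\binom{j}{s}/\clg{j/s}}$ of them can be taken to have size exactly $\clg{j/s}$. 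A class consisting of $\clg{j/s}$ pairwise disjoint $s$-subsets of a $j$-element set necessarily covers all of $X$ (since $\clg{j/s}\cdot s \ge j$ and disjointness forces every element to be used). Discarding the remaining classes, these $\flr{\binom{j}{s}/\clg{j/s}}$ full-size classes form a legal collection of covers of the required size. The disjointness of the covers is inherited from the fact that the classes partition $\binom{X}{s}$.

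For tightness, I would argue the matching upper bound directly: any cover $\mathcal{C}\subseteq\binom{X}{s}$ of $X$ must contain at least $\clg{j/s}$ sets, since $s$ sets of size $s$ (even without disjointness) can cover at most... more carefully, each set covers $s$ elements, so $|\mathcal{C}|\cdot s \ge j$, i.e.\ $|\mathcal{C}|\ge \clg{j/s}$. A legal collection of $m$ covers therefore uses at least $m\clg{j/s}$ members of $\binom{X}{s}$, and since they are disjoint subsets of $\binom{X}{s}$ we get $m\clg{j/s}\le\binom{j}{s}$, hence $m\le\flr{\binom{j}{s}/\clg{j/s}}$.

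The main obstacle is purely expository rather than mathematical: stating Baranyai's theorem in exactly the rounded, ``prescribed class-size'' form needed so that the count $\flr{\binom{j}{s}/\clg{j/s}}$ of full-size classes falls out cleanly, and being careful that a class of $\clg{j/s}$ pairwise disjoint $s$-sets really does cover $X$ (this uses disjointness crucially — without it, $\clg{j/s}$ sets of size $s$ need not cover). I would handle this by proving the covering observation as a one-line lemma and then quoting Baranyai's theorem in the factorization-with-specified-sizes form, deferring to~\cite{Baranyai:75} for the deep combinatorial content.
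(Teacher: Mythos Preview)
Your overall approach matches the paper's: invoke Baranyai's theorem to partition $\binom{X}{s}$ into classes, $m=\flr{\binom{j}{s}/\clg{j/s}}$ of which have size exactly $\clg{j/s}$, and argue each such class covers $X$. The tightness argument is also identical.

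However, there is a genuine gap in your handling of the case $s \nmid j$. You assert that the classes produced by Baranyai consist of \emph{pairwise disjoint} $s$-sets, and then deduce covering from ``$\clg{j/s}\cdot s \ge j$ and disjointness forces every element to be used.'' But when $s \nmid j$, it is impossible to have $\clg{j/s}$ pairwise disjoint $s$-subsets of a $j$-element set: disjointness would require $\clg{j/s}\cdot s > j$ distinct elements. So no ``near-perfect-matching'' form of Baranyai can give you classes of size $\clg{j/s}$ with disjoint members, and your covering argument, which hinges on disjointness, collapses.

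The form of Baranyai that the paper invokes is the \emph{almost-regular} (prescribed class-size) version: given integers $a_1,\dots,a_\ell$ summing to $\binom{j}{s}$, the edges of $K_j^s$ can be partitioned into hypergraphs $\mathcal{H}_1,\dots,\mathcal{H}_\ell$ with $|\mathcal{H}_i|=a_i$ and each $\mathcal{H}_i$ almost regular, i.e.\ every vertex has degree $\flr{a_i s/j}$ or $\clg{a_i s/j}$ in $\mathcal{H}_i$. Taking $a_i=\clg{j/s}$ for $i\le m$ (and dumping the remainder into $a_{m+1}$), one computes $1 \le a_i s/j < 2$, so every vertex has degree $1$ or $2$ in each of the first $m$ classes---in particular degree at least $1$, which is exactly the covering property. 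The classes are \emph{not} matchings; covering comes from almost-regularity, not disjointness. Your proposal needs to replace the disjointness claim with this degree argument.
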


\BPF
This claim follows from a result due to Baranyai~\cite{Baranyai:75} on factorization of regular hypergraphs. We state this result, and describe how to derive the claim from it, below (recall that $K_n^h$ denotes the $h$-regular hypergraph $K_n^h$ on $n$ vertices):\footnote{An exposition of this result and the original proof as given by Baranyai can also be found in~\cite[Theorem 4.1.1]{Brandt:2015:thesis}.}
 
  \begin{theorem}[Baranyai's Theorem {\cite[Theorem 1]{Baranyai:75}}]\label{theo:combinatorics:baranyai}
    Let $n,h$ be integers satisfying $1\leq h\leq n$, and $a_1,\dots,a_\ell$ integers such that $\sum_{i=1}^\ell a_i = \binom{n}{h}$. Then the edges of $K_n^h$ can be partitioned in hypergraphs $\mathcal{H}_1\dots,\mathcal{H}_\ell$ such that 
    \begin{enumerate}[(i)]
      \item $\abs{\mathcal{H}_i} = a_i$ for all $i\in[\ell]$;
      \item each $\mathcal{H}_i$ is \emph{almost regular}: the number of hyperedges any two vertices $u,v\in \mathcal{H}_i$ participe in differ by at most one (and here, specifically, is either $\clg{\frac{a_i h}{n}}$ or $\flr{\frac{a_i h}{n}}$).
    \end{enumerate}
  \end{theorem}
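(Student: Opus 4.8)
The plan is to give the classical flow-based proof of Baranyai's theorem. Fix $n,h$ with $1\le h\le n$ and nonnegative integers $a_1,\dots,a_\ell$ with $\sum_i a_i=\binom nh$, and set $d_i:=a_ih/n$ (the target degree of every vertex in $\mathcal H_i$, noting $\sum_i d_i=\binom{n-1}{h-1}\in\Z$). I would build the decomposition by revealing the ground set $[n]$ one element at a time, maintaining after step $m$ (for $0\le m\le n$) a coloring of the ``$m$-projected'' $h$-subsets: an \emph{object} is a set $B\subseteq[m]$ with $|B|\le h\le|B|+(n-m)$, which stands for the $\binom{n-m}{h-|B|}$ sets $A\in\binom{[n]}{h}$ with $A\cap[m]=B$, and the coloring is an integer matrix $\mu^{(m)}_{i,B}\ge0$ recording how many of these $A$'s receive color $i$. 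The invariants to maintain are: (a) column sums $\sum_i\mu^{(m)}_{i,B}=\binom{n-m}{h-|B|}$ for every object $B$; (b) row sums $\sum_B\mu^{(m)}_{i,B}=a_i$ for every color $i$; and (c) for every color $i$ and every already-revealed vertex $v\le m$, the degree $\sum_{B\ni v}\mu^{(m)}_{i,B}$ lies in $\{\lfloor d_i\rfloor,\lceil d_i\rceil\}$ — together with a stronger version of (b)–(c) controlling partial sums, discussed below. At $m=0$ there is a single object $\emptyset$ and we set $\mu^{(0)}_{i,\emptyset}=a_i$; at $m=n$ the objects are the singletons $\{A\}$, $A\in\binom{[n]}{h}$, with $0/1$ multiplicities, so $\mu^{(n)}$ is a genuine partition of $\binom{[n]}{h}$ into classes $\mathcal H_i$ of size $a_i$, each almost $d_i$-regular — which is exactly the theorem (with $\lceil a_ih/n\rceil$ or $\lfloor a_ih/n\rfloor$).

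The step $m\mapsto m+1$ is the crux. Revealing vertex $m+1$ splits each object $B$ with $|B|<h$ into the two objects $B$ and $B\cup\{m+1\}$, and the identity $\binom{n-m}{h-|B|}=\binom{n-m-1}{h-|B|}+\binom{n-m-1}{h-|B|-1}$ prescribes how the column totals must split. So I must choose integers $x_{i,B}:=\mu^{(m+1)}_{i,B\cup\{m+1\}}$ with $0\le x_{i,B}\le\mu^{(m)}_{i,B}$, column sums $\sum_i x_{i,B}=\binom{n-m-1}{h-|B|-1}$, and row sums $\sum_B x_{i,B}=b_i$, where $b_i\in\{\lfloor d_i\rfloor,\lceil d_i\rceil\}$ is a preselected integer degree for $m+1$ (preselection is possible since $\sum_i d_i=\binom{n-1}{h-1}$ is an integer). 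Invariants (a)–(c) for the old vertices are then automatic, since $B$ and $B\cup\{m+1\}$ meet $[m]$ in the same set. This is a transportation problem whose constraint matrix — each variable $x_{i,B}$ occurs in exactly one ``column'' equation and one ``row'' equation, plus box bounds — is the incidence matrix of a bipartite graph, hence totally unimodular; with integer right-hand sides its feasibility polytope, if nonempty, has an integral vertex. Thus the whole theorem reduces to showing that this polytope is nonempty.

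The main obstacle, and the real content of Baranyai's argument, is precisely this nonemptiness. The ``obvious'' fractional point — splitting each $\mu^{(m)}_{i,B}$ in the proportion $\binom{n-m-1}{h-|B|-1}:\binom{n-m-1}{h-|B|}$, i.e. $x_{i,B}=\mu^{(m)}_{i,B}\cdot\tfrac{h-|B|}{n-m}$ — meets the box and column constraints, but its row sums come out to $\tfrac1{n-m}\bigl(ha_i-\sum_{v\le m}\delta_i(v)\bigr)$, which invariant (c) alone pins down only to within $O(m)$ of $d_i$, not within $1$; likewise, running induction independently on the $h$-subsets of $[m]$ and on the $(h-1)$-subsets of $[m]$ leaks a degree error of $2$ rather than $1$. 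The fix is to strengthen the maintained invariant so that $\mu^{(m)}$ is a \emph{consistent rounding} of the ideal fractional matrix $\bar\mu^{(m)}_{i,B}=a_i\binom{n-m}{h-|B|}/\binom nh$: not only are all row sums, column sums, and single-vertex degrees rounded, but all partial sums over initial segments in a fixed ordering of the colors and of the objects are rounded as well. One then checks (i) that this strong-rounding property makes the step-$(m{+}1)$ transportation polytope nonempty, because the stage-$(m{+}1)$ ideal matrix constrained by the committed totals $\mu^{(m)}_{i,B}$ still lies between the floors and ceilings of all relevant partial sums; and (ii) that the integral vertex produced by total unimodularity can itself be taken to be a strong rounding at stage $m+1$, via the standard lemma that a real matrix can be rounded so as to preserve the rounding of every initial-segment partial sum (itself an integral-flow statement). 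Establishing that rounding lemma and verifying that the strengthened invariant survives each split is where essentially all the work lies; the base case $m=0$, the binomial bookkeeping, and the extraction of the $\mathcal H_i$ at $m=n$ are routine.
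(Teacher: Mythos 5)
First, note that the paper does not prove this statement at all: it is quoted verbatim from Baranyai's 1975 paper and used as a black box (the appendix only derives \autoref{claim:legal:collection:bound} from it, pointing to the cited thesis for an exposition of the original proof). So there is no in-paper argument to compare with, and what you are attempting is a from-scratch proof of Baranyai's theorem itself. Your skeleton is the right one — reveal the ground set one vertex at a time, maintain multiplicities $\mu^{(m)}_{i,B}$ with the column totals $\binom{n-m}{h-|B|}$, and realize each step as an integral point of a transportation-type polytope via total unimodularity — and your observation that the naive fractional split $x_{i,B}=\mu_{i,B}\frac{h-|B|}{n-m}$ fails to certify feasibility of the step when $d_i=a_ih/n$ is not an integer is a genuine, well-spotted subtlety.

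However, the proposal stops exactly where the content of the theorem begins. Your claims (i) and (ii) — that the ``consistent rounding'' invariant makes each step's polytope nonempty, and that the integral vertex can again be chosen to be a consistent rounding — are asserted rather than argued, and you acknowledge this is ``where essentially all the work lies.'' Moreover, the strengthened invariant as you state it (rounding of all partial sums over \emph{initial segments} in one fixed ordering of colors and of objects) does not obviously control the quantities the theorem is actually about: the degree of a vertex $v$ in class $i$ is $\sum_{B\ni v}\mu_{i,B}$, a sum over the star of $v$, and the stars of different vertices cannot all be made initial segments of a single fixed ordering of the objects. So even granting (i)–(ii), it is unclear how almost-regularity at $m=n$ follows; invariant (c) would still have to be shown to survive each split, which is the same difficulty in different clothing. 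A smaller but real concern: preselecting the new vertex's degree $b_i$ in every class before solving the step adds equality constraints that can make the step infeasible even when the relaxed problem (lower bound $\lfloor d_i\rfloor$ and upper bound $\lceil d_i\rceil$ on the class arcs) is feasible; integrality already forces the degree into $\{\lfloor d_i\rfloor,\lceil d_i\rceil\}$, so these should be left as flow bounds rather than fixed in advance. As written, this is a plausible plan with the crucial feasibility-and-invariant-preservation lemma missing — the heart of Baranyai's induction, in which the generalized statement carried through the induction is chosen precisely so that each step's flow problem can be verified feasible — and not yet a proof.
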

  We apply~\autoref{theo:combinatorics:baranyai} as follows: setting $m\eqdef \flr{\frac{\binom{j}{s}}{\clg{\frac{j}{s}}}} \leq \binom{j-1}{s-1}$ and $\ell\eqdef m+1$, we let $a_i \eqdef \clg{\frac{j}{s}}$ for all $1\leq i\leq m$, and $a_\ell \eqdef \binom{j}{s}- \sum_{i=1}^m a_i\geq 0$. By the theorem, we obtain a partition of $K^j_s$ into $\ell=m+1$ hypergraphs $\mathcal{H}_1\dots,\mathcal{H}_\ell$ such that the first $m$ satisfy:
    \begin{enumerate}[(i)]
      \item $\abs{\mathcal{H}_i} = \clg{\frac{j}{s}}$ for all $i\in[m]$;
      \item for any $i\in[m]$, any vertex $u\in \mathcal{H}_i$ participes in either $1$ or $2$        hyperedges;
    \end{enumerate}
  (and we cannot say much about the ``remainder`` hypergraph $\mathcal{H}_\ell$). Condition (ii) ensures that each of the first $m$ hypergraphs obtained indeed defines a cover of the set of $j$ elements by $s$-element subsets, while by definition of the partition of the hypergraph we are promised that these $m$ covers are disjoint. This proves the lemma, as $\mathcal{H}_1\dots,\mathcal{H}_m$ then induce a legal cover of $X$.
  
  As for the optimality of the bound, it follows readily from observing that one must have 
$
  m \leq \flr{ \frac{\binom{j}{s}}{\clg{ j/s }} }
$
since for every cover $\mathcal{C}$ we must have $\abs{\mathcal{C}} \geq \clg{ j/s }$, and $\abs{\subsetcoll{X}{s}} = \binom{j}{s}$.
\EPF

\end{document}